\newtheorem{prop}[theorem]{Proposition}
\newtheorem{cor}[theorem]{Corollary}
\newlist{romani}{enumerate}{1}
\setlist[romani]{label={\roman*)}}
\newcommand{\cutout}[1]{}
\newcommand{\Iff}{\Longleftrightarrow}
\newcommand{\ra}{\rightarrow}
\newcommand{\E}{\exists}
\newcommand{\ex}{\exists}
\newcommand{\all}{\forall}
\renewcommand{\phi}{\varphi}
\renewcommand{\theta}{\vartheta}
\renewcommand{\emptyset}{\varnothing}
\renewcommand{\epsilon}{\varepsilon}
\newcommand{\free}{{\rm free}}
\newcommand{\FO}{{\rm FO}}
\newcommand{\WeakDet}{\mathrm{WeakDet}}
\newcommand{\StrongDet}{\mathrm{StrongDet}}
\newcommand{\NoSig}{\mathrm{NoSig}}
\newcommand{\SingVal}{\mathrm{SingVal}}
\newcommand{\LInd}{\lambda\text{\rm -Indep}}
\newcommand{\OutInd}{\text{\rm Out-Indep}}
\newcommand{\ParInd}{\text{\rm Par-Indep}}
\newcommand{\Loc}{\mathrm{Loc}}
\newcommand{\NonContext}{\mathrm{NonContext}}
\newcommand*{\tup}[1]{\bar{#1}}
\newcommand{\tx}{\tup x}
\newcommand{\ty}{\tup y}
\newcommand{\tz}{\tup z}
\def\reals{\mathbb{R}}
\def\nats{\mathbb{N}}
\def\XX{\mathbb{X}}
\def\YY{\mathbb{Y}}
\def\ZZ{\mathbb{Z}}
\def\Pr{\mathbb{P}}
\newcommand{\dom}{\mathrm{dom}}
\newcommand{\bland}{\bigwedge}
\newcommand{\blor}{\bigvee}
\DeclareMathOperator{\Var}{Var} %Variables
\newcommand{\VarH}{\Var_n^h} %Variables for hidden-variable teams
\newcommand{\VarE}{\Var_n^e} %Variables for empirical teams
\DeclareMathOperator{\dep}{dep} %dependence
\newcommand{\pr}[1]{\left(#1\right)} %Provides better parentheses in math environments
\newcommand{\calH}{\mathcal{H}} 
\newcommand{\Ff}{{\cal F}}
\newcommand{\Pp}{{\cal P}}
\newcommand{\frA}{\mathfrak{A}}
\newcommand{\ph}{\varphi}
\newcommand{\sub}{\subseteq}
\newcommand{\set}[1]{\{ #1 \}}
\DeclarePairedDelimiter\abs{\lvert}{\rvert}
\newcommand{\relEnt}{\models_{\text{rel}}}
\newcommand{\probEnt}{\models_{\text{prob}}}
\newcommand{\allEnt}{\models_{\text{all}}}
\def\impl{\implies}
\newcommand{\nimpl}{\centernot\impl}
\providecommand{\doi}[1]{%
  \begingroup
    \let\bibinfo\@secondoftwo
    \urlstyle{rm}%
    \href{http://dx.doi.org/#1}{%
      doi:\discretionary{}{}{}%
      \nolinkurl{#1}%
    }%
  \endgroup
}
\crefname{appendix}{}{}
\title{Unifying Hidden-Variable Problems from Quantum Mechanics by Logics of Dependence and Independence}
\titlerunning{Hidden-Variable Problems and Logics of Dependence and Independence}
\author{Rafael Albert}{RWTH Aachen University, Germany}{rafael.albert@rwth-aachen.de}{}{}
\author{Erich Grädel}{RWTH Aachen University, Germany}{graedel@logic.rwth-aachen.de}{}{}
\authorrunning{R. Albert and E. Grädel}
\keywords{Hidden-variables, logics of dependence and independence, relational
  versus probabilistic team semantics, Kochen-Specker Theorem, Bell's Theorem}
\begin{document}

\maketitle

\begin{abstract}
We study hidden-variable models from quantum mechanics and their abstractions in purely probabilistic and 
relational frameworks by means of logics of dependence and independence, which
are based on team semantics.
We show that common desirable properties of hidden-variable models can be defined in an elegant
and concise way in dependence and independence logic. The relationship between different properties
and their simultaneous realisability can thus be formulated and proven on a purely logical level,
as problems of entailment and satisfiability of logical formulae.
Connections between probabilistic and relational
entailment in dependence and independence logic allow us to
simplify proofs. In many cases, we can establish results on both probabilistic and relational
hidden-variable models by a single proof, because one case implies the other,
depending on purely syntactic criteria.
We also discuss the  `no-go' theorems by Bell and Kochen-Specker and
provide a purely logical variant of the latter, introducing 
\emph{non-contextual choice} as a team-semantical property.  
\end{abstract}

\section{Introduction}

Hidden-variable models have been proposed since the 1920s as an alternative to the
standard interpretation, sometimes called the Copenhagen interpretation, of quantum
mechanics with the goal to explain and remove counterintuitive aspects of
quantum mechanics. In particular, quantum systems behave, according to the
standard interpretation, probabilistically rather than deterministically,
non-local interactions between agents or particles that are widely separated in
space are possible through entanglement, and there is an unavoidable dependence
between an observer of a quantum mechanical system and the observed properties.
Due to such features, Einstein, Podolsky, and Rosen \cite{EinsteinPodRos35}
considered a quantum mechanical state an `incomplete'
description of physical reality.
The basic idea of hidden-variable models is to `complete' quantum mechanics by adding unobservable,
`hidden', variables to the description of a system, to obtain models that are
consistent with the predictions of quantum mechanics, but which do not exhibit
counterintuitive behavior.

\medskip
The question to what extent hidden-variable models can indeed
explain quantum mechanical effects in a satisfactory way, has been studied for
a long time by many researchers. John von Neumann \cite{Neumann32}, who 
has coined the term `hidden-variable models', claimed to have 
established the impossibility of such an endeavour:
\begin{quote}
It should be noted that we need not go any further into the mechanism of the ‘hidden parameters,’ since we now know that the established results of quantum mechanics can never be re-derived with their help.
\end{quote}
However, this claim was not generally accepted. Some of the most outspoken criticism came from
Grete Hermann, Davin Mermin, and John Bell \cite{Bell66}, who
even went as far as calling von Neumann's proof  ``not merely false, but foolish''
(see \cite{Bub10}). Further, David Bohm and Lois de Broglie developed 
non-standard,  deterministic interpretations of quantum mechanics using hidden-variables
referred to as Bohmian mechanics or de Broglie-Bohm theory
\cite{Bohm52a,Bohm52b,deBroglie27}.

\medskip
Many different variants of determinism, locality, and independence properties of hidden-variable models have been studied. 
The attempts to build hidden-variable models that jointly realise these 
can be seen as attempts  to avoid the counter-intuitive consequences of the standard 
formalism of quantum
mechanics. However, the famous `no-go'  theorems of quantum mechanics, such as the ones by 
Bell \cite{Bell64} and Kochen-Specker \cite{KochenSpe67}, show that one cannot really escape
such consequences, and that there are severe limitations for the hidden-variable programme in general.
The dominant attitude today is to accept the quantum mechanical formalism at face value and
to make use of its counter-intuitive features such as entanglement and contextuality in
modern applications like quantum computing and quantum cryptology. 

A survey of the most important properties of determinism, locality, and independence in hidden-variable models 
has been given in an influential paper by  Brandenburger and Yanofsky  \cite{BrandenburgerYan08}, in 
an abstract and purely probabilistic framework that does not make explicit reference to quantum mechanics. 
It  makes precise the relationship between the different properties and discusses the question
which combinations of them can be realised simultaneously in a probabilistic hidden-variable model.
Despite the general limitations of the hidden-variable programme,
there are interesting combinations of properties that are jointly realisable, 
and the work  of Brandenburger and Yanovsky \cite{BrandenburgerYan08}
gives a detailed account of what is possible in a probabilistic setting. 
A further fundamental step for the understanding of hidden-variable phenomena has been Abramsky's
proposal of a purely relational (rather than probabilistic) framework \cite{Abramsky13}
for hidden-variable models, with discrete analogues of the probabilistic dependence
and independence properties studied in \cite{BrandenburgerYan08}. He showed that the main 
structure of the theory is preserved under this simplification. Abramsky's work 
opens the possibility to study hidden-variable questions on a much more general level
leaving aside quantum mechanical details. But on the other side, Abramsky also discusses the 
issue of quantum realisations of
empirical models (which has also been studied on a more general level in \cite{AbramskyBra11}).
An important point is that among the  emipirical models that cannot be extended to hidden-variable models
with specific locality and independence properties there are models that are not just abstract
mathematical constructions but do indeed arise from quantum mechanical experiments, such as Bell test. 

\medskip We propose here a study of hidden-variable properties by means of modern logics
of dependence and independence.
These logics are based on team semantics, introduced by Hodges \cite{Hodges97}.
While a classical logical formula (from first-order logic, for instance) is evaluated
for a single assignment, mapping its free variables  to values in some
mathematical structure, team semantics evaluates a formula for a \emph{set} of such assignments,
called a team. Further, while previous formalisms for studying dependence and independence
(such as Henkin quantifiers or independence-friendly logic) had modeled dependencies by
special quantifiers, the modern dependence and independence  
logics treat them, following a proposal by Väänänen \cite{Vaananen07},  
as \emph{atomic properties of teams}. While these logics are syntactically 
very simple, extending the atomic team properties by the standard first-order operators
$\lor$, $\land$, $\exists$ and $\forall$, 
they are semantically rather powerful. Indeed, team semantics admits 
the manipulation of second-order objects by first-order syntax and, 
in fact, independence logic \cite{GraedelVaa13} has the full expressive power
of existential second-order and can thus define all team 
properties in the complexity class NP \cite{Galliani12}.

\medskip
There are several reasons why logics of dependence and independence are 
natural tools for reasoning about hidden-variable models.
First of all, the desirable properties of hidden-variable models, in particular the ones studied
in \cite{Abramsky13} and \cite{BrandenburgerYan08}, are obviously properties of dependence
and independence. We shall see that in the logics we are using, their definition is extremely simple and transparent,
mostly just conjunctions of dependence or independence atoms. Second, 
the models studied in hidden-variable theories, both empirical models and 
hidden-variable models, and both in the relational and the probabilistic setting, can readily be understood
and presented as teams. This means that formulae in dependence and independence logic can be
directly evaluated on the models we are interested in.
Moreover, it turns out that operators on the team side are naturally compatible with
the structure of hidden-variable models -- for instance, the connection between
probabilistic and relational models corresponds directly to the relationship between
relational and probabilistic team semantics.

\medskip
Although our study does not establish new results on quantum mechanical hidden-variable models as such, we
think that our spelling out of the connections between logics with team semantics and hidden-variable models
in detail is useful and provides the following contributions:
\begin{itemize}
\item We show that the properties of empirical and hidden-variable models can be defined in a very concise
and elegant way in logics of dependence and independence. This also helps to make implicit assumptions in the definition of such properties explicit.
\item Our team-semantical framework enables a full unification of the probabilistic and relational theory. In fact, 
for all the properties that we study, the same formula can be used for both settings, evaluated over relational
teams in the first case, and over probabilistic teams in the second case. This
provides a very elegant perspective on the connection between relational and
probabilistic hidden-variables, complementing earlier work on the connection
between the two such as \cite{AbramskyBra11}, which used a more algebraic approach
based on distributions over semirings.
\item Connections between different properties of hidden-variable models can be
  formulated in terms of \emph{logical entailment} between the team-semantical
  formulae defining these properties and can thus been proved on a purely
  logical level. This may also be beneficial for formalising such proofs in an
  appropriate proof system or theorem prover.
\item Similarly, the existence of an empirically equivalent  hidden-variable model 
that satisfies some combination of desirable properties corresponds to the
\emph{satisfiability} of a suitable
formula by an extension of the team which represents a given empirical model. 
\item On a purely logical level, we establish connections between probabilistic entailment and relational
entailment of formulae from dependence and independence logic, which we believe to be of 
independent interest. Applying these connections to the entailment and satisfiability problems
related to properties of hidden-variable models, we often need only one proof to establish
both the probabilistic and the relational case because, depending on purely syntactic criteria,
one case implies the other. This provides more general reasons why the relational variant of
hidden-variable models is so closely related to their probabilistic counterpart, and further motivates
Abramsky's translation as a special case of a more general framework.
\item We further show that the famous `no-go' theorems by Bell and Kochen-Specker can be formulated
in a natural way in the team semantical framework. In particular we provide logical 
variants of the Kochen-Specker Theorem, highlighting the aspect of contextuality in a 
different way compared to the classical formulation, and discuss the related notion of
\emph{non-contextual choice}.  
\end{itemize}

\medskip By different methods, not directly related to 
team semantics, but based on sheaf theory,
a full unification of the relational and probabilistic theory has already been obtained in
\cite{AbramskyBra11}. This work further gives a unified treatment of non-locality and contextuality
in a very general setting. It remains to be seen whether meaningful connections between
this approach and logics with team semantics can be established.

\medskip\noindent{\bf Acknowledgments. }
Researchers studying logics of dependence and independence have been aware 
for some time of the possibility to use these logics for reasoning about hidden-variable models
in quantum mechanics, and this idea was informally discussed in this research community 
at several occasions. However, when we started our work, no systematic study in this direction
had been conducted yet. Only at a point where this paper was almost finished, it was brought to our 
attention that Joni Puljujärvi and Jouko Väänänen at the University of Helsinki 
have simultaneously pursued a similar line of research, and we had interesting discussions with them
on this topic. Meanwhile they have written a 
joint paper with Samson Abramsky on this work \cite{AbramskyPulVaa21}.
We also wish to thank an anonymous referee for valuable comments and additional references that helped
us to improve this paper. 

\section{Hidden-Variable Models and Teams}\label{sec:models}

We introduce the setup required to understand both relational
and probabilistic hidden-variable models. We also define teams and explain how
hidden-variable models can be cast as teams. Finally, we show that the structure
of hidden-variable models is compatible with operators on the team side and
thereby demonstrate that teams are very well suited to express hidden-variable
phenomena.

\subsection{Relational Models and Teams}

A purely relational setting to speak about hidden-variable models was introduced
by Abramsky \cite{Abramsky13}.

\begin{definition}\label{def:relEmpModels}
Let $M_1,\dots,M_n$ and $O_1,\dots,O_n$ be finite sets.
  We set $M = \prod_{i=1}^n M_i$ and $O = \prod_{i=1}^n
  O_i$. An arbitrary relation $e \sub M \times O$ is called an empirical model
  over $(M,O)$. In this setting, $n$ is called the arity of the system, $M$ the measurement
  set, and $O$ set of outcomes. 
\end{definition}

We usually interpret such a system as having $n$ components which may but need
not be separated in space. We interpret $M_i$ as the set of measurements that
can be performed in component $i$ and $O_i$ as potential outcomes in that
component. The measurement-outcome pairs $(\bar{m},\bar{o}) \in e$ are
interpreted to be \emph{possible} in the model. To make sure that the underlying sets $M$ and $O$ can be 
inferred from the empirical model, we tacitly assume that every value $m_i\in M_i$ and $o_j\in O_j$
appears in at least one tuple of the relation $e$; in database terms this means that 
$\bigcup_{i\leq n} M_i\cup \bigcup_{i\leq n} O_i$ coincides with the active domain of $e$.

\begin{example}\label{example:empirical}
  Let us consider a system with 2 components, called Alice and  Bob. 
  Let $M_1 = \set{a_1,a_2}$ and $M_2 = \set{b_1}$. Thus, Alice has a choice between two 
  measurements while Bob can perform
  only one. In this  example, all measurements reveal a single bit of information,
  i.e.~we let $O_1 = O_2 = \set{+,-}$. An example of an empirical model over
  $(M,O)$ is %given  in \cref{fig:exampleEmpirical}.
  
  \begin{center}
    \begin{tabular}{c | c c c c}
      $e$ & $(+,+)$ & $(+,-)$ & $(-,+)$ & $(-,-)$ \\ \hline
      $(a_1,b_1)$ & 1 & 0 & 0 & 1 \\
      $(a_2,b_1)$ & 0 & 1 & 1 & 0 
    \end{tabular}
\end{center}
 %   \caption{Exemplary empirical model}\label{fig:exampleEmpirical}
  %\end{table} 

 In this example, the outcome
  of Alice's measurement is always identical to Bob's if Alice chooses $a_1$ and
  always opposite to Bob's if she chooses $a_2$.
\end{example}

\begin{definition}\label{def:relHidModels}
Let $M$ and $O$ be as in \cref{def:relEmpModels}, and let $\Lambda$ be a finite
  set. A \emph{hidden-variable model} over $(M,O,\Lambda)$ is a relation $h \sub M \times O
  \times \Lambda$. The elements $\lambda \in \Lambda$ are called hidden-variables.
\end{definition}

The interpretation of hidden-variable models is similar to that of empirical
models. The hidden-variables are assumed to be unobservable parameters of the
system that may influence outcomes. A triplet $(\bar{m},\bar{o},\lambda) \in h$
means that the measurement-outcome pair $(\bar{m},\bar{o})$ is possible in the
system when the hidden-variable takes the value $\lambda$. 
Again, we tacitly assume that every $\lambda\in\Lambda$ appears in
at least one tuple of $h$. The original
motivation behind the introduction of hidden-variables was that seemingly
unintuitive phenomena on the empirical side in quantum mechanics might be
explained by ignorance about the hidden-variables and not by an intrinsically
non-classical system. However, as we shall see later,
certain non-classical phenomena remain even if hidden-variables are introduced.

\medskip

Hidden-variable models induce empirical models by projecting back onto the
empirically observable parameters $M \times O$, giving rise to a notion of
empirical equivalence, i.e.~a notion of which systems we can distinguish by
experiment and which not.

\begin{definition}\label{def:inducedEmpirical}
  Let $h$ be a hidden-variable model over $(M,O,\Lambda)$. We call
\[ e := \set{(\bar{m},\bar{o}) \in M \times O:  (\exists \lambda \in \Lambda)
 (\bar{m},\bar{o},\lambda) \in h}\]
its \emph{induced empirical model}, and say that $h$ and $e$ \emph{empirically equivalent}.
\end{definition}
%The interpretation of this is quite natural: We project back on $M \times O$ as
%those are the parameters of the system that are assumed to be empirically
%observable.
%Empirical equivalence is then the right equivalence relation to
%formalize which systems we can distinguish from each other by experiment and
%which not.

Now that we have defined relational models, we present them as teams.

\begin{definition}
A team is a set $X$ of
assignments $s:D\ra A$ with a common finite domain $D = \dom(X)$ of variables
and values in a set $A$. For a tuple of variables
$\bar x=(x_1,\dots, x_m) \in X$, we write $X(\bar x) := \set{(s(x_1),\dots,
  s(x_m))\colon s \in X} \sub A^m$ for the set of values of $\bar x$ in $X$. 
Thinking of an arbitrary but fixed
enumeration of the finite domain of a team $X$ as $\dom(X) = \{x_1,\dots,x_k\}$,
we often identify $X$ with its \emph{relational encoding} $X(\bar x) = \{ s(\bar
x) \colon s \in X \} \sub A^k$ and assignments $s \in X$ with corresponding tuples.
\end{definition}

With that in mind, the interpretation of hidden-variable models as teams is very natural.
To make sure that the underlying sets of
measurements $M= \prod_{i=1}^n M_i$, outcomes $O = \prod_{i=1}^n O_i$, and hidden-variables
$\Lambda$ can be inferred from the team we again
assume that every $m_i \in M_i, o_j \in O_j$ and $\lambda \in \Lambda$
appears in at least one assignment of the team. Then we have a one-to-one
correspondence between hidden-variable models and teams
(and similarly for empirical models).

\begin{definition}\label{def:modelsTeams}
A team $X$ over variables
$\VarH := \set{m_1,\dots,m_n, \allowbreak o_1,\dots,{o_n},
\lambda}$
induces a hidden-variable model, denoted $h_X$, 
with $M_i := X(m_i)$,  $O_i := X(o_i)$, and  $\Lambda := X(\lambda)$ such
that 
\[   (\bar{a},\bar{b},c) \in h_X\quad :\Iff\quad (\exists s \in X)
    s(\bar{m}) = \bar{a}, s(\bar{o}) = \bar{b}, \text{ and }s(\lambda) = c. \]
We denote by $e_X$ the empirical model that is induced by $h_X$. 
\end{definition}

Analogously, empirical models $e \sub M \times O$ are represented by
teams over $\VarE := \set{m_1,\dots,m_n,o_1,\dots,o_n}$.

\subsection{Probabilistic Models and Probabilistic Teams}

Systems arising from quantum mechanics are probabilistic in nature. Thus, the
literature primarily investigates probabilistic hidden-variable models. A
comprehensive discussion of such models can be found in
\cite{BrandenburgerYan08}. However, the relevant definitions 
of probabilistic models and their properties are presented here
in a somewhat different way, on the basis of our 
team semantical framework.

\begin{definition}
  Let $(M,O)$ be as in \cref{def:relEmpModels}. A probability distribution
  $e_{\Pr}\colon M \times O \to [0,1]$
  is called a probabilistic empirical model over $(M,O)$.
Analogously, a probabilistic hidden-variable models is a probability
distributions $h_\Pr$ over $M \times O \times \Lambda$. 
We make use of standard notation
for marginalization and conditionals when speaking about probabilistic models.
For example, for a  probabilistic
hidden-variable model $h_{\Pr}\colon M \times O \times \Lambda \to [0,1]$,
we call the marginalization
$e_{\Pr}\colon M \times O \to [0,1]$ with $e_{\Pr}(\bar{m},\bar{o}) :=
h_{\Pr}(\bar{m},\bar{o}) = \sum_{\lambda \in
  \Lambda}h_{\Pr}(\bar{m},\bar{o},\lambda)$ its induced empirical model. In this
case, $h_{\Pr}$ and $e_{\Pr}$ are called empirically equivalent.
\end{definition}

  The intuition behind empirical equivalence is that empirically equivalent
  models cannot be distinguished by experiment as they agree on the observable
  parameters of the system. For this purpose, the conditional distributions
  $h_{\Pr}(\bar{o} \mid \bar{m})$ are actually slightly more relevant than the
  joint probabilities $h_{\Pr}(\bar{o},\bar{m})$ as they define the outcome
  distributions for fixed measurements; $h_{\Pr}(\bar{m})$ has no meaningful
  interpretation without going into discussions about the experimenters' free
  will. Thus, the literature regards a slightly different notion of empirical
  equivalence. \cite{BrandenburgerYan08} defines an empirical model $e_{\Pr}$ and
  a hidden-variable model $h_{\Pr}$ to be empirically equivalent if they agree
  on all suitable conditional probabilities, i.e.~if $h_{\Pr}(\bar{o} \mid
  \bar{m}) = e_{\Pr}(\bar{o} \mid \bar{m})$ always holds.
  We think that the definition we have chosen, i.e.~requiring the
  slightly stronger $h_{\Pr}(\bar{o},\bar{m}) = e_{\Pr}(\bar{o},\bar{m})$, is
  more natural and elegant for our purposes since empirical equivalence reduces
  to marginal equivalence in the standard probability theoretic sense. 
  For all relevant results it makes no difference which
  choice is made. The probability distribution
  $h_{\Pr}(\bar{m},\bar{o},\lambda)$ can be decomposed into $h_{\Pr}(\bar{m}),
  h_{\Pr}(\lambda \mid \bar{m})$ and $h_{\Pr}(\bar{o} \mid \bar{m},\lambda)$.
  For all properties of interest only the latter two components matter. Thus, we
  could adjust $h_{\Pr}(\bar{m})$ to enforce agreement with the empirical model
  without harming properties of $h_{\Pr}$. In particular, all existence and
  non-existence results which we will show later in this work are independent of
  this technical detail.

  \medskip The connection between probabilistic and relational models is one of
  possibilistic collapse, i.e.~we consider the set of tuples with probability
  greater than zero. This corresponds to our understanding of relational models
  where we interpret $(\bar{m},\bar{o}) \in e$ as statement ``it is
  \emph{possible} in $e$ to obtain the measurement-outcome pair
  $(\bar{m},\bar{o})$''.
 
  \begin{definition}
  A probabilistic empirical model $e_{\Pr}$ over $(M,O)$
  induces the relational model $e := \set{(\bar{m},\bar{o}) \in M \times O :
    e_{\Pr}(\bar{m},\bar{o}) > 0}$. Analogously, probabilistic hidden-variable
  models induce relational hidden variable models.
\end{definition}

Again, we cast probabilistic models as teams, using the notion
of probabilistic teams, which have been considered for instance in 
\cite{DurandHanKonMeiVir18a,HannulaHirKonKulVir19,HannulaKonBusVir20}.

\begin{definition}
A \emph{probabilistic team} is a pair $\XX = (X,\Pr_{\XX})$, where
$X$ is a relational team and $\Pr_{\XX}\colon X
  \to (0,1]$ is a probability distribution over $X$. 
  $X$ is referred to as the underlying team of
  $\XX$. We denote $\Pr$ instead of $\Pr_{\XX}$ if the context is clear.
\end{definition}

As in the relational case, there is a one-to-one correspondence between
probabilistic hidden-variable models and probabilistic teams.

\begin{definition}\label{def:probModelsTeams}
A probabilistic team $\XX = (X,\Pr)$ over
variables $\VarH = \set{m_1, \dots, \allowbreak m_n, \allowbreak o_1, \allowbreak \dots,{o_n},
\lambda}$
induces a hidden-variable model $h_{\XX}$ 
with $M_i := X(m_i)$, $O_i := X(o_i)$, and $\Lambda := X(\lambda)$, 
such that, for every $\bar a\in M$, $\bar b\in O$, and $c\in\Lambda$
\[ h_{\XX}(\bar{a},\bar{b},c) := \begin{cases} \Pr(s)  &\text{ if }s(\bar{m},\bar o,\lambda)=(\bar{a},\bar{b},c)\\
                                                                      0 &\text{ if } (\bar{a},\bar{b},c)\not\in X(\bar m,\bar o,\lambda)
                                                \end{cases}\]
We denote the underlying probabilistic empirical model by $e_{\XX}$.
\end{definition}

\subsection{Compatibility of Models and Teams}
Team semantics admits elegant formulations of hidden-variable phenomena
because the structure of hidden-variable and empirical models nicely interplays with
operators on the team side. First, we observe the following relationship between
probabilistic and relational hidden-variable models.
\begin{lemma}
  Let $\XX = (X,\Pr)$ be a probabilistic team over $\VarH$ and let $h_{\XX}$
  be the corresponding hidden-variable model. Then $X$ is the team representation
  of the induced relational model $h$
  of $h_{\XX}$.
\end{lemma}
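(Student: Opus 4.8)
The plan is to trace through the two constructions involved---the probabilistic model $h_{\XX}$ induced by $\XX$ (\cref{def:probModelsTeams}) and the possibilistic collapse sending a probabilistic hidden-variable model to its induced relational one---and to observe that together they return $X$ unchanged. By definition the relational hidden-variable model $h$ induced by $h_{\XX}$ is $h = \{(\bar a, \bar b, c) \in M \times O \times \Lambda : h_{\XX}(\bar a, \bar b, c) > 0\}$, so the entire argument reduces to determining which tuples carry positive weight under $h_{\XX}$.

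First I would unfold $h_{\XX}$ using \cref{def:probModelsTeams}: one has $h_{\XX}(\bar a, \bar b, c) = \Pr(s)$ whenever some $s \in X$ satisfies $s(\bar m, \bar o, \lambda) = (\bar a, \bar b, c)$, and $h_{\XX}(\bar a, \bar b, c) = 0$ otherwise. The decisive point is that $\Pr$ maps into $(0,1]$ rather than $[0,1]$, so every assignment in the underlying team carries \emph{strictly positive} weight. Hence $h_{\XX}(\bar a, \bar b, c) > 0$ holds if and only if $(\bar a, \bar b, c) \in X(\bar m, \bar o, \lambda)$, and therefore $h = X(\bar m, \bar o, \lambda)$.

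It then remains to identify $X(\bar m, \bar o, \lambda)$ with $X$ itself. Since $(\bar m, \bar o, \lambda)$ enumerates the entire variable set $\VarH$ of $X$, the set $X(\bar m, \bar o, \lambda)$ is precisely the relational encoding of $X$ under that enumeration. Thus $h$ coincides with $X$ under the standing identification of a team with its relational encoding, and by the one-to-one correspondence between relational hidden-variable models over $(M,O,\Lambda)$ and teams over $\VarH$ recorded just before \cref{def:modelsTeams}, the team representation of $h$ is $X$.

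I do not expect a genuine obstacle; the lemma is in essence a bookkeeping check that the probabilistic-to-relational passage and the team-to-model passage commute appropriately. The only subtlety worth flagging is the strict positivity of $\Pr$: were $\Pr$ permitted to take the value $0$, the collapse could delete assignments and yield a proper subteam of $X$, so the claim hinges precisely on the convention $\Pr\colon X \to (0,1]$ built into the definition of a probabilistic team.
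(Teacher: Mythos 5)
Your proposal is correct and follows essentially the same route as the paper, which disposes of the lemma with the one-sentence observation that $X$ is the support team of $\XX$ (all assignments having non-zero probability) and that this is exactly what possibilistic collapse retains. Your unfolding of \cref{def:probModelsTeams}, including the flagged convention $\Pr\colon X \to (0,1]$ and the identification of $X$ with its relational encoding $X(\bar m,\bar o,\lambda)$, is precisely the bookkeeping implicit in that remark.
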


Indeed,  $X$ is the underlying support team of $\XX$, containing all assignments with
  non-zero probability, and this corresponds to the notion of
  possibilistic collapse. This holds of course also for empirical models. Further, we observe that
the induced empirical model of a hidden-variable model corresponds to the
restriction to $\VarE$ on the team side.

\begin{lemma}\label{prop:relEmpProject}
  Let $X$ be a relational team over $\VarH$, let $h_X$ be its corresponding hidden-variable
  model and $e_X$ its induced empirical model. Then, $X \upharpoonright
  \VarE$ is the team representation of $e_X$. 
  Similarly, let $\XX$ be a probabilistic team over $\VarH$ representing the
  probabilistic hidden-variable
  model  $h_{\XX}$, and let $e_{\XX}$ be its  induced empirical model. Then, $\XX \upharpoonright
  \VarE$ is the team representation of $e_{\XX}$.
\end{lemma}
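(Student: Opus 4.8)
The plan is to treat both parts as a direct unfolding of the relevant definitions, the only genuine content being an alignment of two operations. On the model side, the induced empirical model is obtained by eliminating the hidden variable $\lambda$ — existentially in the relational case, by summation in the probabilistic case — while on the team side this elimination is precisely the restriction $\upharpoonright\VarE$, which drops $\lambda$ from the domain. Since $\VarH = \VarE \cup \set{\lambda}$, projecting away $\lambda$ is exactly the team-level counterpart of existential projection (relational) and of marginalisation (probabilistic), so both claims should reduce to checking that these counterparts agree pointwise.

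For the relational part I would fix $\bar a \in M$ and $\bar b \in O$ and establish the chain
\[
(\bar a, \bar b) \in e_X
\;\Iff\; (\ex c \in \Lambda)\, (\bar a, \bar b, c) \in h_X
\;\Iff\; (\ex s \in X)\, s(\bar m) = \bar a,\ s(\bar o) = \bar b,
\]
where the first equivalence is \cref{def:inducedEmpirical} and the second is \cref{def:modelsTeams}, taking $c := s(\lambda)$. The right-hand side is exactly the condition that some assignment of $X\upharpoonright\VarE$ maps $(\bar m, \bar o)$ to $(\bar a, \bar b)$, i.e.~that $(\bar a, \bar b)$ lies in the empirical model represented by $X\upharpoonright\VarE$. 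I would also remark that the tacit active-domain assumptions ensure $M_i$ and $O_i$ are read off identically from $X$ and from $X\upharpoonright\VarE$, so both present empirical models over the same $(M,O)$.

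For the probabilistic part I would first recall the convention that restricting a probabilistic team $\XX = (X,\Pr)$ to $\VarE$ marginalises the weights: an assignment $s' = s\upharpoonright\VarE$ receives weight $\sum_{t \in X,\, t\upharpoonright\VarE = s'} \Pr(t)$. Fixing $\bar a, \bar b$ and using \cref{def:probModelsTeams}, I would then compute
\[
e_{\XX}(\bar a, \bar b)
= \sum_{c \in \Lambda} h_{\XX}(\bar a, \bar b, c)
= \sum_{s \in X,\ s(\bar m, \bar o) = (\bar a, \bar b)} \Pr(s),
\]
since $h_{\XX}(\bar a, \bar b, c)$ equals $\Pr(s)$ for the unique assignment with $s(\bar m, \bar o, \lambda) = (\bar a, \bar b, c)$ and is $0$ otherwise. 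The last sum is precisely the marginal weight that $\XX\upharpoonright\VarE$ assigns to the $\VarE$-assignment sending $(\bar m, \bar o)$ to $(\bar a, \bar b)$, giving the claim.

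The main obstacle is purely bookkeeping rather than conceptual: I must make sure the summation over $\Lambda$ in the marginalisation defining $e_{\XX}$ is matched with the summation over exactly those assignments of $X$ that agree on $\VarE$ but may differ on $\lambda$. This rests on the one-to-one correspondence between assignments and triples $(\bar a, \bar b, c)$ supplied by \cref{def:probModelsTeams}, together with the marginalising convention for restriction of probabilistic teams; once these are fixed, both displayed equalities are forced. I would therefore spend most of the write-up pinning down the restriction operation on probabilistic teams, after which the relational statement can be seen as its possibilistic collapse.
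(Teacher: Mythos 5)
Your proposal is correct and matches the paper's treatment: the paper states this lemma without an explicit proof, regarding it as immediate from \cref{def:inducedEmpirical}, \cref{def:modelsTeams}, and \cref{def:probModelsTeams}, and your argument is exactly the intended unfolding of those definitions (existential projection of $\lambda$ versus team restriction in the relational case, marginalisation over $\Lambda$ versus weight-summing restriction in the probabilistic case). Your extra care in pinning down the marginalising convention for $\XX \upharpoonright \VarE$ and the active-domain assumptions is sound and fills in precisely what the paper leaves tacit.
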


This allow us to cast
properties of the underlying model as properties of the hidden-variable model.
We can also formulate sort of a reverse.

\begin{cor}
  Let $X$ be a team over $\VarE$ with corresponding model $e_X$. A team $Y$ over
  $\VarH$ represents an empirically equivalent hidden-variable model if, and only
  if, $Y \upharpoonright \VarE = X$. Analogously, 
  let $\XX$ be a probabilistic team over $\VarE$. A team $\YY$ over
  $\VarH$ represents an empirically equivalent hidden-variable model if, and only
  if, $\YY \upharpoonright \VarE = \XX$.
\end{cor}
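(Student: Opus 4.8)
The plan is to derive the statement directly from \cref{prop:relEmpProject} together with the one-to-one correspondence between teams and models established in \cref{def:modelsTeams,def:probModelsTeams} and the definition of empirical equivalence (\cref{def:inducedEmpirical}). The key observation is that ``$Y$ represents an empirically equivalent hidden-variable model'' is by definition a statement about the induced empirical model $e_Y$ of $h_Y$, and \cref{prop:relEmpProject} identifies the team representation of $e_Y$ as precisely $Y \upharpoonright \VarE$. The corollary is then nothing more than the bijectivity of the team--model correspondence, read in the reverse direction.

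In detail, for the relational case I would argue as follows. A team $Y$ over $\VarH$ determines a hidden-variable model $h_Y$, and by \cref{def:inducedEmpirical} the model $h_Y$ is empirically equivalent to $e_X$ exactly when the induced empirical model $e_Y$ of $h_Y$ equals $e_X$. By \cref{prop:relEmpProject}, the team $Y \upharpoonright \VarE$ is the team representation of $e_Y$, while by hypothesis $X$ is the team representation of $e_X$. Since the correspondence between teams over $\VarE$ and empirical models is a bijection, $e_Y = e_X$ holds if and only if their team representations coincide, i.e.\ if and only if $Y \upharpoonright \VarE = X$. This settles both implications at once: assuming empirical equivalence yields $Y \upharpoonright \VarE = X$, and conversely $Y \upharpoonright \VarE = X$ forces $e_Y = e_X$ and hence empirical equivalence.

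The probabilistic case is entirely parallel, now invoking the probabilistic half of \cref{prop:relEmpProject}: for a probabilistic team $\YY$ over $\VarH$, the restriction $\YY \upharpoonright \VarE$ is the team representation of the induced probabilistic empirical model $e_{\YY}$ obtained by marginalising out $\lambda$. Empirical equivalence of $h_{\YY}$ with $e_{\XX}$ amounts to $e_{\YY} = e_{\XX}$ as probability distributions on $M \times O$, which by the bijection between probabilistic teams over $\VarE$ and probabilistic empirical models is equivalent to $\YY \upharpoonright \VarE = \XX$.

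The only point requiring genuine care --- and the place where the argument could break --- is the active-domain (support) convention that underlies the one-to-one correspondence between teams and models: every measurement value $m_i$, outcome value $o_j$, and hidden value $\lambda$ is assumed to occur in some assignment. I would check that this convention is preserved under restriction to $\VarE$, so that $Y \upharpoonright \VarE$ (respectively $\YY \upharpoonright \VarE$) is again a legitimate team representing an empirical model. Dropping the coordinate $\lambda$ removes no value from the ranges of $m_1,\dots,m_n,o_1,\dots,o_n$, so each such value still appears and the convention indeed carries over; this is what guarantees that the team--model correspondence remains bijective after projection and that equality of induced models is faithfully mirrored by equality of restricted teams.
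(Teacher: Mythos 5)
Your proposal is correct and follows essentially the same route as the paper, which states this corollary without proof as an immediate consequence of \cref{prop:relEmpProject} and the one-to-one team--model correspondence. Your extra check that the active-domain convention survives restriction to $\VarE$ is a sensible bit of diligence the paper leaves tacit, but it does not change the argument.
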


All these correspondences are neatly summarized by the commutative
diagram in \cref{fig:commutative}.
% in \cref{fig:commutative}.
\begin{figure}[h]
\[\begin{tikzcd}[row sep={40,between origins}, column sep={40,between origins}]
      & \XX \ar{rr}\ar{dd}\ar{dl} & & \XX \upharpoonright \VarE \ar{dd}\ar{dl} \\
    X \ar[crossing over]{rr} \ar{dd} & & X \upharpoonright \VarE \\
      & h_{\XX} \ar{rr} \ar{dl} & &  e_{\XX} \ar{dl} \\
    h_X \ar{rr} && e_X \ar[from=uu,crossing over]
\end{tikzcd}\]
\caption{Compatibility of team and model structure}\label{fig:commutative}
\end{figure}
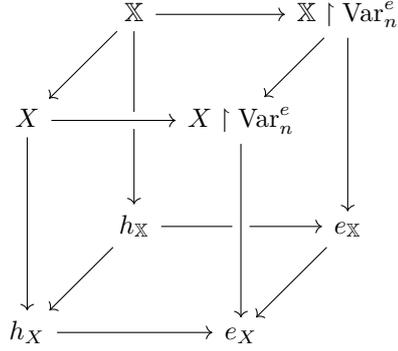

\section{Relational and Probabilistic Team Semantics}

We now introduce the logical machinery that we need to reason about
teams and the hidden-variable models that they represent. 
We first recall the main definitions for logics with relational and
probabilistic team semantics. Then, we develop novel correspondences between the two
frameworks that will then allow us to
unify relational and probabilistic hidden-variables models and 
enable us to develop their theory in parallel.

\subsection{Relational Team Semantics}

Modern logics of dependence and independence are based on atomic properties of teams and
on an interpretation of the classical logical operators $\land$, $\lor$, $\exists$ 
and $\forall$ in the framework of teams.
There are many atomic team properties that have been studied in the context
of such logics. In this paper, we shall need only dependence,
inclusion, and independence, which are defined as follows, for any team $X$,
and for tuples $\bar x,\bar y,\bar z$ of variables in the domain of $X$.

\begin{description}
 \item[Dependence:] $X\models \dep(\tx, \ty)$ if for all $s,s'\in X$ such that $s(\tx) = s'(\tx)$, also $s(\ty) = s'(\ty)$; 
 \item[Inclusion:] $X \models \tx\subseteq\ty$ if $X(\tx)\subseteq X(\ty)$; 
 \item[Independence:] $X \models  \tx\perp_{\tz} \ty$  if for all $s,s'\in X$ such that $s(\tz)=s'(\tz)$, there exists
 some $s''\in X$ such that $s''(\tz)=s(\tz), s''(\tx)=s(\tx)$, and $s''(\ty)=s'(\ty)$.  
\end{description}

Thus $\dep(\tx,\ty)$ describes functional dependence of $\ty$ from $\tx$, in the sense that there
exist a function $f\colon X(\tx)\ra X(\ty)$ such that $s(\ty)=f(s(\tx))$ for every assignment $s\in X$.
Independence however is more than the absence of dependence. The atom  $\tx\perp_{\tz} \ty$ 
expresses that for any fixed value for $\tz$ in the team, the values of $\tx$ and $\ty$ are completely 
independent in the sense that additional information about the value of $\tx$ does not constrain
the possible values of $\ty$ in any way, and vice versa. This variant of independence is
sometimes called \emph{conditional independence}. A \emph{simple independence atom} instead
has the form $\tx\perp\ty$. We then have that $X\models\tx\perp\ty$ if, and only if,
$X(\tx\ty)=X(\tx)\times X(\ty)$, so any value for $\tx$ in the team co-exists with any
value for $\ty$ in a common assignment in $X$.  

\medskip
Syntactically, dependence logic $\FO(\dep)$ is built from dependence atoms $\dep(\tx,\ty)$
and first-order literals, i.e.~atoms and negated atoms of a fixed vocabulary $\tau$,
by the operators $\land$, $\lor$, $\exists$ 
and $\forall$. All formulae are written in negation normal form and
negation is only applied to first-order atoms, not to dependence atoms.
Independence logic $\FO(\perp)$ is built analogously, using independence atoms
$\tx\perp_{\tz} \ty$ instead. 

\medskip
We now present the relational team semantics of  $\FO(\dep)$ and $\FO(\perp)$.
Recall that the traditional Tarski semantics for first-order formulae $\phi(\bar x)$ 
is based on single assignments $s$ whose domain must comprise the variables 
in $\free(\phi)$; we write $\frA\models \phi[s]$ for saying that $\frA$ satisfies
$\phi$ with the assignment $s$.

\medskip
The \emph{team semantics} for a formula $\phi(\bar x)$ in
a logic of dependence and independence 
is instead defined by inductive clauses for the satisfaction relation
$\frA \models_X \phi$
saying that the team $X$ satisfies $\phi$ in $\frA$. 
Since the underlying structure $\frA$ is not of interest for
the purposes of this paper, we shall simplify notation and simply
write $X\models\phi$ instead. 
For a classical first-order literal, i.e.~an atom or its negation,
we simply say that 
$X\models \phi$ if $\phi[s]$ is true in the underlying structure for
all $s \in X$. 
To present the inductive rules of team semantics,
we shall need two basic operations that (possibly) extend
the domain of a given team $X$ (with values in $A$)
to new variables. 

\begin{definition}
Given an assignment $s$, a variable $x$, and a value $a\in A$,
we write $s[x\mapsto a]$ for the assignment that
extends, or updates, $s$ by mapping $x$ to $a$ (and
leaving the values of all other variables unchanged).
The unrestricted \emph{generalisation} of $X$ over $A$ is
$X[x\mapsto A]:=  \{ s[x\mapsto a] \colon s \in X, a \in A \}$,
and the \emph{Skolem-extension} for a function 
$F \colon X \rightarrow \mathcal{P}(A)\setminus \{ \emptyset \}$ is
$X[x\mapsto F] := \{ s[x\mapsto a] \colon s \in X, a \in F(s) \}$. 
\end{definition}

The following inductive rules then extend the team semantics of atomic team
properties and first-order literals to arbitrary formulae in $\FO(\dep)$
or $\FO(\perp)$:
\begin{itemize}
\item
$X\models \phi_1\wedge \phi_2$ if $X\models \phi_i$ for
$i = 1,2$;
\item $X\models \phi_1\vee \phi_2$ if $X = X_1 \cup X_2$ for two
teams $X_i$ such that  $\frA\models_{X_i} \phi_i$;
\item
$X\models \forall x \phi$ if $X[x\mapsto A]\models\phi$; 
\item
$X\models \exists x \phi$ if 
$X[x\mapsto F]\models\phi$, for some suitable Skolem extension of $X$ by
 a function $F \colon X \rightarrow \mathcal{P}(A)\setminus \{ \emptyset \}$. 
\end{itemize}

Since $\dep(\tx,\ty)\equiv \ty\perp_{\tx} \ty$, we can
freely use dependence atoms in $\FO(\perp)$, and it is known \cite{Galliani12}
that inclusion atoms $\tx\subseteq\ty$ are definable in  $\FO(\perp)$ as well.
In fact, the logic $\FO(\dep,\subseteq)$ with both inclusion and dependence
atoms is equivalent to $\FO(\perp)$ and it suffices to use simple independence atoms
to get the full power of $\FO(\perp)$.

\subsection{Probabilistic Team Semantics}
To understand probabilistic hidden-variable phenomena, we make use of a
probabilistic variant of team semantics. Our definitions are 
essentially the same as in 
\cite{DurandHanKonMeiVir18a,HannulaHirKonKulVir19,HannulaKonBusVir20} with deviations 
in some details. However, we stress the fact that we consciously
select notations that invite jumping back and forth between probabilistic
and relational semantics.
In particular, we use precisely the same syntax for the probabilistic variants of  
$\FO(\dep)$ and $\FO(\perp)$, but we evaluate
the formulae now over probabilistic teams $\XX = (X,\Pr_{\XX})$.
To define the probabilistic semantics for these logics, we first have to
describe the meaning of first-order literals
and of dependence and independence atoms.

\begin{itemize}
\item For first-order literals $\phi$, we let $\XX \models \phi$ if, and only if,
      $X\models \phi$ in the relational sense.
\item ${\XX} \models \dep(\bar{x},\bar{y})$ if for all $\bar{a} \in
      X(\bar{x})$ there is a $\bar{b} \in X(\bar{y})$ such that 
      $\Pr(\bar{y} = \bar{b} \mid \bar{x} = \bar{a}) = 1$.
      Note that this corresponds to a deterministic sense of dependence.
\item ${\XX} \models \bar{x} \perp_{\bar{z}} \bar{y}$ if we have conditional
      stochastic independence between $\bar{x}$ and $\bar{y}$ for any given $\bar{z}$.
      Formally, this means that for all $\bar{a} \in X(\bar{x}), \bar{b} \in X(\bar{y}), \bar{c} \in
      X(\bar{z})$,
  \[ \Pr(\bar{x} = \bar{a}, \bar{y} = \bar{b} \mid \bar{z} = \bar{c})
        = \Pr(\bar{x} = \bar{a} \mid \bar{z} = \bar{c})\cdot \Pr(\bar{y} = \bar{b} \mid \bar{z} = \bar{c}). \]
\end{itemize} 

The probabilistic team semantics of the logical operators generalises the relational
semantics. To make this precise, we have to give appropriate definitions for the
split, the generalisation, and the Skolem extension of a probabilistic team.

\begin{definition}
Let $\XX=(X,\Pr_{\XX})$ be a probabilistic team with $A$ as its set of values,
and  let $\Delta(A)$ denote the set of all probability distributions over $A$. Consider a function 
$\Ff\colon X \to \Delta(A)$ that maps every $s\in X$ to a probability
distribution $\Ff_s\in \Delta(A)$ on $A$.
It induces a function $F\colon X \to \mathcal{P}(A)\setminus \{ \emptyset \}$ via the support of the distributions $\Ff_s$, i.e.~by setting $F(s) := \set{a \in A: \Ff_s(a) > 0}$. Then, the Skolem extension $\XX[x\mapsto\Ff]$ is
defined as the probabilistic team over $X[x\mapsto F]$ with distribution
  \[ \Pr(s[x\mapsto a]) := \sum_{\substack{t \in X, \\ t[x\mapsto a] = s[x\mapsto a]}}
    \Pr_{\XX}(t)\cdot F_t(a). \] 
Note that the right-hand side simplifies to $\Pr_{\XX}(s)\cdot F_s(a)$ if $x$ is a new variable.
\end{definition}

Intuitively, this means that the probability mass of $s$ is split over
multiple extensions $s[x\mapsto a]$ where the proportion assigned to each $a$ is
  given by $F_s(a)$. Thus, we can interpret $F_s(a)$ as the probability of 
  the event $x = a$ conditional on agreement with $s$ on $\dom(s)$. 
  The function $F$ tells us how we have to extend our underlying relational
  support team.

  \medskip
Note that the marginalization of Skolem extensions by $\Ff$ gives back the
original team. More formally, for $\XX = (X,\Pr)$ with $x \not\in\dom(X)$ and
  a Skolem extension $\YY = \XX[x\mapsto\Ff]$, it holds that $\XX = \YY \upharpoonright
  \dom(X)$.

\begin{definition}
 The \emph{uniform extension} $\XX[x\mapsto A]$ of $\XX$ is the special case of a Skolem extension
  $\XX[x\mapsto \Ff]$ where $\Ff$ maps all $s \in X$ to the uniform distribution over $A$.
  Explicitly, we get the distribution
  \[ \Pr(s[x\mapsto a]) = \frac{1}{\abs{A}}\sum_{\substack{t\in X \\ t[x\mapsto a] = s[x\mapsto a]}}
    \Pr_{\XX}(t) . \] 
 Again this simplifies  to $\Pr(s[x\mapsto a]) =\Pr_{\XX}(s)/ {\abs{A}}$ if $x$ is a new variable. This corresponds to a
uniform split of the probability mass of $s$ over all $\abs{A}$ extensions.
\end{definition}

\begin{definition}
 Let $\XX$ be a probabilistic team with $X$ as its underlying team. We now define the
 probabilistic team semantics of the logical operators as follows:
 \begin{itemize}
 \item $\XX \models \phi_1 \land \phi_2$ if $\XX \models \phi_i$ for $i=1,2$.
  Thus, conjunctions are defined in the straightforward way, as in relational team semantics.
    \item $\XX \models \phi_1 \lor \phi_2$ if there are
      probabilistic teams $\XX_1, \XX_2$ and $\lambda \in [0,1]$ with
      $X = X_1\cup X_2$ and $\Pr_{\XX} = (1-\lambda)\Pr_{\mathbb{X}_1} + \lambda\Pr_{\mathbb{X}_2}$
      such that $\XX_i \models  \phi_i$ for $i=1,2$.
      In other words, this means that we can split $\XX$ into a convex
      combination of two teams that satisfy the formulae $\phi_i$.
\item $\XX \models \forall x \phi$ if $\XX[x\mapsto A]\models \phi$, i.e.~if the uniform extension 
of $\XX$ satisfies $\phi$.
\item $\XX \models \exists x \phi$ if $\XX[x\mapsto\Ff]\models\phi$
for some $\Ff\colon X \to \Delta(A)$, i.e.~if a suitable Skolem extension of $\XX$ satisfies $\phi$.
\end{itemize}
\end{definition}

The existential quantifier is most relevant for our purposes and
we want to provide more intuition for it.
For a variable $x \not\in \dom(X)$, $\XX\models \exists x \phi$ expresses the
existence of another probabilistic team $\YY$ over $\dom(X) \cup \set{x}$ such that
  \begin{enumerate}
   \item $\YY$ satisfies $\phi$, and
    \item $\YY$ marginalizes to $\XX$ if restricted to $\dom(X)$. This means that for
  all $s \in X$ we have that
  $\Pr_{\XX}(s) = \sum_{a \in A} \Pr_{\YY}(s[x\mapsto a])$.
\end{enumerate}

\subsection{Back and Forth between Relational and Probabilistic Teams}\label{subsec:compareTeams}

We now compare probabilistic and relational team semantics. We
start with the satisfaction relation.  Note that it is essential for the formulation of this
theorem that we use the same syntax for both variants.

\begin{theorem}\label{theorem:compareSemantics}
Let $\XX$ be a probabilistic team and $X$ its underlying team. 
For every formula $\psi\in\FO(\perp)$, we have that $\XX \models \psi$ implies $X\models \psi$.
For every $\phi\in\FO(\dep)$, we also have the converse, so that 
 $\XX \models \phi$ if, and only if, $X\models \phi$.
 \end{theorem}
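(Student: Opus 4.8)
The plan is a structural induction on formulas that establishes two statements at once: the forward implication $\XX\models\theta\Rightarrow X\models\theta$ for every $\theta\in\FO(\perp)$ (with $X$ the underlying team of $\XX$), and the reverse implication $X\models\theta\Rightarrow\XX\models\theta$ for every $\theta\in\FO(\dep)$. Since $\dep(\tx,\ty)\equiv\ty\perp_{\tx}\ty$ gives $\FO(\dep)\subseteq\FO(\perp)$, the forward implication also covers $\FO(\dep)$, so the two statements together yield the claimed equivalence for $\FO(\dep)$. The only atom at which the two logics part ways is the independence atom, so the crux is to show that the forward direction survives every atom and connective, whereas the converse survives all of them \emph{except} $\tx\perp_{\tz}\ty$ — which, crucially, does not occur in $\FO(\dep)$.

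For the atomic cases, first-order literals are immediate, since $\XX\models\phi$ is defined to mean $X\models\phi$. For a dependence atom $\dep(\tx,\ty)$ the two semantics coincide: because $X$ is the support of $\XX$, the relational condition that all $s\in X$ with $s(\tx)=\bar a$ share a value $s(\ty)=\bar b$ is literally equivalent to $\Pr(\ty=\bar b\mid\tx=\bar a)=1$ for each $\bar a\in X(\tx)$ (conditioning being legitimate since $\Pr(\tx=\bar a)>0$). For an independence atom $\tx\perp_{\tz}\ty$ only the forward direction holds: assuming conditional stochastic independence, take $s,s'\in X$ with $s(\tz)=s'(\tz)=\bar c$; then $\bar a:=s(\tx)$ and $\bar b:=s'(\ty)$ have positive conditional probability given $\tz=\bar c$, so by the product formula $\Pr(\tx=\bar a,\ty=\bar b\mid\tz=\bar c)>0$, which yields the required witness $s''\in X$. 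The converse genuinely fails — a full product support can carry a correlated distribution — and this is exactly the obstruction to an equivalence for $\FO(\perp)$. As $\FO(\dep)$ contains no independence atoms, this failure never enters the $\FO(\dep)$ induction.

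For the connectives the key observation is that the relational team operations are precisely the supports of their probabilistic counterparts: the underlying team of the uniform extension $\XX[x\mapsto A]$ is $X[x\mapsto A]$, and the underlying team of a Skolem extension $\XX[x\mapsto\Ff]$ is the relational Skolem extension $X[x\mapsto F]$ for the support function $F(s)=\set{a\in A:\Ff_s(a)>0}$. Conjunction is then immediate from the induction hypothesis. For $\forall x\,\phi$ both directions follow by applying the induction hypothesis to $\XX[x\mapsto A]$ and its support $X[x\mapsto A]$. For $\exists x\,\phi$ the forward direction extracts a relational witness $F$ (the support function of a given $\Ff$), while the converse turns a relational witness $F$ into distributions $\Ff_s=\mathrm{unif}(F(s))$ with support $F(s)$; in either case the induction hypothesis transports satisfaction between $\XX[x\mapsto\Ff]$ and its underlying team $X[x\mapsto F]$.

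The main obstacle is the disjunction in the converse ($\FO(\dep)$) direction. Forward is easy: a probabilistic split $\Pr_{\XX}=(1-\lambda)\Pr_{\XX_1}+\lambda\Pr_{\XX_2}$ with $\XX_i\models\phi_i$ gives, via the induction hypothesis, a relational cover $X=X_1\cup X_2$ with $X_i\models\phi_i$. Conversely, given such a cover I would split the mass of $\XX$ so that the two pieces have support \emph{exactly} $X_1$ and $X_2$, so that the induction hypothesis applies without any appeal to downward closure: writing $A=X_1\setminus X_2$, $B=X_2\setminus X_1$, $C=X_1\cap X_2$ with masses $p_A,p_B,p_C$, any weight $\lambda\in[p_B,\,1-p_A]$ (strict when $C\neq\emptyset$) induces normalised $\Pr_{\XX_1},\Pr_{\XX_2}$ whose convex combination is $\Pr_{\XX}$, the normalisations being automatically consistent because $(1-\lambda)-p_A+\lambda-p_B=p_C$. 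The delicate points are the degenerate configurations — an empty part, or a cover forcing $\lambda\in\set{0,1}$ — which must be absorbed into the standard conventions for the empty probabilistic team; this bookkeeping is where I expect the real work to lie.
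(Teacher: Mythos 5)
Your proof is correct and follows essentially the same route as the paper's: a simultaneous induction with the positivity argument for the independence atom, the support-function correspondence $F(s)=\set{a\in A:\Ff_s(a)>0}$ for the existential quantifier (with uniform distributions over $F(s)$ for the converse), and the uniform-extension observation for the universal quantifier. Your only addition is to spell out the convex split for the disjunction converse --- the explicit $\lambda\in[p_B,\,1-p_A]$ computation, including the degenerate empty-part convention --- where the paper merely asserts that the construction is straightforward; your bookkeeping there is sound.
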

 
\begin{proof} We proceed by induction.
\begin{itemize}
\item For first-order literals $\phi$, we obviously have that $\XX \models \phi$ if, and only if, $X\models \phi$.
\item Assume that $\XX \models \bar{x} \perp_{\bar{z}} \bar{y}$. Now let
  $s, s' \in X$ with $s(\bar{z}) = s'(\bar{z}) = \bar{c}$. With $\bar{a} =
  s(\bar{x})$ and $\bar{b} = s'(\bar{y})$ we get that
  \[ \Pr(\bar{x} = \bar{a}, \bar{y} = \bar{b} \mid \bar{z} = \bar{c}) =
    \underbrace{\Pr(\bar{x} = \bar{a} \mid \bar{z} =
      \bar{c})}_{\geq\Pr(s)>0} \cdot \underbrace{\Pr(\bar{y} = \bar{b} \mid
      \bar{z} = \bar{c}}_{\geq\Pr(s') > 0}) > 0. \]
  Since $X$ is the support team of $\XX$, there exists $s''\in X$ with $s''(\bar{x}, \bar{y}, \bar{z}) = (\bar{a},\bar{b},\bar{c})$
  which proves that $X \models \bar{x} \perp_{\bar{z}} \bar{y}$.
 \item For dependence atom, the implication from left to right follows from the fact that dependence atoms can
  be understood as special cases of independence atoms, the other direction
  is straightforward since $X$ is defined to be the support of $\Pr$.
\end{itemize}  
It remains to be shown that the logical operators preserve the implications in both directions.
This is obvious for conjunctions.
\begin{itemize}
\item For $\XX\models \psi \lor \vartheta$, we choose $\YY \models \psi$ and $\ZZ \models
  \vartheta$ with $\XX = (1-\lambda)\YY + \lambda\ZZ$ so that $X
  = Y \cup Z$. By induction hypothesis, $Y \models \psi$ and $Z
  \models \vartheta$, hence $X\models\psi\lor\vartheta$. Conversely, given a probabilistic team
  $\XX$ and a decomposition $X=Y\cup Z$ of the underlying team with $Y\models\psi$ and
  $Z\models\vartheta$, it is straightforward to construct  $\YY$ and $\ZZ$
  with underlying $Y$ and $Z$ such that 
  $\Pr_{\XX} = (1-\lambda)\Pr_{\YY} + \lambda \Pr_{\ZZ}$ for some $\lambda \in [0,1]$.
  By induction hypothesis $\YY\models\psi$ and
  $\ZZ\models\vartheta$, hence $\XX\models\psi\lor\vartheta$. 
  
\item For  formulae $\forall x \psi$, the induction step follows from the fact that 
the underlying team of $\XX[x\mapsto A]$ is $X[x\mapsto A]$. 
  
\item If $\XX\models \exists x \psi$ by means of  $\Ff\colon X \to \mathcal{P}(A)\setminus \{ \emptyset \}$, 
  we put $F(s)= \set{a \in A: \Ff_s(a) > 0}$ to get the Skolem extension $X[x\mapsto F]$ as
  underlying team of $\XX[x\mapsto \Ff]$, so $X[x\mapsto F]\models\psi$ and hence $X\models\E x\psi$.
Conversely, let $X\models \exists x \psi$, hence $X[x\mapsto F]\models\psi$
for some $F\colon X \to \mathcal{P}(A)\setminus \{ \emptyset \}$. Let $\Ff: X\mapsto \Delta(A)$ be the function that maps
every $s\in X$ to the uniform distribution over $F(s)$.
Then, $\XX[x\mapsto\Ff]$ has $X[x\mapsto F]$ as underlying team and the induction hypothesis
is applicable.\qedhere
\end{itemize}
\end{proof}
We next address the matter of logical entailment. 
\begin{definition}
  Let $\psi, \phi\in\FO(\perp)$. We write $\psi \relEnt
  \ph$ if for all suitable relational teams $X$ with $X \models \psi$ also $X
  \models \ph$ holds. Analogously, we define $\psi \probEnt \ph$ as entailment
  according to probabilistic team semantics. If both $\psi \relEnt \ph$ and
  $\psi \probEnt \ph$ hold, we write $\psi \allEnt \ph$.
\end{definition}

Results that are analogous to the theorem below have been established for databases but 
apparently not applied to teams so far.

\begin{theorem}\label{theorem:compareEntailment}\mbox{}
  \begin{enumerate}
    \item For formulae in $\FO(\dep)$, $\relEnt$ and $\probEnt$ coincide.
    \item If $\psi \in \FO(\dep), \ph \in \FO(\perp)$, then $\psi \probEnt \ph
      \impl \psi \relEnt \ph$.
    \item If $\psi \in \FO(\perp)$ and $\ph \in \FO(\dep)$, then $\psi \relEnt
      \ph \impl \psi \probEnt \ph$.
    \item For $\ph, \psi \in \FO(\perp)$, neither implication holds in
      general. Conjunctions of conditional independence atoms suffice
      to generate counterexamples.
  \end{enumerate}
\end{theorem}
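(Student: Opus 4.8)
The plan is to reduce everything to the semantic comparison already established in \cref{theorem:compareSemantics}, together with the trivial observation that every (nonempty, finite) relational team $X$ underlies some probabilistic team, e.g.\ by equipping $X$ with the uniform distribution, which has support exactly $X$. Throughout, write $X$ for the underlying support team of a probabilistic team $\XX$. The two facts I will use repeatedly are: (a) for $\phi\in\FO(\dep)$ one has $\XX\models\phi \Iff X\models\phi$; and (b) for $\psi\in\FO(\perp)$ one has $\XX\models\psi \implies X\models\psi$.

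First I would dispatch parts 1--3 by chaining (a) and (b) through the definition of entailment, which quantifies over all teams. For part 1, with $\psi,\phi\in\FO(\dep)$, both satisfaction relations are governed by the biconditional (a). If $\psi\relEnt\phi$ and $\XX\models\psi$, then $X\models\psi$, so $X\models\phi$, so $\XX\models\phi$; hence $\psi\probEnt\phi$. Conversely, if $\psi\probEnt\phi$ and a relational $X\models\psi$, equip $X$ with the uniform distribution to get $\XX\models\psi$ by (a); then $\XX\models\phi$, so $X\models\phi$, giving $\psi\relEnt\phi$. Parts 2 and 3 are the two mixed halves of this argument. For part 2 ($\psi\in\FO(\dep)$, $\phi\in\FO(\perp)$) assume $\psi\probEnt\phi$ and take $X\models\psi$; the uniform extension yields $\XX\models\psi$ by (a), hence $\XX\models\phi$, and then (b) descends to $X\models\phi$. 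For part 3 ($\psi\in\FO(\perp)$, $\phi\in\FO(\dep)$) assume $\psi\relEnt\phi$ and take $\XX\models\psi$; by (b), $X\models\psi$, so $X\models\phi$, and then (a) lifts to $\XX\models\phi$. In both mixed cases the asymmetry between (a) and (b) singles out the one workable direction: the $\FO(\dep)$-formula is the one to which we apply the biconditional (a), the $\FO(\perp)$-formula the one to which we apply the one-way implication (b), and only one of the two entailment directions lets these be chained consistently.

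Part 4 is the heart of the statement and the only place requiring real work. The key observation is that, reading off the definitions, $\probEnt$ between conjunctions of conditional-independence atoms is precisely the implication problem for \emph{probabilistic} conditional independence (every finite-support distribution is realised by a probabilistic team, and conversely), whereas $\relEnt$ is the implication problem for \emph{relational} conditional independence in the database sense, since $\tx\perp_{\tz}\ty$ is exactly the embedded multivalued dependency. These two implication problems are known to be incomparable---this is the database phenomenon alluded to just before the theorem---with the discrepancies living strictly beyond the common semi-graphoid core on which the two semantics agree. I therefore plan to exhibit, for each direction, an explicit pair $(\psi,\phi)$ of conjunctions of conditional-independence atoms together with a witnessing team. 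For the failure of $\psi\relEnt\phi \implies \psi\probEnt\phi$ I would fix a relationally valid implication $\psi\relEnt\phi$ and construct a probabilistic team $\XX$ in which every premise atom factorises numerically but the conclusion atom does not; this is consistent because by (b) the support $X$ still satisfies $\phi$ relationally, so only the numerical factorisation of $\phi$ is broken. For the failure of $\psi\probEnt\phi \implies \psi\relEnt\phi$ I would dually fix a probabilistically valid implication $\psi\probEnt\phi$ and produce a relational team whose projections realise all premise rectangles while failing the conclusion rectangle.

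The hard part will be isolating suitable $\psi$ and $\phi$: because the two entailment relations agree on every semi-graphoid consequence, any separating instance must rest on a genuinely non-semi-graphoid implication, and one then has to verify both that the chosen implication holds in the intended semantics (a short combinatorial shuffling argument on the relational side, a short computation with conditional factorisations on the probabilistic side) and that the witnessing team really violates it in the other semantics. Keeping these teams small while respecting the active-domain convention of \cref{def:relEmpModels} is the fiddly part; once a correct pair of examples is in hand, verifying the four claims of the theorem becomes routine.
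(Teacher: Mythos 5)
Your treatment of parts 1--3 is correct and is essentially the paper's route: the paper derives (2) and (3) immediately from \cref{theorem:compareSemantics} and gets (1) by combining them (using $\FO(\dep) \sub \FO(\perp)$), which is the same chaining of the biconditional (a) and the one-way implication (b) that you spell out. Your explicit uniform-distribution lifting of a relational team to a probabilistic one is the step the paper leaves implicit, and your parenthetical restriction to nonempty teams is harmless, since the empty team satisfies every formula of these logics and so never witnesses a failure of entailment.

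Part 4, however, is where your proposal stops short of a proof, and it is precisely the part carrying the real content. The claim is existential---conjunctions of conditional independence atoms \emph{suffice} to separate $\relEnt$ from $\probEnt$ in both directions---so nothing is established until concrete formulae and witnessing teams are produced; your text describes a search strategy and concedes that ``isolating suitable $\psi$ and $\ph$'' remains open, while the assertion that the two implication problems are ``known to be incomparable'' is given without a construction or a precise citation. Your guiding observations are sound (any separating implication must lie outside the semi-graphoid-derivable core, and for the first direction the support team must still satisfy the conclusion relationally by (b), so only its numerical factorisation can fail), but they only delimit where to look. The paper closes exactly this gap with explicit instances adapted from Wong--Butz and Studen\'y: it proves $\psi_1 \relEnt \ph_1$ for $\psi_1 = z \perp_{x} w \land z \perp_{y} w \land x \perp_{wz} y$ and $\ph_1 = z \perp_{xy} w$ by a short three-step shuffling argument, and exhibits a seven-assignment probabilistic team satisfying $\psi_1$ but not $\ph_1$; for the converse it invokes Studen\'y's probabilistically valid implication $\psi_2 \probEnt z \perp w$, where $\psi_2 = x \perp_{yz} y \land z \perp_{x} w \land z \perp_{y} w \land x \perp y$, together with a six-assignment relational team violating it. (Note also that Studen\'y's implication is itself a nontrivial measure-theoretic result that the paper cites rather than reproves, so even ``a short computation with conditional factorisations'' understates the probabilistic side.) Until you supply comparable instances, with both verifications carried out, part 4 is unproven; your worry about the active-domain convention of \cref{def:relEmpModels} is, by contrast, a non-issue, since that convention concerns empirical models rather than the arbitrary teams over which entailment is defined.
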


\begin{proof}
  Notice that (2) and (3) follow immediately from
  \cref{theorem:compareSemantics} and that (1) follows by combining (2) and (3)
  since $\FO(\dep) \sub \FO(\perp)$. (4) remains to be shown. We adapt
  formulae and counterexamples from
  \cite{WongBut00}.% which in turn is based upon \cite{Studeny1992} and \cite{Studeny1989}.

  \medskip We first show that, in general, $\psi \relEnt \ph$ does not imply $\nimpl \psi
  \probEnt \ph$. Let $\psi_1 = z \perp_{x} w \land z \perp_{y} w \land x
  \perp_{wz} y$ and $\ph_1 = z \perp_{xy} w$.
  %  \hspace{0.5cm} \text{and}\hspace{0.5cm}\ph_1 = z \perp_{xy} w.
  %\[ \psi_1 = z \perp_{x} w \land z \perp_{y} w \land x \perp_{wz} y
  %  \hspace{0.5cm} \text{and}\hspace{0.5cm}\ph_1 = z \perp_{xy} w. \] 
The
  intuition behind $\psi_1 \relEnt \psi_2$ is that we can combine $z \perp_x w$
  and $z \perp_y w$ to get $z \perp_{xy} w$ by using $x \perp_{wz} y$ to enforce
  agreement on $xy$.
  
  \medskip
  Let $X \models \psi_1$ and $s_1,s_2\in X$ with $(a,b)=s_1(x,y) = s_2(x,y)$. Further, let
  $c := s_1(z)$ and $d := s_2(w)$. Since $X \models z \perp_{x} w$, there exists 
  $s_3 \in X$ with $s_3(x) = a$ and $s_3(z,w) = (c,d)$. Similarly, we obtain $s_4 \in
  X$ with $s_4(y) = b$ and $s_4(z,w) = (c,d)$ via $X \models z \perp_y w$. 
  By applying $X \models x \perp_{wz}y$ on $s_3$ and $s_4$, we finally obtain $s_5 \in X$ with 
  $s_5(x,y,z,w) = (a,b,c,d)$
  which shows that $X \models \ph_1$. Thus, $\psi_1 \relEnt \ph_1$. Howver, it is straightforward
  to verify that the probabilistic team given below satisfies
  $\psi_1$ but not $\ph_1$.

\begin{center}
        %\begin{minipage}{.5\textwidth}
        \begin{tabular}{ c  c  c  c || c}
          $x$ & $y$ & $z$ & $w$ & $\Pr$ \\
          \hline \hline
          0 & 0 & 0 & 0 & 0.2 \\ \hline
          0 & 0 & 0 & 1 & 0.2 \\ \hline
          0 & 0 & 1 & 0 & 0.2 \\ \hline
          0 & 0 & 1 & 1 & 0.1 \\ \hline
          0 & 1 & 1 & 1 & 0.1 \\ \hline
          1 & 0 & 1 & 1 & 0.1 \\ \hline
          1 & 1 & 1 & 1 & 0.1
        \end{tabular}
        \hspace*{2cm}
        \begin{tabular}{ c  c  c  c}
          $x$ & $y$ & $z$ & $w$  \\
          \hline \hline
          0 & 0 & 0 & 0 \\ \hline
          0 & 0 & 0 & 1 \\ \hline
          0 & 1 & 0 & 0 \\ \hline
          1 & 0 & 0 & 0 \\ \hline
          1 & 1 & 0 & 0 \\ \hline
          1 & 1 & 1 & 0  
        \end{tabular}\\[2mm]
      {\hspace*{5mm}$\psi_1 \not\probEnt \ph_1$ \hspace*{30mm}  $\psi_2 \not\relEnt \ph_2$}
      %  \end{minipage}
      \end{center}
      
\medskip
To prove that $\psi \probEnt \phi$ does not necessarily imply that $\psi \relEnt \ph$, we use the fact from
 \cite[equation (A3), p.~15]{Studeny89} that   $\psi_2 \probEnt \ph_2$, for
$\psi_2 = x \perp_{yz} y \land z \perp_{x} w \land z \perp_{y} w \land x \perp y$ and $\ph_2 = z \perp w$ 
(proven in a different context with techniques employing measure theory
 and information theory). On the other side, a team 
proving that $\psi_2 \not\relEnt \ph_2$ is given above (to the right).
 \end{proof}

We remark that  if $\ph,\psi \in \FO(\perp)$ are conjunctions of
  \emph{simple independence atoms}, then $\psi \relEnt \ph \Iff \psi
  \probEnt \ph$. This follows by observing that the proof calculi provided in
  the literature, see \cite{GallianiVaa13} for the relational and \cite{GeigerPazPea91} for
  the probabilistic case, coincide.

\section{Reasoning about Hidden-Variable Properties in Independence Logic}\label{sec:propModels}
We now define and investigate properties of hidden-variable models by means of
dependence and independence logic.

\subsection{Logical Definitions of Properties of Hidden-Variable Models}\label{subsec:defProperties}
In the literature, many properties of hidden-variable and empirical models are
investigated. Given that these models can be seen as (relational or probabilistic) teams,
we can define properties of such models by logical formulae that are evaluated over such teams.
Actually, since natural properties are defined for models of arbitrary arity, their logical
definition is not given by a single formula, but by a uniform family of such formulae, one for each arity.
That is, for a property $P$ of probabilstic hidden-variable models, we present a family of formulae
$\psi_n\in\FO(\perp)$, with free variables in $\VarH$ such that for any
probabilistic team $\XX$ that represents a hidden-variable model $h_{\XX}$ of arity $n$, we have that
$h_{\XX}$ has property $P$ if, and only if, $\XX\models\psi_n$.
Similarly, for empirical models, and for the relational case.

\medskip
Actually, it is a fundamental observation that for all elementary properties, 
syntactically identical formulae work for the relational and the probabilistic case
simultaneously. Also, our formulae are in fact very simple; essentially just
conjunctions of atoms. This is quite elegant and another indication that team
semantics is very well suited to express hidden-variable phenomena.

\subparagraph*{Properties of empirical models}\label{def:empProperties}

We start with presenting three fundamental properties
that empirical models may or may not have.
Note that in our formulae, the role of $n$ is hidden in notations such as
$\bar{m}$ for $m_1\dots m_n$ and $\bar o$ for $o_1\dots,o_n$.
The following notation is helpful: For a tuple  $\bar z= (z_1\dots z_n)$, let
  $\bar{z}_{-i} = (z_1\dots z_{i-1}z_{i+1}\dots z_n)$.

\begin{description}
\item[Weak Determinism: ] An empirical model is weakly deterministic if the combined
    measurements $\bar{m}$ in all components deterministically determine the
    outcome $\bar{o}$ of all components. However, this property does not forbid that the
    outcome of the $i$-th component depends on the measurement choice in
    components other than $i$. This property is defined by
      \[ \WeakDet^e_n := \dep(\bar{m},\bar{o}).\]
\item[Strong Determinism: ] If we strengthen Weak Determinism and require that the
  outcome of the $i$-th component is uniquely determined by the measurement
  choice in that component, we obtain the notion of Strong Determinism, defined by
\[ \StrongDet^e_n := \bigwedge_{i=1}^n \dep(m_i,o_i).\]
\item[No-Signalling: ] This property is a bit more complicated. Roughly speaking, the idea is to
  formalize that no information can be transmitted to component $i$ by choice of
  measurement in components other than $i$. Thus, the outcome of the $i$-th
  component (which for example Alice receives) may, conditional on her choice of
  measurement, not depend on other measurements. An illustration of this
  point is given in \cref{exampleNS} below. No-Signalling is closely
  related to the famous No-Communication Theorem in Quantum Mechanics. Formally,
\[ \NoSig^e_n := \bigwedge_{i=1}^n   o_i \perp_{m_i} \bar{m}_{-i}.\]
It may not be completely obvious why the formulae $\NoSig^e_n$ have the intended
meaning. Intuitively, an independence atom $\bar x \perp_{\bar z} \bar y$
expresses that conditional on knowing $\bar z$, getting information about $\bar
x$ does not give additional information about $\bar y$. Thus, $\NoSig^e_n$
states that if $m_i$ is known, the other measurements $\bar{m}_{-i}$ do not give
additional information about $o_i$.
\end{description}

\begin{example}\label{exampleNS}
  To illustrate how a violation of No-Signalling allows communication, consider
  the empirical model defined by the following team $X$:
  \begin{center}
        \begin{tabular}{ c  c | c  c}
          $m_1$ & $m_2$ & $o_1$ & $o_2$ \\
          \hline \hline
          $a$ & $b_1$ & $+$ & $-$ \\ \hline
          $a$ & $b_2$ & $-$ & $-$ 
        \end{tabular}
 \end{center} 
 In this case, $X \not\models o_1
  \perp_{m_1} m_2$ which allows Bob to instantly transmit information to Alice
  by choice of his measurement. If Bob makes measurement $b_1$, Alice receives a
  $+$ in her experiment; if he chooses $b_2$, Alice receives a $-$. In this configuration, Bob
  can send a full bit of information to Alice by means of his measurement
  choice; No-Signalling is rightfully violated. Also note that this example satisfies
  Weak Determinism but not Strong Determinism.
\end{example}

It is a relevant feature of team semantics, often called locality,  that the
meaning of a formula depends only on those variables that occur free in it. 
Thus, even if we evaluate a formula over a hidden-variable model, we can
express properties of the underlying empirical model (if the variable $\lambda$
is not used in the formula).
Properties of the underlying empirical model can therefore be cast as
properties of the hidden-variable model itself. This is made precise in the
following proposition.

\begin{prop}\label{prop:probFormulaeEmp}
  Let $\XX$ be a probabilistic team over $\VarH$, and let $\psi^e\in\FO(\perp)$ be a
  formula over $\VarE$ that formalizes a property $P$ of probabilistic empirical models. Then
  $\XX \models \psi^e$ if, and only if, $e_{\XX}$ has property $P$.
\end{prop}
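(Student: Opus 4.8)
The plan is to reduce the claim to the locality property of team semantics combined with Lemma~\ref{prop:relEmpProject}. Recall that $\psi^e$ is a formula whose free variables all lie in $\VarE$, and that it formalizes the property $P$ in the sense that for any probabilistic team $\YY$ over $\VarE$ we have $\YY\models\psi^e$ if, and only if, the empirical model represented by $\YY$ has property $P$. The key observation is that evaluating $\psi^e$ over the larger team $\XX$ (over $\VarH$) should give the same answer as evaluating it over the restriction $\XX\upharpoonright\VarE$, precisely because the variable $\lambda$ does not occur free in $\psi^e$.

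First I would invoke the \emph{locality} principle mentioned just before the proposition: the satisfaction of a formula by a team depends only on the values assigned to the free variables of that formula. Concretely, for any formula $\psi$ whose free variables are contained in a set $V\subseteq\dom(\XX)$, one has $\XX\models\psi$ if, and only if, $\XX\upharpoonright V\models\psi$. Applying this with $V=\VarE$ and $\psi=\psi^e$ yields
\[
  \XX\models\psi^e \quad\Longleftrightarrow\quad \XX\upharpoonright\VarE\models\psi^e.
\]
Since this locality statement is the only nontrivial ingredient, I expect it to be the main point to justify; in the relational case it is a standard fact about team semantics, and in the probabilistic case it follows from the fact that restricting a probabilistic team to $V$ marginalizes the distribution onto the $V$-coordinates, and the semantics of dependence and independence atoms as well as the connectives refer only to probabilities of events concerning free variables. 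One may either cite this as a known locality property or verify it by a short induction on $\psi^e$, checking that first-order literals, dependence atoms, independence atoms, and each connective ($\land$, $\lor$, $\exists$, $\forall$) are insensitive to coordinates outside $\free(\psi^e)$.

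Next I would apply Lemma~\ref{prop:relEmpProject}, which states that $\XX\upharpoonright\VarE$ is exactly the team representation of the induced empirical model $e_{\XX}$. Combining this with the defining property of $\psi^e$ gives
\[
  \XX\upharpoonright\VarE\models\psi^e \quad\Longleftrightarrow\quad e_{\XX}\text{ has property }P.
\]
Chaining the two equivalences completes the argument: $\XX\models\psi^e$ if, and only if, $e_{\XX}$ has property $P$.

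The hard part, if any, is making the locality step fully rigorous in the probabilistic setting, since unlike the purely relational case the probabilities must be shown to agree after marginalization; but because $\lambda$ is a single extra variable and marginalization simply sums it out, the conditional and joint probabilities appearing in the semantics of the independence and dependence atoms over $\VarE$ are unchanged. Everything else is a direct substitution using the already-established Lemma~\ref{prop:relEmpProject}, so the proof reduces to assembling these two equivalences.
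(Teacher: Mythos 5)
Your proposal matches the paper's own proof exactly: it invokes the locality property of team semantics to reduce $\XX \models \psi^e$ to $\XX \upharpoonright \VarE \models \psi^e$, and then applies \cref{prop:relEmpProject} to identify $\XX \upharpoonright \VarE$ with the team representation of $e_{\XX}$. Your additional remarks on verifying locality in the probabilistic setting by induction are a sound elaboration of a step the paper treats as a known feature of team semantics.
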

\begin{proof}
By locality, $\XX \models \psi^e$ is equivalent to $\XX
  \upharpoonright \VarE \models \psi^e$. By \cref{prop:relEmpProject}, $\XX
  \upharpoonright \VarE$ is the team representation of $e_\XX$.
  \end{proof}

By the same argument, the analogous statement for relational teams and models also holds.

\subparagraph*{Properties of hidden-variable models}\label{def:hidProperties}

We now show, in a similar fashion, that common properties of hidden-variable models
are definable in dependence and independence logic. The reader may want to compare 
our formula with the definitions in \cite{Abramsky13, BrandenburgerYan08}. 
Except in the case of locality (which we shall discuss separately),
it is straightforward to see that our definitions capture the properties from
the literature. We feel that our definitions by formulae of team semantics are
more compact and precise, and they highlight the essential features better than an 
explicit unrolling of the semantic of dependence and independence atoms in every
instance.

\begin{description}
\item[Weak Determinism: ] For hidden-variable models, Weak Determinism means 
  that for every fixed value
  of a hidden-variable the chosen measurements in all components determine the
  outcomes. It is essential that the outcomes may also depend on the
  hidden-variable, which allows explaining non-deterministic empirical models by
  deterministic hidden-variable models with more fine-grained internal states. Formally
\[ \WeakDet^h_n := \dep(\bar{m}\lambda,\bar{o}).\]
\item[Strong Determinism: ] Strong Determinism strengthens Weak Determinism by requiring that for
  every fixed value of the hidden-variable the measurement in the $i$-th component 
  determines the outcome in the $i$-th component. Formally,
\[ \StrongDet^h_n := \bigwedge_{i=1}^n \dep(m_i\lambda,o_i).\]
\item[Single-Valuedness: ] A hidden-variable model is single valued if the hidden-variable set $\Lambda$
has only one element. Such a model is essentially just an empirical model, because a single-valued
variable does not provide any additional freedom.
\[ \SingVal^h_n := \dep(-,\lambda).\]
\item[$\lambda$-Independence: ] This formalizes the idea that the measurement process and the
  hidden-variable should be independent. It is related to the intuition that
  the physical reality we measure shall be independent of the experimenters'
  choices.
  A particularly pathological counterexample to $\lambda$-Independence would be
  a system where the hidden-variable of the system uniquely determines which
  measurements will be done by the experimenters.
\[ \LInd^h_n := \bar{m} \perp \lambda.\]
\item[Outcome Independence: ] This property states that outcomes of measurements
  in different components shall be independent from each other when conditioning on
  $\bar{m}\lambda$. This means that when we perform the same measurements in the same
  hidden states, the outcomes of the various components shall not provide
  information about each other. 
  Note that violation of this property is related to the phenomenon of quantum
  entanglement where we obtain correlations between outcomes even in case of
  spatially separated particles. 
\[ \OutInd^h_n := \bigwedge_{i=1}^n    o_i  \perp_{\bar{m}\lambda} \bar{o}_{-i}.\] 
\item[Parameter Independence: ] This is essentially the hidden-variable analogue of
  No-Signalling. For fixed $m_i\lambda$, the outcome $o_i$ of component $i$ shall be
  independent of the measurements taken in the other components:
\[\ParInd^h_n := \bigwedge_{i=1}^n o_i \perp_{m_i\lambda} \bar{m}_{-i}.\] 
\item[Locality: ] A hidden-variable model is local if all components are independent of each
  other. Roughly speaking, to understand a system satisfying Locality, one only
  needs to understand every component and piece them together independently.
  Jarrett has shown that this is adequately expressed by the conjunction of Parameter
  Independence and Outcome Independence \cite{Jarrett84}:  
\[ \Loc^h_n:= \OutInd^h_n \land \ParInd^h_n.\]
Heuristically speaking, Outcome Independence gives independence of the $i$-th component of the
other outcomes while Parameter Independence gives independence of the other measurements. Put
together, they give full independence  the other components. This argument is made
precise below.
\end{description}

\begin{example}
  We adapt \cref{exampleNS} to a
  hidden-variable model satisfying Single-Valuedness: 
        \begin{center}
        \begin{tabular}{ c  c | c  c | c}
          $m_1$ & $m_2$ & $o_1$ & $o_2$ &$\lambda$ \\
          \hline \hline
          $a$ & $b_1$ & $+$ & $-$ & $\lambda_1$\\ \hline
          $a$ & $b_2$ & $-$ & $-$ & $\lambda_1$
        \end{tabular}
        \end{center}
The resulting model does satisfy neither Parameter
  Independence nor Strong Determinism but satisfies Weak Determinism and Outcome
  Independence. 
 \end{example}

An intuitive and precise semantic characterization of Locality is shown in the
following two lemmata for the relational and probabilistic case respectively.

\begin{lemma}\label{lem:locSemanticsRel}
  Let $X$ be a relational team of arity $n$ over $\VarH$. $X \models \Loc^h_n$
if, and only if,  for every tuple $(\bar{a},c) \in X(\bar{m},\lambda)$ and for all 
$b_1,\dots,b_n$ with $b_i \in X(o_i)$ the following condition (*) holds:
If there is an assignment $s_i \in X$ with $s_i(m_i,o_i,\lambda) = (a_i,b_i,c)$
for all $i$,
then there is an $s \in X$ with $s(\bar{m},\bar{o},\lambda) = (\bar{a},\bar{b},\bar{c})$.
\end{lemma}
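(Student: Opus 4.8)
The plan is to prove the characterization of $\Loc^h_n = \OutInd^h_n \land \ParInd^h_n$ by unfolding the semantics of the conjoined independence atoms and showing they together yield exactly the closure condition (*). First I would set up the forward direction: assuming $X \models \Loc^h_n$, fix $(\bar{a},c) \in X(\bar{m},\lambda)$ and outcome values $b_1,\dots,b_n$ with $b_i \in X(o_i)$, and assume witnesses $s_1,\dots,s_n \in X$ with $s_i(m_i,o_i,\lambda) = (a_i,b_i,c)$. The goal is to assemble a single $s$ realizing $(\bar{a},\bar{b},c)$. The natural strategy is an induction on the number of components, gluing the witnesses one at a time. At each stage I would use Parameter Independence to bring a witness that already agrees with $\bar a$ and $c$ on its measurement-outcome pair into agreement with the full measurement tuple $\bar{m} = \bar a$ (this is what $o_i \perp_{m_i\lambda} \bar{m}_{-i}$ buys us: for fixed $m_i\lambda$ we may freely combine the outcome $o_i$ with any choice of the other measurements), and then use Outcome Independence ($o_i \perp_{\bar{m}\lambda} \bar{o}_{-i}$) to merge the already-assembled outcomes $b_1,\dots,b_{i-1}$ with the new outcome $b_i$ while holding $\bar m$ and $\lambda$ fixed.

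More concretely, I would maintain the invariant that after processing the first $i$ components there is an assignment $t_i \in X$ with $t_i(\bar m,\lambda) = (\bar a, c)$ and $t_i(o_j) = b_j$ for all $j \le i$. To start, Parameter Independence applied to $s_1$ lifts it to an assignment agreeing with $\bar a$ on all measurements, with $o_1 = b_1$. For the inductive step, I first apply Parameter Independence to $s_{i+1}$ to obtain an assignment $u \in X$ with $u(\bar m,\lambda) = (\bar a,c)$ and $u(o_{i+1}) = b_{i+1}$; then I apply Outcome Independence with conditioning on $\bar m \lambda$ to $t_i$ and $u$, extracting a common witness that inherits $o_1,\dots,o_i = b_1,\dots,b_i$ from $t_i$ and $o_{i+1} = b_{i+1}$ from $u$, yielding $t_{i+1}$. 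After $n$ steps, $t_n$ is the desired $s$. The converse direction is comparatively routine: assuming (*), I would verify each conjunct of $\Loc^h_n$ separately by checking the definition of the relevant independence atom, observing that the combining witnesses demanded by $\OutInd^h_n$ and $\ParInd^h_n$ are special instances of the closure guaranteed by (*) when all but the relevant measurement or outcome coordinates are matched.

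The main obstacle I anticipate is the forward direction, specifically ensuring that the single-coordinate independence atoms genuinely compose into the full multi-coordinate gluing without hidden gaps. The delicate point is matching conditioning sets: Parameter Independence conditions only on $m_i\lambda$ (not the full $\bar m\lambda$), so I must be careful that the witness produced still lies in $X$ and that its remaining measurement coordinates can be set to $\bar{a}_{-i}$, which requires that $(\bar a, c)$ is itself realized in the team — guaranteed by the hypothesis $(\bar a,c) \in X(\bar m,\lambda)$. Likewise, Outcome Independence conditions on the \emph{full} $\bar m\lambda$, so before applying it I must have already fixed all measurements to $\bar a$; this is precisely why the Parameter-Independence lifting step must precede the Outcome-Independence merging step at each stage. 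Keeping this ordering straight throughout the induction, and confirming that the conditioning value $(\bar a, c)$ is common to all assignments being combined, is where the real care lies.
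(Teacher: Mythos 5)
Your proof is correct and follows essentially the same route as the paper: use the hypothesis $(\bar a,c)\in X(\bar m,\lambda)$ to get a base assignment realizing the full measurement tuple, lift each witness $s_i$ via Parameter Independence to agree with $(\bar a,c)$ on all of $\bar m\lambda$, then merge the outcomes by an induction on Outcome Independence, with the converse read off as direct instances of condition (*). The only difference from the paper is cosmetic --- the paper performs all Parameter-Independence liftings up front before the Outcome-Independence induction, whereas you interleave them --- and you correctly flag the one delicate point (the lifting needs the conditioning pair $(m_i,\lambda)$ together with a team member realizing $(\bar a,c)$).
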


\begin{proof}
Assume that $X \models  \Loc^h_n$, i.e., $X\models \OutInd^h_n \land \ParInd^h_n$. 
Let $(\bar{a},c) \in X(\bar{m},\lambda)$ and $b_i \in X(b_i)$ for every
  $i \leq n$. Let $s_i \in X$ with $s_i(m_i,o_i,\lambda) = (a_i,b_i,c)$. 
  We need to construct some $t \in X$ with $t(\bar{m},\bar{o},\lambda) = (\bar{a},\bar{b},c)$.
  Since $(\bar a, c) \in X(\bar{m},\lambda)$, there exists $\tilde{s} \in X$ with
  $\tilde{s}(\bar{m},\lambda) = (\bar{a},c)$. By applying Parameter Independence on $s_i$ and
  $\tilde{s}$ we get $\tilde{s_i} \in X$ which fulfills
  $\tilde{s_i}(\bar{m},\lambda) = (\bar{a},c)$ and $\tilde{s_i}(o_i) = b_i$. By
  inductively applying Outcome Indpendence, we can now construct a sequence $t_1,\dots,t_n$ with
  $t_i(\bar{m},\lambda) = (\bar{a},c)$ and $t_i(o_1,\dots, o_i) = (b_1,\dots, b_i)$.
  Choosing $t := t_n$ completes the argument for the first implication.

  \medskip For the converse, suppose that $X$ satisfies condition (*). We prove
  that $X \models \OutInd^h_n$. Let $s_1,s_2 \in X$ with $s_1(\bar{m},\lambda) =
  s_2(\bar{m},\lambda)$ and let $i \leq n$. Condition (*) immediately gives that
  $s_3$ given by $s_3(\bar{m},\lambda) = s_1(\bar{m},\lambda)$, $s_3(o_i) =
  s_1(o_i)$, and $s_3(\bar{o}_{-i}) = s_2(\bar{o}_{-i})$ is indeed in $X$; this
  proves Outcome Independence. Parameter Independence is shown similarly.
\end{proof}

Condition (*) in  \cref{lem:locSemanticsRel} is a common alternative definition of
  Locality. In words it essentially states that the question whether a
  measurement-outcome pair $(\bar{a},\bar{b})$ is possible for a fixed value of
  $\lambda$ reduces to whether the measurement-outcome pair $(a_i,b_i)$ is
  possible in every component $i$. However, a slight technical complication is
  that we require this to only hold for measurements $\bar{a}$ that are possible
  for a fixed value of $\lambda$: We explicitly do not require that $(a_i,c) \in
  X(m_i,\lambda)$ for all $i$ implies $(\bar a,c) \in X(\bar{m},\lambda)$. The latter
  condition is called Measurement Locality in \cite{Abramsky13} and rather
  natural but not entailed in the usual definition of Locality.

\begin{example}
  Consider the hidden-variable model given by the following team:
         \begin{center}
        \begin{tabular}{ c  c | c  c | c}
          $m_1$ & $m_2$ & $o_1$ & $o_2$ & $\lambda$ \\
          \hline \hline
          $a$ & $b$ & 1 & 0  & $\lambda_1$ \\ 
          $a$ & $b$ & 1 & 1 & $\lambda_1$ \\ 
          $a$ & $c$ & 0 & 1 & $\lambda_1$ \\ 
          $a$ & $c$ & 0 & 0 & $\lambda_1$ \\ \hline
          $a$ & $b$ & 0 & 0 & $\lambda_2$ \\ 
          $a$ & $c$ & 0 & 1 & $\lambda_2$
        \end{tabular}
\end{center}
This model does not satisfy Locality: For the hidden-variable $\lambda_1$, it is
  possible that $a$ results in 0 and that $b$ results in $1$. However,
  $(ab,01,\lambda_1)$ is not an element of the team even though $(ab,\lambda_1)
  \in X(\bar{m},\lambda)$. Thus, Locality is falsified according to the previous lemma. Note
  that, in this instance, Parameter Independence is violated since the choice of
  measurement in the second component influences the possible results in the
  first component. However, Outcome Independence holds.
  Also, this team satisfies $\lambda$-Independence since all possible
  measurements, i.e.~$ab$ and $ac$, appear for both values of the
  hidden-variable.
\end{example}

In the probabilistic case, Locality means that the probability factors over the
various components. This is in a sense a quantitative analogue of the
qualitative statement from the previous lemma.

\begin{lemma}
  Let $\XX = (X,\Pr)$ be a probabilistic team over $\VarH$. Then  $\XX \models \Loc^h_n$
if, and only if the following condition (**) holds: For all $(\bar{a},c) \in X(\bar{m},\lambda)$ and 
all $b_1,\dots,b_n$ with $b_i \in X(o_i)$ 
      \[ \Pr(\bar{o} = \bar{b} \mid \bar{m} = \bar{a}, \lambda = c) =\prod\nolimits_{i=1}^n
        \Pr(o_i = b_i \mid m_i = a_i, \lambda = c).\]
\end{lemma}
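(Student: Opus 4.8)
The plan is to unfold the two conjuncts of $\Loc^h_n = \OutInd^h_n \land \ParInd^h_n$ into their probabilistic meaning and then pass between them and (**) by elementary manipulations of conditional probabilities. Throughout, every conditioning event of the form $\bar m = \bar a,\ \lambda = c$ (or a sub-event thereof) has positive probability precisely because we range over $(\bar a, c) \in X(\bar m,\lambda)$, so all conditionals below are well defined. The conceptual core is a single auxiliary fact: for a fixed value $(\bar a, c)$ of $\bar m\lambda$, the conjunction $\bigwedge_{i} o_i \perp_{\bar m\lambda}\bar o_{-i}$ (that is, $\OutInd^h_n$) is equivalent to the full factorisation
\[ \Pr(\bar o = \bar b \mid \bar m = \bar a, \lambda = c) = \prod_{i=1}^n \Pr(o_i = b_i \mid \bar m = \bar a, \lambda = c). \]
One direction is immediate: marginalising the product over $o_i$ gives $\Pr(\bar o_{-i}\mid \bar m\lambda)=\prod_{k\neq i}\Pr(o_k\mid\bar m\lambda)$, whence each atom $o_i \perp_{\bar m\lambda}\bar o_{-i}$. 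For the converse I would peel off one factor at a time, using that $o_1 \perp_{\bar m\lambda}\bar o_{-1}$ already factors off $o_1$, and that marginalising each remaining atom over $o_1$ restores the same ``each-against-the-rest'' pattern for $o_2,\dots,o_n$, so that induction on $n$ applies.

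For the forward direction I would assume $\XX \models \Loc^h_n$. Parameter Independence, i.e.\ $o_i \perp_{m_i\lambda}\bar m_{-i}$, is exactly the statement
\[ \Pr(o_i = b_i \mid \bar m = \bar a, \lambda = c) = \Pr(o_i = b_i \mid m_i = a_i, \lambda = c) \]
for every $i$ (for combinations $(\bar a,c)\notin X(\bar m,\lambda)$ the conditioning event is null and the atom holds vacuously). Outcome Independence supplies, by the auxiliary fact, the factorisation of $\Pr(\bar o = \bar b \mid \bar m = \bar a, \lambda = c)$ into $\prod_i \Pr(o_i = b_i \mid \bar m = \bar a, \lambda = c)$. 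Substituting the Parameter-Independence identity into each factor turns this product into $\prod_i \Pr(o_i = b_i \mid m_i = a_i, \lambda = c)$, which is exactly (**).

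For the converse I would assume (**) and first recover Parameter Independence by marginalisation: summing both sides over all $b_j \in X(o_j)$ with $j \neq i$ collapses the left-hand side to $\Pr(o_i = b_i \mid \bar m = \bar a, \lambda = c)$, while on the right each factor $\sum_{b_j \in X(o_j)} \Pr(o_j = b_j \mid m_j = a_j, \lambda = c)$ equals $1$ (values outside the support contribute $0$), leaving $\Pr(o_i = b_i \mid m_i = a_i, \lambda = c)$; this is Parameter Independence. Feeding this identity back into (**) rewrites its right-hand side as $\prod_i \Pr(o_i = b_i \mid \bar m = \bar a, \lambda = c)$, the full conditional factorisation, which by the auxiliary fact yields all atoms $o_i \perp_{\bar m\lambda}\bar o_{-i}$, hence Outcome Independence. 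I expect the main obstacle to be bookkeeping rather than depth: carefully matching the $\perp$-atom definition (a factorisation quantified over all triples from the relevant $X(\cdot)$, with null combinations handled trivially) against the conditional-equality form used above, and justifying the ``each-independent-of-the-rest $\Leftrightarrow$ full factorisation'' step cleanly by the inductive peeling argument.
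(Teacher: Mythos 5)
Your proof is correct and takes essentially the same route as the paper's: your forward direction peels off one outcome at a time (your inductive ``auxiliary fact'' is the paper's induction over $k$, with Parameter Independence applied at the end rather than interleaved), and your converse --- recovering $\ParInd^h_n$ by marginalising (**) over $\bar{b}_{-i}$ and then feeding it back to obtain the factorisation behind $\OutInd^h_n$ --- is exactly the paper's argument. The only difference is presentational: you isolate the equivalence of $\OutInd^h_n$ with the full conditional factorisation as a standalone lemma, which the paper handles inline.
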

\begin{proof}
Assume that $\XX \models  \Loc^h_n$.
Let $(\bar a, c) \in X(\bar{m},\lambda)$ and $b_i \in X(o_i)$ for
  $i \leq n$. To simplify notation, let $\bar o_k:=o_1\dots o_k$ and $\bar b_k:=b_1\dots b_k$
  We apply induction over $k$ to obtain
  \begin{align*}
  &\quad\Pr(\bar o_{k+1} = \bar b_{k+1} \mid \bar{m} = \bar a, \lambda = c) \\
  &\qquad= \quad \Pr(o_{k+1} = b_{k+1} \mid \bar o_k = \bar b_k, \bar{m} = \bar{a}, \lambda = c)
  \cdot \Pr(\bar o_k = \bar b_k \mid \bar{m} = \bar{a}, \lambda = c) \\
  &\qquad= \quad \Pr(o_{k+1} = b_{k+1} \mid \bar{m} = \bar{a}, \lambda = c) \cdot \prod\nolimits_{i=1}^k \Pr(o_i = b_i \mid \bar{m} = \bar{a}, \lambda = c) \\
 &\qquad= \quad \prod\nolimits_{i=1}^{k+1} \Pr(o_i = b_i \mid \bar{m} = \bar{a}, \lambda = c) = 
  \prod\nolimits_{i=1}^{k+1} \Pr(o_i = b_i \mid m_i = a_i, \lambda = c),
\end{align*}
  %\begin{align*}
  %&\Pr(\bar o_{k+1} = \bar b_{k+1} \mid \bar{m} = \bar a, \lambda = c) \\
  %&\quad = \quad \Pr(o_{k+1} = b_{k+1} \mid \bar o_k = \bar b_k, \bar{m} = \bar{a}, \lambda = c)
  %\cdot \Pr(\bar o_k = \bar b_k \mid \bar{m} = \bar{a}, \lambda = c) \\
  %&\quad=\quad \Pr(o_{k+1} = b_{k+1} \mid \bar{m} = \bar{a}, \lambda = c) \cdot \prod\nolimits_{i=1}^k \Pr(o_i = b_i \mid \bar{m} = \bar{a}, \lambda = c) \\
  %&\quad = \quad \prod\nolimits_{i=1}^{k+1} \Pr(o_i = b_i \mid \bar{m} = \bar{a}, \lambda = c) = 
  %\prod\nolimits_{i=1}^{k+1} \Pr(o_i = b_i \mid m_i = a_i, \lambda = c),
%\end{align*}
by applying Outcome Independence, Parameter Independence, and the induction hypothesis.
Setting $k+1 = n$ completes the argument.

\medskip
For the converse, let $\XX$ satisfy condition (**). We first show
that $\XX \models \ParInd^h_n$:
%\begin{align*}
%&\Pr(o_i = b_i \mid \bar{m} = \bar{a},\lambda = c) \\
%  = \quad &\sum\nolimits_{\bar{b}_{-i}} \Pr(\bar{o} = \bar{b}, \bar{m} = \bar{a},\lambda = c) \\
%=\quad &\sum\nolimits_{\bar{b}_{-i}} \prod\nolimits_{j=1}^n \Pr(o_j = b_j \mid m_j = a_j, \lambda = c) \\
%  =\quad &\Pr(o_i = b_i \mid m_i = a_i, \lambda = c)*\prod\nolimits_{j \neq i} \sum\nolimits_{b_j} \Pr(o_j = b_j \mid m_j = a_j, \lambda = c) \\
%  =\quad &\Pr(o_i = b_i \mid m_i = a_i, \lambda = c). 
%\intertext{Outcome Independence is then shown by the following equation for all $i$:}
%&\Pr(o_i = b_i, \bar{o}_{-i} \mid \bar{m} = \bar{a}, \lambda = c) \\
%= \quad &\prod\nolimits_{j=1}^n \Pr(o_j = b_j \mid m_j = a_j, \lambda = c)\\
%=\quad &\Pr(o_i = b_i \mid m_i = a_i, \lambda = c) \prod\nolimits_{j \neq i}\Pr(o_j = b_j \mid m_j = a_j, \lambda = c) \\
 %=\quad &\Pr(o_i = b_i \mid m_i = a_i, \lambda = c) * \Pr(\bar{o}_{-i} = \bar{b}_{-i} \mid \bar{m}_{-i} = \bar{a}_{-i}, \lambda = c) \\
 %=\quad &\Pr(o_i = b_i \mid \bar{m} = \bar{a}, \lambda = c) * \Pr(\bar{o}_{-i} = \bar{b}_{-i} \mid \bar{m} = \bar{a}, \lambda = c),
%\end{align*}
\begin{align*}
  &\Pr(o_i = b_i \mid \bar{m} = \bar{a},\lambda = c) \quad=\quad 
  \sum\nolimits_{\bar{b}_{-i}} \Pr(\bar{o} = \bar{b}, \bar{m} = \bar{a},\lambda = c) \\
  &\quad=\quad \sum\nolimits_{\bar{b}_{-i}} \prod\nolimits_{j=1}^n \Pr(o_j = b_j \mid m_j = a_j, \lambda = c) \\
  &\quad=\quad\Pr(o_i = b_i \mid m_i = a_i, \lambda = c)*\prod\nolimits_{j \neq i} \sum\nolimits_{b_j} \Pr(o_j = b_j \mid m_j = a_j, \lambda = c) \\
  &\quad=\quad\Pr(o_i = b_i \mid m_i = a_i, \lambda = c). 
\end{align*}
Outcome Independence is then shown by the following equation for all $i$:
\begin{align*}
  &\Pr(o_i = b_i, \bar{o}_{-i} = \bar{b}_{-i} \mid \bar{m} = \bar{a}, \lambda = c)\quad  = \quad
  \prod\nolimits_{j=1}^n \Pr(o_j = b_j \mid m_j = a_j, \lambda = c)\\
  &\quad=\quad\Pr(o_i = b_i \mid m_i = a_i, \lambda = c) * \prod\nolimits_{j \neq i}\Pr(o_j = b_j \mid m_j = a_j, \lambda = c) \\
  &\quad=\quad\Pr(o_i = b_i \mid m_i = a_i, \lambda = c) * \Pr(\bar{o}_{-i} = \bar{b}_{-i} \mid \bar{m}_{-i} = \bar{a}_{-i}, \lambda = c) \\
  &\quad=\quad\Pr(o_i = b_i \mid \bar{m} = \bar{a}, \lambda = c) * \Pr(\bar{o}_{-i} = \bar{b}_{-i} \mid \bar{m} = \bar{a}, \lambda = c),
\end{align*}
where the last equality follows by Parameter Independence.
\end{proof}

\subparagraph*{Unifying Relational and Probabilistic Properties.}
An advantage of our framework is that it gives an elegant logical argument why
properties of probabilistic models carry over to the underlying relational
models. This behaviour is, by \cref{theorem:compareSemantics}, an immediate
consequence of the fact that the properties in question are defined in both
cases by syntactically the same
formulae from independence logic. This novel and purely logical perspective complements earlier work on
unifying relational and probabilistic properties employing algebraic and
category-theoretical tools \cite{AbramskyBra11}.

\begin{prop}\label{prop:probVSrelProperty}
  Let $\psi \in \FO(\perp)$ capture a property $P$ of both probabilistic and
  relational models. If a probabilistic model satisfies $P$, its underlying
  relational model also satisfies $P$. For $\psi \in \FO(\dep)$, the converse also holds.
\end{prop}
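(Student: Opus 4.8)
The plan is to derive \cref{prop:probVSrelProperty} as an almost immediate consequence of \cref{theorem:compareSemantics}, using the hypothesis that the \emph{same} formula $\psi$ captures $P$ in both the probabilistic and the relational setting. First I would unwind the phrase ``captures a property $P$ of both probabilistic and relational models.'' Concretely, this means: for a probabilistic model $h_{\XX}$ (represented by a probabilistic team $\XX$) we have $h_{\XX}$ has property $P$ if and only if $\XX \models \psi$, and for a relational model $h_X$ (represented by the underlying team $X$) we have $h_X$ has property $P$ if and only if $X \models \psi$. The key structural fact, supplied by the compatibility results of \cref{sec:models} (in particular the first lemma of the subsection on compatibility), is that the relational model \emph{underlying} the probabilistic model $h_{\XX}$ is exactly $h_X$, where $X$ is the support team of $\XX$. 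So the statement about models translates directly into a statement about the satisfaction relations of $\XX$ and its underlying team $X$.

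Next I would invoke \cref{theorem:compareSemantics} directly. For the first claim, assume $\psi\in\FO(\perp)$ and that the probabilistic model satisfies $P$; then $\XX\models\psi$, and by the first half of \cref{theorem:compareSemantics} we get $X\models\psi$, whence the underlying relational model satisfies $P$. For the converse under the stronger hypothesis $\psi\in\FO(\dep)$, I would use the second half of \cref{theorem:compareSemantics}, which gives the equivalence $\XX\models\phi \Iff X\models\phi$ for $\FO(\dep)$-formulae; thus $X\models\psi$ yields $\XX\models\psi$, so if the underlying relational model has $P$, then so does the probabilistic one. In fact, for $\FO(\dep)$ the biconditional shows the two models have property $P$ simultaneously.

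I do not expect a genuine mathematical obstacle here, since all the semantic work is already packaged in \cref{theorem:compareSemantics}; the proof is really just a translation between the ``model-with-property-$P$'' language and the ``team-satisfies-$\psi$'' language. The only point requiring a little care is making the correspondence between $\XX$ and its underlying team $X$ match the correspondence between $h_{\XX}$ and its induced relational model $h_X$; this is precisely what the compatibility lemma provides, and it is worth stating explicitly so that the appeal to \cref{theorem:compareSemantics} is clearly licensed. Everything else is a one-line application of the implication (for $\FO(\perp)$) or the biconditional (for $\FO(\dep)$) in that theorem.
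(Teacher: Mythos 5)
Your proposal is correct and follows exactly the paper's route: the paper states this proposition as an immediate consequence of \cref{theorem:compareSemantics}, given that the same formula $\psi$ defines $P$ in both semantics and that the support team $X$ of $\XX$ represents the underlying relational model. Your explicit unwinding of the compatibility between $\XX \mapsto X$ and $h_{\XX} \mapsto h_X$ is the only detail the paper leaves tacit, and it is handled correctly.
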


In particular, this holds for every single property defined in \cref{subsec:defProperties}.
\begin{cor}
  If a probabilistic hidden-variable model satisfies any of the elementary properties
  defined above
  (e.g.~$\lambda$-Independence, Locality,  No-Signalling, \dots),
  its underlying relational model also satisfies the corresponding property.
  For the properties which rely only on dependence atoms (Strong-Determinism,
  Weak-Determinism, Single-Valuedness), the converse also holds.
\end{cor}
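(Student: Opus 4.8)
The plan is to obtain the corollary as a direct instantiation of \cref{prop:probVSrelProperty}, so that the whole argument reduces to checking, for each listed property, into which syntactic fragment its defining formula falls. First I would recall the semantic engine behind the proposition, namely \cref{theorem:compareSemantics}: for any $\psi\in\FO(\perp)$, probabilistic satisfaction $\XX\models\psi$ forces relational satisfaction $X\models\psi$ of the underlying support team, and for any $\phi\in\FO(\dep)$ the two are in fact equivalent. Since, as stressed in \cref{subsec:defProperties}, every property $P$ is captured by one and the same formula in both the probabilistic and the relational reading, \cref{prop:probVSrelProperty} packages exactly this implication at the level of models.

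Next I would run through the elementary properties and classify their defining formulae. Locality, Outcome Independence, Parameter Independence, No-Signalling and $\lambda$-Independence are all conjunctions of (conditional or simple) independence atoms, hence lie in $\FO(\perp)$; for these only the forward implication is claimed, and it is delivered by the first half of \cref{theorem:compareSemantics}. Weak Determinism $\dep(\bar m\lambda,\bar o)$, Strong Determinism $\bigwedge_{i=1}^{n}\dep(m_i\lambda,o_i)$ and Single-Valuedness $\dep(-,\lambda)$ are built solely from dependence atoms, so they belong to $\FO(\dep)$; for these the equivalence half of the theorem applies and yields the converse as well.

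The only places where a little care is needed are bookkeeping rather than genuine obstacles. One is that each property is really a family $\{\psi_n\}_n$ indexed by arity, but membership in $\FO(\perp)$ or $\FO(\dep)$ is uniform in $n$ (the conjunctions merely grow longer), so the theorem may be applied separately for every fixed $n$. The other is matching the model-level phrasing of the corollary with the team-level phrasing of \cref{theorem:compareSemantics}: here I would invoke the compatibility results of \cref{sec:models}, by which the underlying support team $X$ of $\XX$ represents precisely the possibilistic collapse of $h_{\XX}$, as summarised in \cref{fig:commutative}. Thus ``the underlying relational model has $P$'' is literally ``$X\models\psi$'', and the chain $h_{\XX}\text{ has }P \Iff \XX\models\psi \Rightarrow X\models\psi \Iff h_X\text{ has }P$ closes the argument; the middle arrow upgrades to $\Iff$ exactly in the dependence-atom case. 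Since the sole non-mechanical ingredient, the asymmetry between $\FO(\perp)$ and $\FO(\dep)$, is already isolated in \cref{theorem:compareSemantics}, I expect no real difficulty beyond this syntactic bookkeeping.
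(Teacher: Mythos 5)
Your proposal is correct and follows essentially the same route as the paper: the corollary is the direct instantiation of \cref{prop:probVSrelProperty} (itself an immediate consequence of \cref{theorem:compareSemantics} and the fact that each property is defined by syntactically the same formula in both semantics), combined with exactly the syntactic classification you give — the independence-based properties lie in $\FO(\perp)$, yielding only the forward implication, while Weak/Strong Determinism and Single-Valuedness lie in $\FO(\dep)$, yielding the equivalence. Your bookkeeping remarks (arity-indexed families of formulae and the identification of the support team with the possibilistic collapse via the compatibility results of \cref{sec:models}) merely make explicit what the paper leaves tacit.
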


The proposition applies of course much more generally than just to the handful of 
properties that we listed explicitly.
  Of course, every family of formulae in $\FO(\perp)$ induces properties of
  relational and probabilistic models in a way that satisfies the assumptions
  of \cref{prop:probVSrelProperty}. Since independence logic is a rather powerful logic --
  recall that it can express all NP properties of teams by \cite{Galliani12} -- this induces a very 
  large class of properties with the aforementioned
  relationship between probabilistic and relational models.

\subsection{Connections between Properties via Logical Entailment}\label{subsec:conProperties}

Implications of the form that ``all models with property $P$ also satisfy property
  $Q$'' have been proved for instance  in \cite{Abramsky13} and \cite{BrandenburgerYan08}. In our
framework, such statements are elegantly cast as entailments between 
formulae of independence logic $\FO(\perp)$. Thus, they can be proven directly on the logical level. 
This allows us to make use of the relationship between $\relEnt$ and $\probEnt$ to
develop the results for both cases in parallel.
\begin{theorem}\mbox{}
\begin{enumerate}
\item Single-Valuedness implies $\lambda$-Independence: 
\[ \SingVal^h_n \allEnt \LInd^h_n.\]
\item Weak Determinism implies Outcome Independence: 
\[ \WeakDet^h_n \allEnt \OutInd^h_n.\]
\item Strong Determinism corresponds to the conjunction of Weak Determinism and
  Parameter Independence: 
  \[ \StrongDet^h_n \equiv_{\text{all}} \WeakDet^h_n \land \ParInd^h_n.\]
\end{enumerate}
\end{theorem}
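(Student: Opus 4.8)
The plan is to prove each entailment in whichever of the two semantics is more convenient and then transfer it to the other for free via \cref{theorem:compareEntailment}, letting the syntactic shape of the formulae dictate the direction of transfer. Whenever the antecedent lies in $\FO(\dep)$ and the consequent in $\FO(\perp)$, part~(2) yields $\relEnt$ from $\probEnt$, so I establish only the probabilistic case; whenever the antecedent is in $\FO(\perp)$ and the consequent in $\FO(\dep)$, part~(3) yields $\probEnt$ from $\relEnt$, so I establish only the relational case; and when both sides are in $\FO(\dep)$, part~(1) makes the two notions coincide. This halves the work and means the bulk of each argument is a single, often trivial, semantic observation.

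For (1), $\SingVal^h_n=\dep(-,\lambda)\in\FO(\dep)$ and $\LInd^h_n=\bar m\perp\lambda\in\FO(\perp)$, so I prove the probabilistic case and invoke part~(2): Single-Valuedness forces $\lambda$ constant, i.e.\ $\Pr(\lambda=c)=1$ for the unique $c\in X(\lambda)$, whence $\Pr(\bar m=\bar a,\lambda=c)=\Pr(\bar m=\bar a)=\Pr(\bar m=\bar a)\cdot\Pr(\lambda=c)$, which is exactly $\bar m\perp\lambda$. For (2), $\WeakDet^h_n=\dep(\bar m\lambda,\bar o)\in\FO(\dep)$ and $\OutInd^h_n\in\FO(\perp)$, so again I prove the probabilistic case and use part~(2): Weak Determinism makes $\bar o$ a function of $\bar m\lambda$, so conditioning on a fixed value $(\bar a,c)$ turns both $o_i$ and $\bar o_{-i}$ into point masses, and point masses factor, giving $o_i\perp_{\bar m\lambda}\bar o_{-i}$ for each $i$. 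The common slogan is that conditioning on something that determines a variable renders that variable's conditional distribution deterministic, and a deterministic variable is stochastically independent of everything.

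For the equivalence (3) I split into two entailments. The direction $\StrongDet^h_n\allEnt\WeakDet^h_n\land\ParInd^h_n$ splits again into $\StrongDet^h_n\allEnt\WeakDet^h_n$, where both formulae are in $\FO(\dep)$ so part~(1) applies and the relational argument is just that if each $o_i$ is a function of $m_i\lambda$ then $\bar o$ is a function of $\bar m\lambda$; and $\StrongDet^h_n\allEnt\ParInd^h_n$, where antecedent is in $\FO(\dep)$ and consequent in $\FO(\perp)$, so I prove the probabilistic case (conditioning on $m_i\lambda$ makes $o_i$ deterministic, hence independent of $\bar m_{-i}$) and invoke part~(2).

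The genuinely combinatorial direction, which I expect to be the main obstacle, is $\WeakDet^h_n\land\ParInd^h_n\allEnt\StrongDet^h_n$. Here the antecedent is in $\FO(\perp)$ and the consequent in $\FO(\dep)$, so by part~(3) it suffices to argue relationally. Fixing $i$ and assignments $s,s'\in X$ with $s(m_i\lambda)=s'(m_i\lambda)$, I apply the parameter-independence atom $o_i\perp_{m_i\lambda}\bar m_{-i}$ to $s$ and $s'$ to obtain an $s''\in X$ retaining $s(o_i)$ but copying $\bar m_{-i}$ from $s'$. This $s''$ then agrees with $s'$ on all of $\bar m\lambda$, so Weak Determinism forces $s''(\bar o)=s'(\bar o)$, in particular $s''(o_i)=s'(o_i)$; combined with $s''(o_i)=s(o_i)$ this gives $s(o_i)=s'(o_i)$, i.e.\ $\dep(m_i\lambda,o_i)$. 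The point requiring care is matching components of $s''$: it inherits $m_i,\lambda$ from $s$ (equal to $s'$ by assumption) and $\bar m_{-i}$ from $s'$, so its entire $\bar m\lambda$-part coincides with that of $s'$, which is precisely what licenses the application of Weak Determinism.
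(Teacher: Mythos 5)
Your proposal is correct and takes essentially the same approach as the paper: the identical syntactic transfer strategy via \cref{theorem:compareEntailment}, the same point-distribution arguments for (1), (2), and $\StrongDet^h_n \probEnt \ParInd^h_n$, and the same relational argument for $\WeakDet^h_n \land \ParInd^h_n \relEnt \StrongDet^h_n$ (your $s''$ is the paper's $s_3$, obtained and used in exactly the same way).
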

\begin{proof} By \cref{theorem:compareEntailment}, the probabilistic case suffices for (1) and (2). 
Claim (1)  immediately
follows from the fact that point distributions are stochastically independent of all
  distributions. 
%\item Due to part ii) from \cref{theorem:compareEntailment}, we only need to show 
  %$\ps{WD} \relEnt \ps{OI}$. Let $X \models \ps{WD}$ and let $s_1,s_2 \in X$ with
  %$s_1(\bar{m} \lambda) = s_2(\bar{m} \lambda)$. By $\ps{WD}$, we obtain
  %$s_1(\bar{o}) = s_2(\bar{o})$; thus $s_1 = s_2$. In particular, there exists
  %$s_3 \in X$ -- namely $s_1$ -- that agrees with $s_1$ on $o_i$, with $s_2$ on
  %$\bar{o}_{-i}$ and with both on $(\bar{m},\lambda)$. Therefore, $X \models \psH{OI}$.

  \medskip
For (2) we need to show that
  $\WeakDet^h_n \probEnt \OutInd^h_n$. Let $\XX = (X,\Pr) \models\WeakDet^h_n $. 
  For every $(\bar{a},c) \in X(\bar{m},\lambda)$ the distribution $\Pr(\bar{o} \mid \bar{m} = \bar{a}, \lambda
  = c)$ is a point distribution. This implies that $\XX \models o_i \perp_{\bar{m}
    \lambda} \bar{o}_{-i}$ and thereby $\XX \models \OutInd^h_n$.

  \medskip
For (3) we first observe that the entailment $\StrongDet^h_n \models \WeakDet^h_n$ is obvious in both semantics.
    For $\StrongDet^h_n \allEnt  \WeakDet^h_n \land \ParInd^h_n$, it remains to show 
    that $\StrongDet^h_n \probEnt \ParInd^h_n$. But this is trivial as well since the distribution over $o_i$
    for a fixed pair  $(m_i ,\lambda)$ assigns, by Strong Determinism, probability 1 to a single value.
    This suffices because point distributions are stochastically independent of
    every other distribution and thus in particular from the conditional
    distribution on $\bar{m}_{-i}$.

  \medskip
    For the converse, we show that $\WeakDet^h_n \land \ParInd^h_n \relEnt  \StrongDet^h_n$.
    Let $X \models \WeakDet^h_n \land \ParInd^h_n$, and let $s_1,s_2 \in X$ agree on $m_i$ and
    $\lambda$. Then, by Parameter Independence, there exists $s_3 \in X$ that takes the same value on 
    $m_i$ and $\lambda$
    and also satisfies $s_3(o_i) = s_1(o_i)$ and $s_3(\bar{m}_{-i}) =
    s_2(\bar{m}_{-i})$. This entails that  $s_3(\bar{m}) = s_2(\bar{m})$ which in turn
    gives, by Weak Determinism, that $s_3(\bar{o}) = s_2(\bar{o})$ and in particular $s_1(o_i) =
    s_3(o_i) = s_2(o_i)$.
\end{proof}

  It is instructive to think about the intuition for the classification of
  Strong Determinism as conjunction of Weak Determinism and Parameter
  Independence: Consider a model which satisfies Weak Determinism but not Strong
  Determinism. In this case, full measurements $\bar{m}$ uniquely determine
  outcomes $\bar{o}$ but single $m_i$ do not determine $o_i$. However, it is then
  necessary that the other measurements $\bar{m}_{-i}$ contain information about
  $o_i$ which contradicts Parameter Independence. In a sense, the independence
  of other measurements is precisely the missing component to go from Weak
  Determinism to Strong Determinism.

  \medskip
Recall that we defined Locality as conjunction of Parameter Independence and
Outcome Independence. Thus, we immediately obtain the following.
\begin{cor}
  Strong Determinism implies Locality: $\StrongDet^h_n \allEnt \Loc^h_n$.
\end{cor}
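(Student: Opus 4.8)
The plan is to derive this corollary purely by chaining the entailments already established in the preceding theorem, without inspecting any team directly. Recall that by definition $\Loc^h_n = \OutInd^h_n \land \ParInd^h_n$, so it suffices to show that $\StrongDet^h_n$ entails each of the two conjuncts separately, in both the relational and the probabilistic sense.

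First I would invoke part (3) of the theorem, namely $\StrongDet^h_n \equiv_{\text{all}} \WeakDet^h_n \land \ParInd^h_n$; in particular this yields $\StrongDet^h_n \allEnt \ParInd^h_n$ and $\StrongDet^h_n \allEnt \WeakDet^h_n$. The first of these already handles the Parameter Independence conjunct. For the Outcome Independence conjunct, I would compose the entailment $\StrongDet^h_n \allEnt \WeakDet^h_n$ with part (2) of the theorem, namely $\WeakDet^h_n \allEnt \OutInd^h_n$, to obtain $\StrongDet^h_n \allEnt \OutInd^h_n$ by transitivity. Finally, since in both semantics a team satisfies a conjunction exactly when it satisfies both conjuncts, the two entailments $\StrongDet^h_n \allEnt \OutInd^h_n$ and $\StrongDet^h_n \allEnt \ParInd^h_n$ combine to give $\StrongDet^h_n \allEnt \OutInd^h_n \land \ParInd^h_n = \Loc^h_n$, as desired.

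There is no genuine obstacle here: the argument is a short chase through the lattice of entailments rather than a fresh semantic analysis. The only points worth a remark are that $\allEnt$, being the conjunction of the two semantic relations $\relEnt$ and $\probEnt$, is transitive, and that both semantics treat $\land$ as simultaneous satisfaction, so that entailing each conjunct entails their conjunction. Both facts are immediate from the definitions, which is precisely why the authors can state that the corollary follows ``immediately'' once Locality has been defined as the conjunction of Outcome and Parameter Independence.
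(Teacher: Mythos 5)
Your proof is correct and follows exactly the route the paper intends: the corollary is stated to follow ``immediately'' from $\Loc^h_n = \OutInd^h_n \land \ParInd^h_n$ together with parts (2) and (3) of the preceding theorem, which is precisely the entailment chase you carry out. Spelling out the transitivity of $\allEnt$ and the conjunction step makes explicit what the paper leaves implicit, but it is the same argument.
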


Because of \cref{prop:probFormulaeEmp}, we can formulate implications
of the form ``if $h$ is a hidden-variable model satisfying $P^h$, then its
  underlying empirical model $e$ satisfies $P^e$'' also by entailment of properties.

\begin{theorem}
The underlying empirical models of hidden-variable models that satisfy
  Parameter Independence and $\lambda$-Independence fulfill No-Signalling:
  \[ \LInd^h_n \land \ParInd^h_n \allEnt \NoSig^e_n.\]
\end{theorem}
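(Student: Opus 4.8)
The plan is to prove the two entailments $\relEnt$ and $\probEnt$ separately, since the premise $\LInd^h_n \land \ParInd^h_n$ and the conclusion $\NoSig^e_n$ both lie in $\FO(\perp)\setminus\FO(\dep)$ and consist of genuinely conditional independence atoms, so none of the transfer directions of \cref{theorem:compareEntailment} applies (indeed part (4) warns that for $\FO(\perp)$ neither direction holds in general). In both settings I would first invoke locality, as in \cref{prop:probFormulaeEmp}: because $\NoSig^e_n$ is a formula over $\VarE$, it suffices to establish each independence $o_i \perp_{m_i} \bar m_{-i}$ directly on the given team over $\VarH$. The guiding idea, common to both cases, is that $\lambda$-Independence detaches the hidden variable from the measurements, so that eliminating $\lambda$ from the Parameter-Independence statement (which conditions on $m_i\lambda$) does not reintroduce any dependence of $o_i$ on the other measurements $\bar m_{-i}$.

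For the probabilistic case I would fix $i$ and expand the conditional outcome distribution by summing over the hidden variable,
\[ \Pr(o_i = b_i \mid m_i = a_i, \bar m_{-i} = \bar a_{-i}) = \sum_{c} \Pr(o_i = b_i \mid m_i = a_i, \bar m_{-i} = \bar a_{-i}, \lambda = c)\cdot \Pr(\lambda = c \mid \bar m = \bar a). \]
Parameter Independence collapses the first factor to $\Pr(o_i = b_i \mid m_i = a_i, \lambda = c)$, removing the dependence on $\bar m_{-i}$, while $\lambda$-Independence gives $\Pr(\lambda = c \mid \bar m = \bar a) = \Pr(\lambda = c)$, so the second factor is independent of the measurements entirely. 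Hence the whole sum no longer depends on $\bar a_{-i}$, which is precisely the conditional independence $o_i \perp_{m_i} \bar m_{-i}$ defining $\NoSig^e_n$.

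For the relational case I would argue through the support/existence semantics of the atoms. Given $s,s'\in X$ with $s(m_i) = s'(m_i) = a_i$, set $b_i := s(o_i)$, $c := s(\lambda)$, and $\bar a_{-i} := s'(\bar m_{-i})$; the task is to produce an assignment witnessing $o_i \perp_{m_i} \bar m_{-i}$. First, $\lambda$-Independence yields $X(\bar m, \lambda) = X(\bar m)\times X(\lambda)$, so there is some $\tilde s \in X$ with $\tilde s(\bar m) = s'(\bar m)$ and $\tilde s(\lambda) = c$; in particular $\tilde s$ agrees with $s$ on $(m_i,\lambda)$. Applying Parameter Independence, $o_i \perp_{m_i\lambda} \bar m_{-i}$, to $s$ and $\tilde s$ then yields $s''\in X$ with $s''(m_i,\lambda) = (a_i,c)$, $s''(o_i) = b_i$, and $s''(\bar m_{-i}) = \bar a_{-i}$, which is exactly the witness required for No-Signalling.

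Once these two ingredients are isolated the argument is essentially routine, so the main obstacle is conceptual rather than computational: it is the mismatch in conditioning between the hypotheses and the conclusion. Parameter Independence conditions on $m_i\lambda$ whereas No-Signalling conditions on $m_i$ alone, so $\lambda$ must be eliminated, and the entire proof hinges on recognizing that $\lambda$-Independence is exactly the assumption that makes this elimination harmless — averaging over $\lambda$ with measurement-independent weights (probabilistic case), or freely pairing any measurement configuration with the value $c$ of $\lambda$ (relational case), cannot reintroduce signalling.
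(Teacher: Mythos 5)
Your proposal is correct and follows essentially the same route as the paper's proof: in the relational case you construct the same chain $s \to \tilde s \to s''$ (using $\lambda$-Independence to realign the hidden variable and then Parameter Independence to produce the witness), and in the probabilistic case you perform the same expansion over $\lambda$ with the same two substitutions. The only cosmetic difference is that the paper rewrites $\Pr(\lambda = c \mid \bar m = \bar a)$ as $\Pr(\lambda = c \mid m_i = a_i)$ to collapse the sum directly to $\Pr(o_i = b_i \mid m_i = a_i)$, whereas you replace it by $\Pr(\lambda = c)$ and conclude from independence of $\bar a_{-i}$ — an equivalent step.
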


Before we prove this we want to give some intuition.
  Recall that the only difference between Parameter Independence and
  No-Signalling is that the former requires independence of $o_i$ and
  $\bar{m}_{-i}$ conditional on $(m_i,\lambda)$ instead of just $m_i$. Intuitively,
  we use the $\lambda$-Independence to obtain a ``copy'' of $s_2$ -- namely
  $\tilde{s}_2$ -- which agrees with $s_1$ on the hidden-variable to allow us to
  apply Parameter Independence.
  
\begin{proof}[Proof (relational case)]
  Let $X \models \ParInd^h_n \land \LInd^h_n$. Let
  $i \in \set{1,\dots,n}$ and $s_1,s_2 \in X$ with $s_1(m_i) = s_2(m_i)$.
  Choose by $\lambda$-Independence an assignment $\tilde{s}_2 \in X$ with
  $\tilde{s}_2(\lambda) = s_1(\lambda)$ and $\tilde{s}_2(\bar{m}) =
  s_2(\bar{m})$. Then $s_1(m_i,\lambda) = \tilde{s}_2(m_i,\lambda)$ and by applying
  Parameter Independence to $s_1$ and $\tilde{s}_2$, we get $s_3 \in X$ with $s_3(o_i) = s_1(o_i)$ and
  $s_3(\bar{m}_{-i}) = \tilde{s}_2(\bar{m}_{-i}) = s_2(\bar{m}_{-i})$. Thus,
  $X \models \NoSig^e_n$.
\end{proof}
\begin{proof}[Proof (probabilistic case)]
  Let $\XX = (X,\Pr) \models \ParInd^h_n \land \LInd^h_n$
  and $i \in \set{1,\dots,n}$. Let $o \in X(o_i), a \in X(m_i)$,
  and $\bar{a}_{-i} \in X(\bar{m}_{-i})$. We put $\Lambda = X(\lambda)$. We
  have
%\allowdisplaybreaks
% \begin{align*}
%      &\Pr(o_i = o \mid \bar{m} = \bar{a}) \\
%        =\quad &\sum_{c \in \Lambda}\Pr(o_i = o,\lambda = c \mid \bar{m} = \bar{a}) \\
%      =\quad &\sum_{c \in \Lambda}\Pr(o_i = o \mid \lambda = c, \bar{m} = \bar{a})* \Pr(\lambda = c \mid \bar{m} = \bar{a})\\
%      =\quad &\sum_{c \in \Lambda}\Pr(o_i = o \mid \lambda = c, m_i = a_i) * \Pr(\lambda = c \mid \bar{m} = \bar{a}) \quad\text{(by Parameter Independence)}\\
%      =\quad &\sum_{c \in \Lambda}\Pr(o_i = o \mid \lambda = c, m_i = a_i)* \Pr(\lambda = c \mid m_i = a_i)\quad\text{(by $\lambda$-Independence)}\\
%      =\quad &\Pr(o_i = o \mid m_i = a_i)
%    \end{align*}
 \begin{align*}
&\Pr(o_i = o \mid \bar{m} = \bar{a}) \quad =\quad \sum_{c \in \Lambda}\Pr(o_i = o,\lambda = c \mid \bar{m} = \bar{a}) \\
&\quad =\quad\sum_{c \in \Lambda}\Pr(o_i = o \mid \lambda = c, \bar{m} = \bar{a})* \Pr(\lambda = c \mid \bar{m} = \bar{a})\\
&\quad =\quad\sum_{c \in \Lambda}\Pr(o_i = o \mid \lambda = c, m_i = a_i) * \Pr(\lambda = c \mid \bar{m} = \bar{a}) \quad\text{(by $\ParInd^h_n$)}\\
&\quad =\quad\sum_{c \in \Lambda}\Pr(o_i = o \mid \lambda = c, m_i = a_i)* \Pr(\lambda = c \mid m_i = a_i)\quad\text{(by $\LInd^h_n$)}\\
&\quad =\quad \;\Pr(o_i = o \mid m_i = a_i)
\end{align*}
    and thus $\XX \models o_i \perp_{m_i} \bar{m}_{-i}$. 
\end{proof}

\section{Existence of Hidden-Variable Models via Satisfiability}

We now investigate the question whether a given empirical
models admits an empirically equivalent hidden-variable model with certain given properties.
Again, our team semantical framework allows us to treat this in an elegant fashion.
Essentially, the question reduces to the satisfiability of formulae of the
form $\ex \lambda \psi$ on extensions
of the given team by a finite set of values 
for the hidden variable $\lambda$,
where $\psi$ encodes the properties of interest. 

\medskip
Notice that a team $X$ of assignments $s\colon\dom(X)\ra A$
can of course also be understood as a team of assignments
$s\colon\dom(X)\ra A\cup\Lambda$ for any set $\Lambda$, and
since elements of $\Lambda$ do not occur in the team,
this does not change any of the \emph{atomic} 
dependence and independence properties of $X$.
However, for an existential formula $\exists\lambda\phi$  
(or a universal one) the universe of values that are available for $\lambda$
does of course matter. 

\begin{definition} For a team $X$ with values in $A$ and a
set $\Lambda$, we write $X+\Lambda$
for the team with the additional supply $\Lambda$ of
values. We say that a formula $\phi\in\FO(\perp)$ is
\emph{satisfiable by a (finite) extension of $X$} if there exists
a (finite) set $\Lambda$ such that $X+\Lambda \models \psi$.
%Further, we shall make use of the notation $(\exists \lambda\in\Lambda)\phi$ 
%with the obvious meaning.
All of this applies as well to probabilistic teams.
\end{definition}  

The following observation, which we formulate for probabilistic teams, connects
existence of suitable hidden-variable models with team semantics.

\begin{prop}
Let $\XX$ be a probabilistic team representing an empirical model $e_{\XX}$. 
For an arbitrary property $P$ of hidden-variable models
captured by a formula $\psi \in \FO(\perp)$, the following two
statements are equivalent:
\begin{enumerate}
\item There is a hidden-variable model $h_{\Pr}$ which satisfies $P$ and is
    empirically equivalent to $e_\XX$.
\item $\exists\lambda\psi$ is satisfiable by a finite extension of $\XX$.
\end{enumerate}
\end{prop}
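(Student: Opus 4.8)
The plan is to prove the equivalence by establishing a correspondence between hidden-variable models realizing $P$ and finite extensions of $\XX$ by a hidden variable $\lambda$. The key tool is the characterization of the existential quantifier in probabilistic team semantics given in the excerpt: for a fresh variable $\lambda\notin\dom(X)$, we have $\XX\models\exists\lambda\psi$ precisely when there is a probabilistic team $\YY$ over $\dom(X)\cup\{\lambda\}$ that satisfies $\psi$ and marginalizes back to $\XX$ when restricted to $\dom(X)$, i.e.\ $\YY\upharpoonright\dom(X)=\XX$.

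\medskip
For the direction $(1)\Rightarrow(2)$, suppose $h_{\Pr}$ is a hidden-variable model over $(M,O,\Lambda)$ which satisfies $P$ and is empirically equivalent to $e_\XX$. Since $\Lambda$ is required to be finite, I would take the supply of values to be this $\Lambda$, forming $\XX+\Lambda$. By the one-to-one correspondence between probabilistic hidden-variable models and probabilistic teams (Definition~\ref{def:probModelsTeams}), $h_{\Pr}$ is represented by a probabilistic team $\YY$ over $\VarH$, and since $h_{\Pr}$ satisfies $P$, we have $\YY\models\psi$. Empirical equivalence of $h_{\Pr}$ and $e_\XX$ means that the induced empirical model of $\YY$ equals $e_\XX$; by \cref{prop:relEmpProject} this is exactly the statement $\YY\upharpoonright\VarE=\XX$. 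Using the characterization of $\exists\lambda$ above, witnessing $\YY$ shows $\XX+\Lambda\models\exists\lambda\psi$, so $\exists\lambda\psi$ is satisfiable by the finite extension $\Lambda$ of $\XX$.

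\medskip
For the converse $(2)\Rightarrow(1)$, assume $\exists\lambda\psi$ is satisfiable by a finite extension, so there is a finite $\Lambda$ with $\XX+\Lambda\models\exists\lambda\psi$. Unwinding the semantics of the existential quantifier, there is a Skolem extension $\YY=(\XX+\Lambda)[\lambda\mapsto\Ff]$ with $\YY\models\psi$; by the marginalization remark in the excerpt, $\YY\upharpoonright\dom(\XX)=\XX$, which by \cref{prop:relEmpProject} says $\YY\upharpoonright\VarE=\XX$, i.e.\ the empirical model induced by $\YY$ is $e_\XX$. Letting $h_{\Pr}:=h_\YY$ be the hidden-variable model that $\YY$ represents, the fact that $\YY\models\psi$ means $h_{\Pr}$ has property $P$, and empirical equivalence to $e_\XX$ follows from the restriction identity. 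This completes both directions.

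\medskip
The main obstacle, and the only point requiring genuine care, is the handling of the value supply $\Lambda$ and the freshness of $\lambda$. One must check that the values introduced for $\lambda$ in the Skolem extension actually come from the finite supply $\Lambda$ and nothing else, which is exactly what the notation $\XX+\Lambda$ is designed to guarantee; without restricting the available universe, the existential quantifier would be meaningless in the finitary reading we need. A small technical subtlety is that the team correspondence demands every $\lambda\in\Lambda$ appear in some assignment, whereas a Skolem extension might leave some values of $\Lambda$ unused; this is harmless, since we may simply shrink $\Lambda$ to those values that are actually assigned, preserving finiteness and all atomic properties. I would therefore emphasize that the essential content is purely the translation between the quantifier semantics and the empirical-equivalence restriction, with finiteness of $\Lambda$ threaded through both directions.
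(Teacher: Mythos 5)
Your proof is correct and takes essentially the same route as the paper's: both directions rest on the correspondence between hidden-variable models empirically equivalent to $e_\XX$ and Skolem extensions $(\XX+\Lambda)[\lambda\mapsto\Ff]$, combined with the marginalization reading of $\exists\lambda$. Your explicit handling of the value supply (shrinking $\Lambda$ to the values actually taken, so the tacit convention that every hidden value occurs is respected) merely spells out a detail the paper's proof leaves implicit.
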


\begin{proof}
A hidden-variable model $h_{\Pr}$ (where $\lambda$ takes values in $\Lambda$)
which is empirically equivalent to $e_\XX$
corresponds to a team that can be written as a Skolem extension $\YY = (\XX+\Lambda)[\lambda\mapsto\Ff]$
for a suitable function $\Ff\colon X\to \Delta(\Lambda)$. Meanwhile, every Skolem
extension of $\XX + \Lambda$ represents a hidden-variable
model that is equivalent to $e_{\XX}$. The proposition follows since $\XX +
\Lambda \models \ex \lambda \psi$ expresses precisely that a suitable Skolem
extension of $\XX + \Lambda$ satisfies $\psi$.
\end{proof}

We immediately get the following connection between probabilistic and relational
existence results by our general results on the level of team semantics. 

\begin{prop}\label{prop:compareExtension}
  Let $\psi \in \FO(\perp)$ formalize
  a property $P$ of both relational and probabilistic hidden-variable models.
  If a probabilistic empirical model has an empirically equivalent hidden-variable model
  satisfying $P$, then this also holds for its induced relational model.
  If $\psi\in\FO(\dep)$, the converse also holds.
\end{prop}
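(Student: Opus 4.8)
The plan is to reduce both existence assertions to the satisfiability of the single formula $\exists\lambda\psi$ and then to transfer that satisfiability between the two semantics by means of \cref{theorem:compareSemantics}. Write $\XX=(X,\Pr)$ for the probabilistic team representing the given probabilistic empirical model, so that, by the compatibility results of \cref{sec:models}, its underlying relational team $X$ represents exactly the induced relational empirical model $e_X$. The decisive syntactic observation is that prefixing an existential quantifier stays within the relevant fragment: $\psi\in\FO(\perp)$ gives $\exists\lambda\psi\in\FO(\perp)$, and $\psi\in\FO(\dep)$ gives $\exists\lambda\psi\in\FO(\dep)$, since no new atom is introduced.

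First I would apply the preceding proposition in both its probabilistic and its relational form. In the probabilistic form it says that $e_\XX$ admits an empirically equivalent probabilistic hidden-variable model with property $P$ if and only if $\XX+\Lambda\models\exists\lambda\psi$ for some finite $\Lambda$. The relational analogue -- which holds by the identical argument, invoking \cref{prop:relEmpProject} and the subsequent corollary characterising empirically equivalent extensions -- says that $e_X$ admits such a relational hidden-variable model if and only if $X+\Lambda\models\exists\lambda\psi$ for some finite $\Lambda$. This turns each of the two existence claims in the statement into a satisfiability claim about $\exists\lambda\psi$.

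Next I would link the two satisfiability claims through the underlying-team relationship. The point is that the underlying relational team of $\XX+\Lambda$ is precisely $X+\Lambda$: adjoining the extra value supply $\Lambda$ alters neither the assignments nor their probabilities, so passing to the support commutes with it. Applying \cref{theorem:compareSemantics} to $\exists\lambda\psi\in\FO(\perp)$ therefore yields the implication $\XX+\Lambda\models\exists\lambda\psi\implies X+\Lambda\models\exists\lambda\psi$ (for the same finite $\Lambda$), which is exactly the forward direction of the proposition. If moreover $\psi\in\FO(\dep)$, then $\exists\lambda\psi\in\FO(\dep)$ and the same theorem upgrades this to the biconditional $\XX+\Lambda\models\exists\lambda\psi\Iff X+\Lambda\models\exists\lambda\psi$, from which the converse direction follows at once.

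The only step that is more than bookkeeping is the relational analogue of the preceding proposition, but this is a verbatim transcription of its proof with relational teams in place of probabilistic ones: a witnessing Skolem extension $(X+\Lambda)[\lambda\mapsto F]\models\psi$ is a relational hidden-variable model with property $P$, and its restriction to $\VarE$ recovers $X$, so it is empirically equivalent to $e_X$. With that analogue established and the support of $\XX+\Lambda$ identified as $X+\Lambda$, the proposition is an immediate corollary of \cref{theorem:compareSemantics}; I would expect no genuine obstacle, only the care needed to keep the finiteness of $\Lambda$ and the direction of each implication straight.
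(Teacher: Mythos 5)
Your proposal is correct and follows essentially the same route as the paper: reduce both existence claims to satisfiability of $\exists\lambda\psi$ via the preceding proposition, observe that $X+\Lambda$ is the underlying team of $\XX+\Lambda$, and transfer satisfiability with \cref{theorem:compareSemantics} (one direction for $\FO(\perp)$, the biconditional for $\FO(\dep)$). You merely spell out two steps the paper leaves implicit --- the relational analogue of the preceding proposition and the closure of each fragment under the $\exists\lambda$ prefix --- both of which are sound.
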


\begin{proof}
  Assume that the probabilistic case holds for a probabilistic team $\XX =
  (X,\Pr)$ over $\VarE$. Thus, there is a finite set $\Lambda$
  such that $\XX + \Lambda \models \exists\lambda \psi$. This implies by
  \cref{theorem:compareSemantics} that $X + \Lambda \models \exists\lambda
  \psi$, which was to be
  shown. For $\psi\in\FO(\dep)$, the converse follows by similar arguments.
\end{proof}

Again, our elementary properties can be treated as a special case of
this general result.
If a probabilistic empirical model possesses an equivalent hidden-variable
model satisfying any of the elementary properties defined above
(such as Locality, $\lambda$-Independence, Strong Determinism, \dots),
then the same holds for its underlying relational model.
For the properties that rely only on dependence atoms (Strong-Determinism,
Weak-Determinism, Single-Valuedness), the converse also holds.

\medskip
We next give logical proofs of some existence theorems -- which are already known from
\cite{Abramsky13} and \cite{BrandenburgerYan08} -- in our framework.
We will start with a trivial one. 
\begin{prop}
  Every (relational and probabilistic) empirical model
  is realized by an equivalent hidden-variable model satisfying Single-Valuedness.
\end{prop}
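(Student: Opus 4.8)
The plan is to show that adding a single hidden-variable value suffices, which matches the intuition already stated in the paper that a single-valued hidden-variable ``does not provide any additional freedom.'' The key observation is that Single-Valuedness is captured by $\SingVal^h_n := \dep(-,\lambda)$, which simply asserts that $\lambda$ takes a constant value across the entire team. So given an empirical model represented by a team over $\VarE$, the natural move is to extend it to a team over $\VarH$ by adjoining a single hidden-variable value and setting $\lambda$ equal to that value in every assignment.

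Concretely, let $\XX = (X,\Pr)$ be a probabilistic team over $\VarE$ representing $e_\XX$. First I would take $\Lambda := \{\lambda_0\}$ a singleton and form $\XX + \Lambda$. Then I would define the Skolem extension $\YY := (\XX+\Lambda)[\lambda \mapsto \Ff]$ where $\Ff_s$ is the point distribution on $\lambda_0$ for every $s \in X$; equivalently, $\YY$ is the team over $\VarH$ with $\Pr_\YY(s[\lambda \mapsto \lambda_0]) = \Pr_\XX(s)$. By the marginalization property noted after the definition of the Skolem extension, $\YY \upharpoonright \VarE = \XX$, so by the Corollary following \cref{prop:relEmpProject} the induced hidden-variable model $h_\YY$ is empirically equivalent to $e_\XX$. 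It then remains only to check that $\YY \models \SingVal^h_n$, i.e.\ $\YY \models \dep(-,\lambda)$, which holds trivially because $\lambda$ takes the single value $\lambda_0$ throughout $\YY$. The relational case is identical, replacing the point-distribution Skolem extension by the ordinary extension $X[\lambda \mapsto \{\lambda_0\}]$.

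There is essentially no obstacle here, which is why the paper flags this as ``a trivial one'': the whole content is that a constant function witnesses the dependence atom, and that adjoining a constant coordinate leaves the $\VarE$-projection (and hence the induced empirical model) unchanged. If one wanted to unify the two cases in a single line, I would instead invoke \cref{prop:compareExtension}: since $\SingVal^h_n = \dep(-,\lambda) \in \FO(\dep)$, it suffices to establish either the relational or the probabilistic existence result and the other follows automatically. Thus the cleanest write-up proves the probabilistic case by the point-distribution extension above and then appeals to \cref{prop:compareExtension} to descend to the relational case, or simply remarks that the singleton construction works verbatim in both semantics.
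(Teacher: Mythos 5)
Your proposal is correct and takes essentially the same approach as the paper: a constant value of $\lambda$ (the paper uses an arbitrary $\Lambda$ with a constant Skolem choice, you use a singleton) makes $\dep(-,\lambda)$ hold trivially while leaving the $\VarE$-restriction, and hence empirical equivalence, intact. The only cosmetic difference is the transfer direction -- the paper proves the relational case and lifts it to the probabilistic one via the $\FO(\dep)$ clause of \cref{prop:compareExtension}, whereas you prove the probabilistic case directly (and note both explicit constructions anyway) -- which is equally valid.
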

\begin{proof}
Recall that $\SingVal^h_n:=\dep(-,\lambda)$,
and for any team $X$, we can take an arbitrary set $\Lambda$
to get that $X+\Lambda\models \exists \lambda\dep(-,\lambda)$.
This establishes the relational case, which also implies the probabilistic one.
\end{proof}

\begin{prop}\label{prop:existenceSD}
Every (relational and probabilistic) empirical model is realized by an
 empirically equivalent hidden-variable model satisfying Strong Determinism.
\end{prop}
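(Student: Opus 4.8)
The plan is to prove the relational statement directly and then obtain the probabilistic one for free. Since $\StrongDet^h_n = \bigwedge_{i=1}^n \dep(m_i\lambda, o_i)$ lies in $\FO(\dep)$, \cref{prop:compareExtension} guarantees that relational existence implies probabilistic existence. Hence it suffices to show that every relational empirical model $e$ admits an empirically equivalent relational hidden-variable model satisfying Strong Determinism, after which the probabilistic half follows by a single invocation of \cref{prop:compareExtension}.

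For the relational case, let $X$ be the team over $\VarE$ representing $e$ over $(M,O)$, with $M_i = X(m_i)$ and $O_i = X(o_i)$. The idea is to let the hidden variable encode a complete \emph{deterministic response function}: I would take $\Lambda$ to be the (finite) set of all tuples $\bar f = (f_1,\dots,f_n)$ of functions $f_i\colon M_i \to O_i$, and then show $X + \Lambda \models \exists\lambda\,\StrongDet^h_n$. The witnessing Skolem function $F\colon X \to \mathcal{P}(\Lambda)\setminus\{\emptyset\}$ assigns to each $s\in X$ the set of all response tuples consistent with $s$, namely $F(s) = \{ \bar f \in \Lambda : f_i(s(m_i)) = s(o_i)\text{ for all } i \}$. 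This set is nonempty, since any partial specification $s(m_i)\mapsto s(o_i)$ extends to a total function on $M_i$, and $\Lambda$ is finite because all $M_i$ and $O_i$ are finite.

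The key verification is that $Y := X[\lambda\mapsto F]$ satisfies each atom $\dep(m_i\lambda, o_i)$. Every assignment in $Y$ has the form $s[\lambda\mapsto\bar f]$ with $s\in X$ and $\bar f\in F(s)$. If two such assignments $t,t'\in Y$ agree on $(m_i,\lambda)$, say $t(m_i)=t'(m_i)=a_i$ and $t(\lambda)=t'(\lambda)=\bar f$, then the consistency condition built into $F$ yields $t(o_i) = f_i(a_i) = t'(o_i)$, so the outcome in component $i$ is pinned down by the pair $(m_i,\lambda)$. Empirical equivalence $Y\upharpoonright\VarE = X$ is automatic, since the marginalisation of any Skolem extension returns the original team. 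I expect no serious obstacle here: the only conceptual step is recognising that the hidden variable must carry a full response function — an ``instruction set'' fixing the outcome for every possible local measurement — rather than merely the realised outcome, and the only technical point to watch is the finiteness of $\Lambda$, which is needed so that $\exists\lambda$ is witnessed by a \emph{finite} extension.
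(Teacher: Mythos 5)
Your proof is correct. The outer reduction is exactly the paper's: it too proves the relational case and then concludes the probabilistic one from $\StrongDet^h_n\in\FO(\dep)$ via a single invocation of \cref{prop:compareExtension}. Where you genuinely diverge is the relational construction. The paper takes $\Lambda := X$ itself and the Skolem function $F(s)=\set{s}$, so each assignment is tagged with \emph{itself} as its hidden value; then any two assignments of $Y=(X+\Lambda)[\lambda\mapsto F]$ that agree on $\lambda$ are identical, and every atom $\dep(m_i\lambda,o_i)$ holds outright — no case analysis at all. Your $\Lambda$ of response-function tuples $(f_1,\dots,f_n)$ with $f_i\colon M_i\to O_i$, together with $F(s)$ the set of tuples consistent with $s$, is a heavier but standard device: it is precisely the ``instruction set'' construction that the paper itself deploys later, in the proof of \cref{theorem:localNormalForm}, where the refined hidden values are pairs $(c,f)$ of old hidden values and response-function tuples. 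What the paper's version buys is maximal brevity; what yours buys is conceptual content and a less pathological model — in your construction the hidden variable no longer determines which measurement is performed, since a given $\bar f$ may co-occur with several values of $\bar m$, which softens (without eliminating) exactly the defect the paper remarks on immediately after its proof, namely that in the trivial construction $\lambda$ deterministically fixes the measurement. Of course, by Bell's theorem neither construction can in general deliver $\lambda$-Independence on top, and you rightly do not claim it. Your verification steps — nonemptiness of $F(s)$, finiteness of $\Lambda$, the computation $t(o_i)=f_i(t(m_i))$ for assignments agreeing on $(m_i,\lambda)$, and marginalisation giving $Y\upharpoonright\VarE = X$ — are all sound; the only cosmetic point is to write $Y=(X+\Lambda)[\lambda\mapsto F]$ rather than $X[\lambda\mapsto F]$, making the extended value supply for $\lambda$ explicit.
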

\begin{proof}
 Let $X$ be a team and put  $\Lambda := X$. We need to show that
  \[  X+\Lambda \models \ex \lambda \bigwedge\nolimits_{i=1}^n \dep(m_i\lambda,o_i) .\] 
Consider the Skolem extension $Y=(X+\Lambda)[\lambda\mapsto F]$ with
$F(s)= \set{s}$. For all $s,s'\in Y$,
 $s(\lambda) = s'(\lambda)$ implies that $s = s'$. This shows that $Y$ satisfies
Strong Determinism and thus proves the relational case.
Since $\StrongDet^h_n\in\FO(\dep)$, the probabilistic case also follows.
\end{proof}

This construction shows that Strong Determinism is hardly a satisfying property
in itself; letting every fixed hidden variable deterministically determine a
measurement-outcome pair is far from desirable. A natural additional assumption
is $\lambda$-Independence, which states that the measurement process is
independent from the hidden-variables of the system to be measured. This is by
no means satisfied in the construction above; indeed, the hidden-variables
deterministically determine the measurement taken.

\medskip
\Cref{prop:existenceSD} implies a result for Locality
since $\StrongDet^h_n \allEnt \Loc^h_n$.
\begin{cor}
  Every (relational and probabilistic) empirical model is realized by an
  empirically equivalent hidden-variable model satisfying Locality.
\end{cor}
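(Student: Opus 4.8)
The plan is to obtain the corollary by composing two facts already established earlier in the excerpt: the existence of a Strongly Deterministic realization (\Cref{prop:existenceSD}) and the entailment $\StrongDet^h_n \allEnt \Loc^h_n$ recorded immediately after the classification theorem. First I would fix an arbitrary empirical model, in either the relational or the probabilistic setting, and invoke \Cref{prop:existenceSD} to produce an empirically equivalent hidden-variable model $h$ whose representing team $Y$ satisfies $\StrongDet^h_n$. Since the entailment $\StrongDet^h_n \allEnt \Loc^h_n$ holds in \emph{both} relational and probabilistic semantics, the very same team $Y$ then also satisfies $\Loc^h_n$; hence $h$ satisfies Locality while remaining empirically equivalent to the original model. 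Because the realizing team is left unchanged between the two steps, empirical equivalence is automatically preserved and needs no separate argument.

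The only point worth unpacking is why the entailment $\StrongDet^h_n \allEnt \Loc^h_n$ is available in both semantics. Recall that $\Loc^h_n := \OutInd^h_n \land \ParInd^h_n$. Strong Determinism entails Weak Determinism, which in turn entails Outcome Independence, and Strong Determinism directly entails Parameter Independence, because the conditional distribution of $o_i$ given $(m_i,\lambda)$ is a point distribution and such distributions are stochastically independent of every other distribution. These sub-entailments were all shown to hold under $\allEnt$, so taking their conjunction yields $\StrongDet^h_n \allEnt \Loc^h_n$, exactly the ingredient needed.

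In truth there is no genuine obstacle here: once the two ingredients are in place, the corollary is an immediate composition, and the relational and probabilistic cases are dispatched uniformly precisely because the \emph{same} family of formulae and the \emph{same} $\allEnt$-entailments apply to both. The hardest part is purely bookkeeping, namely making sure that the model supplied by \Cref{prop:existenceSD} is literally the model to which the entailment $\StrongDet^h_n \allEnt \Loc^h_n$ is applied, so that the witness team is not inadvertently replaced between the existence step and the entailment step.
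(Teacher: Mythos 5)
Your proof is correct and is essentially the paper's own argument: the authors likewise obtain this corollary by composing \cref{prop:existenceSD} with the entailment $\StrongDet^h_n \allEnt \Loc^h_n$, which they in turn derive from $\WeakDet^h_n \allEnt \OutInd^h_n$ and the decomposition $\StrongDet^h_n \equiv_{\text{all}} \WeakDet^h_n \land \ParInd^h_n$, exactly as in your unpacking. Your extra bookkeeping remark about keeping the same witness team is sound but not needed beyond what the paper states.
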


Terms such as ``local realism'' appearing in the literature refer in fact to
  the existence of hidden-variable models satisfying the conjunction of Locality
  and $\lambda$-Independence. This is not proven above and indeed
  in general false as we shall see in \cref{sec:bell} when we discuss
  Bell's Theorem. But if we weaken Strong Determinism to Weak Determinism, we can realize it together with
$\lambda$-Independence. 

\begin{prop}\label{prop:existenceWDlI}
  Every relational empirical model and every probabilistic empirical model with
  rational probabilities is realized by an empirically equivalent hidden-variable model
  satisfying Weak Determinism and $\lambda$-Independence.
\end{prop}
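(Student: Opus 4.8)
The plan is to let the hidden variable encode a \emph{deterministic response function} that assigns an outcome to each measurement combination, and to draw this function independently of the actual measurement. This decoupling is exactly what delivers $\lambda$-Independence, while the deterministic nature of a response function delivers Weak Determinism. For the relational case, write $\mathcal{M} = X(\bar{m})$ for the measurement combinations occurring in the empirical model and, for each $\bar{a}\in\mathcal{M}$, let $E_{\bar{a}} = \set{\bar{b} : (\bar{a},\bar{b})\in e}$ be its nonempty set of possible outcomes. I would take $\Lambda := \prod_{\bar{a}\in\mathcal{M}} E_{\bar{a}}$, the finite set of choice functions selecting one admissible outcome per measurement, and form the Skolem extension $Y = (X+\Lambda)[\lambda\mapsto F]$ where, for an assignment $s$ with $s(\bar{m}) = \bar{a}$ and $s(\bar{o}) = \bar{b}$, one sets $F(s) = \set{f\in\Lambda : f(\bar{a}) = \bar{b}}$ (nonempty, since $\bar{b}\in E_{\bar{a}}$ and the remaining $E_{\bar{a}'}$ are nonempty). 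Then $\bar{o} = \lambda(\bar{m})$ holds throughout $Y$, giving $Y\models\WeakDet^h_n$; every pair $(\bar{a},f)\in\mathcal{M}\times\Lambda$ occurs, giving $Y(\bar{m},\lambda) = Y(\bar{m})\times Y(\lambda)$ and hence $Y\models\LInd^h_n$; and projecting away $\lambda$ recovers $e$, since for fixed $\bar{a}$ the value $f(\bar{a})$ ranges over all of $E_{\bar{a}}$.

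For probabilistic empirical models with rational probabilities, the extra work is to make a single, measurement-independent distribution on $\lambda$ reproduce all conditional outcome distributions $\Pr(\bar{o}\mid\bar{m}=\bar{a})$ at once. Here rationality is used to pick a common denominator $N$ with $N\cdot e_{\XX}(\bar{b}\mid\bar{a})\in\nats$ for all $\bar{a},\bar{b}$. I would take $\Lambda = \set{1,\dots,N}$ and, for each $\bar{a}$, partition $\set{1,\dots,N}$ into blocks of sizes $N\cdot e_{\XX}(\bar{b}\mid\bar{a})$ indexed by $\bar{b}\in E_{\bar{a}}$, writing $g_{\bar{a}}(j)$ for the outcome whose block contains $j$. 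The index $j$ thus encodes the response function $\bar{a}\mapsto g_{\bar{a}}(j)$. Concretely, I would verify that $\XX+\Lambda\models\exists\lambda(\WeakDet^h_n\land\LInd^h_n)$ via the Skolem function mapping an assignment with $s(\bar{m},\bar{o}) = (\bar{a},\bar{b})$ to the uniform distribution on $\set{j : g_{\bar{a}}(j) = \bar{b}}$.

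A short computation then shows that the extended assignment $(\bar{a},\bar{b},j)$ with $g_{\bar{a}}(j)=\bar{b}$ receives probability $e_{\XX}(\bar{a})/N$, whence $\Pr(\bar{m}=\bar{a},\lambda=j) = e_{\XX}(\bar{a})/N = \Pr(\bar{m}=\bar{a})\cdot\Pr(\lambda=j)$ (so $\LInd^h_n$ holds with $\lambda$ uniform), $\bar{o}$ is a function of $(\bar{m},\lambda)$ (so $\WeakDet^h_n$ holds in the probabilistic, deterministic sense), and marginalizing $\lambda$ returns $e_{\XX}(\bar{a},\bar{b})$, giving empirical equivalence. The relational case can alternatively be folded into this one: putting the uniform (hence rational) distribution on a relational model $X$ and applying the probabilistic result, \cref{prop:compareExtension} lets us descend from the probabilistic witness to a relational one, since $\WeakDet^h_n\land\LInd^h_n\in\FO(\perp)$.

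I expect the main obstacle to be the probabilistic construction, and specifically the simultaneous enforcement of $\lambda$-Independence and empirical equivalence: a deterministic $\lambda$ that is statistically independent of $\bar{m}$ must nonetheless reproduce every conditional $\Pr(\bar{o}\mid\bar{m}=\bar{a})$, and it is exactly the rationality hypothesis that guarantees a single finite, uniformly distributed $\Lambda$ can do so. The remaining steps are routine marginalization checks, using \cref{theorem:compareSemantics} to pass between the two semantics.
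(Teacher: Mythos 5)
Your proposal is correct, and its core — the probabilistic construction — is essentially the paper's own proof (adapted there from Brandenburger--Yanofsky): your common denominator $N$, uniform hidden-variable set $\Lambda=\set{1,\dots,N}$, and, for each $\bar a$, the partition of $\Lambda$ into blocks of size $N\cdot e_{\XX}(\bar b\mid\bar a)$ are exactly the paper's $N=\mathrm{lcm}$ of the denominators and sets $\Lambda(\bar a,\bar b)$ with $\abs{\Lambda(\bar a,\bar b)}=p(\bar a,\bar b)N$; your computation $\Pr(\bar m=\bar a,\lambda=j)=e_{\XX}(\bar a)/N$ matches the paper's verification that $\Pr(\lambda=c\mid\bar m=\bar a)=1/N$ independently of $\bar a$. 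Where you genuinely differ is the relational case: the paper proves \emph{only} the probabilistic case and obtains the relational statement by reduction (put uniform, hence rational, probabilities on the relational team and descend via \cref{prop:compareExtension}, using that $\WeakDet^h_n\land\LInd^h_n\in\FO(\perp)$ and that the probabilistic-to-relational direction of \cref{theorem:compareSemantics} holds for all of $\FO(\perp)$) — the route you mention as an alternative at the end. Your direct relational construction, taking $\Lambda:=\prod_{\bar a\in X(\bar m)}E_{\bar a}$ to be the set of global response functions and $F(s)=\set{f\in\Lambda: f(s(\bar m))=s(\bar o)}$, does not appear in the paper but is correct and self-contained: nonemptiness of each $F(s)$, the dependence $\dep(\bar m\lambda,\bar o)$, and the rectangularity $Y(\bar m\lambda)=Y(\bar m)\times Y(\lambda)$ all check out. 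It is the global analogue of the per-component choice functions used in the paper's proof of \cref{theorem:localNormalForm} (global because the response may depend on all of $\bar m$, which is why you get Weak rather than Strong Determinism, as required). What each approach buys: your direct argument avoids probabilistic machinery entirely and makes the "deterministic strategy drawn independently of the measurement" intuition explicit, at the cost of an exponentially large $\Lambda$; the paper's single probabilistic proof is more economical (one construction, $N$ hidden values, relational case for free) and exhibits the reduction machinery of Section 3 in action.
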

\begin{proof}
 Note that we only need to prove the probabilistic case, as the restriction to
  rational probabilities does not impair the argument in the proof of \cref{prop:compareExtension}.
  We adapt a construction from \cite{BrandenburgerYan08} to our team semantic framework.
    
  \medskip
  Let $\XX = (X,\Pr_{\XX})$ be a probabilistic team over $\VarE$,
  with $| X(\bar{m}) | = L$ and  $| X(\bar{o}) | = K$.
  For every pair $z=(\bar a,\bar b)\in X(\bar m)\times X(\bar o)$, let 
  \[ \Pr_{\XX}(\bar{o} = \bar{b} \mid \bar{m} = \bar{a}) = p(z) =
    \frac{r(z)}{s(z)} \]
  with $r(z), s(z) \in \nats$ such that $s(z)\neq 0$ and $r(z)$ and $s(z)$ are co-prime for each $z$. 
  Let $N$ be the least common multiple of all numbers $s(z)$,
  and choose a set $\Lambda$ with $N$ points. The idea is to make all $\lambda \in \Lambda$ equally likely and
  independent from the measurements. However, we assign a different number of
  hidden-variables to each measurement-output pair as follows. 
  For every $z$, let $N(z):=p(z)N \in \nats$. Clearly, for every $\bar a\in X(\bar m)$ we have that
  $\sum_{\bar b\in X(\bar o)} N(\bar a,\bar b)=N$.
  For every $\bar a\in X(\bar m)$ we thus get a partition of $\Lambda$ into
  a collection $(\Lambda(\bar a,\bar b))_{\bar b\in X(\bar o)}$ of disjoint sets
  where $\Lambda(z)$ has $N(z)$ elements.
  We now define the Skolem extension $\YY=\XX[\lambda\mapsto\Ff]$ by the function
  $\Ff:X\to\Delta(\Lambda)$ that maps $s\in X$ with $s(\bar m,\bar o)=z$ to the uniform
  distribution over $\Lambda(z)$.
 We claim that $\YY$ satisfies Weak Determinism and $\lambda$-Independence.

\medskip\noindent{\bf Weak Determinism:} Weak Determinism follows immediately
from the construction of $\YY$. Regard arbitrary $\bar{a} \in X(\bar{m}), c \in
\Lambda$. There exists a unique $\bar{b} \in X(\bar{o})$ such that $c \in
N((\bar{a},\bar{b}))$. This is by construction the unique $\bar{b} \in X(\bar{o})$ with 
  $\Pr(\bar{o} = \bar{b}, \lambda = c \mid \bar{m} = \bar{a}) > 0$.

\medskip\noindent{\bf $\lambda$-Independence:} 
To prove $\lambda$-Independence we have to show that $\YY\models \bar m\perp
\lambda$. Regard arbitrary $\bar{a} \in X(\bar{m})$ and $c \in \Lambda$ and
choose the unique $\bar{b}$ with non-zero probability implied by Weak
Determinism. We observe that
  \begin{align*}
    &\Pr(\lambda = c \mid \bar{m} = \bar{a}) \ =\ \Pr(\bar o = \bar b , \lambda = c \mid \bar m = \bar a) \\
   &\qquad = \quad \Pr(\lambda = c \mid \bar{o} = \bar{b}, \bar{m} = \bar{a})\cdot \Pr(\bar{o} = \bar{b} \mid \bar{m} = \bar{a})
      = \frac{1}{N(z)} p(z) = \frac{1}{N},
  \end{align*}
independent of $\bar{a}$. This completes the proof.
\end{proof}

Next we provide a normal form for hidden-variable models satisfying Locality and
$\lambda$-Independence, i.e.~``local realism''. This normal form is well-known
from prior literature, cf.~e.g.~\cite{Fine82b, AbramskyBra11, Abramsky13} --
nevertheless we provide an elementary and accessible proof for the benefit of
the reader new to the study of hidden-variables.

\begin{theorem}\label{theorem:localNormalForm}
  Every relational empirical model and every probabilistic empirical model with
  rational probabilities which admits an empirically equivalent hidden-variable satisfying
  Locality and $\lambda$-Independence also admits an equivalent model satisfying
  Strong Determinism and $\lambda$-Independence.
\end{theorem}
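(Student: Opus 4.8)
The plan is to use the classical refinement of the hidden variable by \emph{deterministic response functions}, going back to Fine~\cite{Fine82b}. Starting from a hidden-variable model that realises the given empirical model and satisfies $\Loc^h_n$ and $\LInd^h_n$, with hidden variable $\lambda$ ranging over a finite set $\Lambda$, I would enlarge the hidden variable to a pair $\mu=(c,\bar g)$, where $c\in\Lambda$ and $\bar g=(g_1,\dots,g_n)$ is a tuple of local response functions $g_i\colon M_i\to O_i$. Thus the new hidden variable carries, in addition to the old value $c$, a fully specified deterministic answer $g_i(a)$ for every component $i$ and every possible measurement $a\in M_i$; the new set of hidden variables $\Lambda':=\Lambda\times\prod_{i=1}^n O_i^{M_i}$ is again finite. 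In the new model the outcome of component $i$ under measurement $m_i=a_i$ is declared to be $g_i(a_i)$, which makes $o_i$ a function of $(m_i,\mu)$ and hence yields $\StrongDet^h_n=\bigwedge_i\dep(m_i\lambda,o_i)$ at once. Likewise $\LInd^h_n$ will hold by drawing $\mu$ independently of $\bar m$, so both structural properties are essentially built into the construction; the real content lies in preserving the empirical model.

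For the probabilistic case I would put on $\mu$ the \emph{independent product coupling}: weight $(c,\bar g)$ by $\Pr(\lambda=c)\cdot\prod_{i=1}^n\prod_{a\in M_i}\Pr(o_i=g_i(a)\mid m_i=a,\lambda=c)$, keep the measurement marginal $\Pr(\bar m)$ of the original team, and let the outcome be deterministic given $(\bar m,\mu)$. A short calculation shows that summing these weights over all $\bar g$ yields $1$ (so this is a genuine distribution) and that $\mu$ is by construction independent of $\bar m$, giving $\LInd^h_n$. The decisive step is empirical equivalence: for fixed $\bar a,\bar b$ one marginalises over all profiles with $g_i(a_i)=b_i$, and the free values $g_i(a)$ for $a\ne a_i$ sum out to $1$ in each component, collapsing the weight to $\sum_{c}\Pr(\lambda=c)\prod_{i}\Pr(o_i=b_i\mid m_i=a_i,\lambda=c)$. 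By the factorisation~(**) of probabilistic Locality this product equals $\Pr(\bar o=\bar b\mid\bar m=\bar a,\lambda=c)$, and then $\LInd^h_n$ turns $\Pr(\lambda=c)$ into $\Pr(\lambda=c\mid\bar m=\bar a)$, so the sum telescopes to $\Pr(\bar o=\bar b\mid\bar m=\bar a)$, the original empirical conditional. Since the original probabilities are rational, all new weights are rational as well.

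For the relational case I would run the parallel construction, declaring a profile $\bar g$ \emph{consistent with} $c$ when each value $g_i(a)$ is a possible outcome of measurement $a$ under $c$, i.e.\ $(a,g_i(a),c)\in X(m_i,o_i,\lambda)$, and putting $(\bar a,\bar b,(c,\bar g))$ into the new team exactly when $\bar a\in X(\bar m)$ and $b_i=g_i(a_i)$ for all $i$. Under $\LInd^h_n$ we have $X(\bar m\lambda)=X(\bar m)\times X(\lambda)$, so every measurement co-occurs with every $c$ and hence every measurement has at least one possible outcome under each $c$; this guarantees that consistent profiles exist and can be freely completed, which is exactly what makes $\StrongDet^h_n$ and $\LInd^h_n$ go through. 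Empirical equivalence splits into two inclusions: the easy direction projects a witness $(\bar a,\bar b,c)$ of the original model to component-wise possibilities, and the harder direction is precisely condition~(*) of \cref{lem:locSemanticsRel}, which upgrades component-wise possibility of $(a_i,b_i,c)$ at a measurement with $(\bar a,c)\in X(\bar m,\lambda)$ to joint possibility of $(\bar a,\bar b,c)$.

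The main obstacle, in both cases, is the empirical-equivalence verification rather than the determinism or independence properties, which are immediate from the design of $\mu$. Concretely, the crux is that the response-function marginalisation only reproduces the \emph{component-wise} conditionals $\Pr(o_i\mid m_i,\lambda)$ (probabilistic) or the component-wise possibilities (relational), and one needs exactly the Locality hypothesis — the factorisation~(**), respectively condition~(*) — to lift these back to the \emph{joint} outcome behaviour $\Pr(\bar o\mid\bar m,\lambda)$ and thereby recover the original empirical model. I would therefore single out this Locality-driven lifting as the heart of the proof, with $\lambda$-Independence playing the auxiliary but necessary role of making the hidden-variable weight independent of the measurement, so that conditioning on $\lambda$ is compatible with conditioning on $\bar m$.
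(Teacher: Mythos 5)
Your argument is correct, and it splits into a part that matches the paper and a part that is genuinely different. In the relational case your construction is essentially the paper's: there, too, the new hidden-variable set consists of pairs $(c,f)$ of an old value $c$ with a tuple of local response functions satisfying $f_i(a)\in O_i(a,c)$; the only difference is organisational, since the paper realises the new model as a Skolem extension $Z=X[\lambda\mapsto F']$ with $F'(s)=\set{(c,f)\colon f_i(s(m_i))=s(o_i) \text{ for all } i}$, so empirical equivalence is automatic and the Locality-driven lifting (your direction via condition (*) of \cref{lem:locSemanticsRel}) is spent instead on verifying $\lambda$-Independence, i.e.\ that every pair $(\bar a,(c,f))$ actually occurs --- the same content, relocated. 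In the probabilistic case your route genuinely differs: the paper writes each conditional $p_i(z_i)=\Pr_{\YY}(o_i=b_i\mid m_i=a_i,\lambda=c)$ as a fraction, takes least common multiples $N_i$ of the denominators, and refines the hidden variable to $\Lambda\times(\Lambda_1\times\dots\times\Lambda_n)$ with each $\Lambda_i$ partitioned into integer-size blocks $N_i(z_i)=p_i(z_i)N_i$, the same device as in \cref{prop:existenceWDlI}, and this is exactly where rationality enters. Your Fine-style product coupling with weight $\Pr(\lambda=c)\prod_i\prod_{a}\Pr(o_i=g_i(a)\mid m_i=a,\lambda=c)$ avoids denominators entirely, and your marginalisation computation (free values summing to one, then the factorisation (**) and $\lambda$-Independence) is valid for arbitrary real probabilities, so your proof in fact establishes the probabilistic statement \emph{without} the rationality hypothesis --- a strengthening of the theorem as stated. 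Two points to make explicit: the conditionals $\Pr(o_i=\cdot\mid m_i=a,\lambda=c)$ must be well defined for \emph{every} pair $(a,c)$, which is guaranteed by $\lambda$-Independence of the given model (you flag the relational analogue; state the probabilistic one too), and the new hidden-variable set should be cut down to the consistent pairs $(c,\bar g)$ of positive weight, since the paper's probabilistic teams carry distributions with values in $(0,1]$.
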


\begin{proof}[Proof (relational case)]
Let $X$ be a team over $\VarE$ such that
$X+\Lambda\models  (\ex \lambda\in\Lambda) \LInd^h_n \land \Loc^h_n$. 
Thus, there is a Skolem extension $Y=(X+\Lambda)[\lambda\mapsto F]$
such that $Y\models \LInd^h_n \land \Loc^h_n$
For $\bar{a} \in X(\bar{m}), c \in \Lambda$ we let
  \[ O(\bar a,c) := \set{\bar{b} \in X(\bar{o}): (\bar a,\bar b,c) \in
      Y(\bar{m},\bar{o},\lambda)}.\]
By $\lambda$-Independence, $O(\bar a,c) \neq\emptyset$.
Also, for each $i \leq n$ and $a \in X(m_i)$, we set
  \[ O_i(a,c) := \set{b \in X(o_i): (a,b,c) \in Y(m_i,o_i,\lambda)}. \]
Note that $Y \models \Loc^h_n$ and \cref{lem:locSemanticsRel} 
imply that $O(\bar a,c) = \prod_{i=1}^n O^i(a_i,c)$.
  
\medskip
  Let $F^i_c$ be the set of all functions $f_i\colon X(m_i) \to X(o_i)$ where
  $f_i(a) \in O_i(a,c)$ for all $a \in X(m_i)$.
 The set $\Lambda'$ is now defined to contain all pairs $(c,f) = (c,(f_1,\dots,f_n))$
  where $c \in Y(\lambda)$ and $f_i \in F^i_c$ for  $i \leq n$. 

\medskip
  Let $F'\colon X \to \Pp^+(\Lambda')$ be the function
  with $F'(s)= \set{(c,f) \in \Lambda':  f_i(s(m_i)) = s(o_i) \text{ for all $i$} }$. 
  It is straightforward to see that $F'(s)$ is indeed non-empty. 
  We claim that $Z := X[\lambda\mapsto F']$
  satisfies Strong Determinism and $\lambda$-Independence.
  
\medskip\noindent{\bf Strong Determinism:} 
  Let $s,s' \in Z$ with $s(\lambda) =
  s'(\lambda) = (c,f)$ and $s(m_i) = s'(m_i) = a$. By
  definition of $F'$, it holds that $s(o_i) = f_i(s(m_i)) = f_i(s'(m_i))
  = s'(o_i)$.
  
\medskip\noindent{\bf $\lambda$-Independence:} We show the slightly stronger claim that for every
  $\bar{a} \in X(\bar{m})$ and $(c,f) \in \Lambda'$ there exists some $s \in Z$ with
  $s(\bar{m}) = \bar{a}$ and $s(\lambda) = (c,f)$. By definition of $\Lambda'$,
  it holds that $b_i := f_i(a_i) \in O_i(a_i,c)$ for all $i \leq n$. Thus,
  $\bar{b} \in \prod_{i=1}^n O^i(a_i,c) = O(\bar{a},c)$. Choose $s \in X$ with $s(\bar{m}) = \bar{a},
  s(\bar{o}) = \bar{b}$. Since $(c,f) \in F'(s)$, our claim and thereby $\lambda$-Independence follows.
\end{proof}

\begin{proof}[Proof (probabilistic case)]
Let  now $\XX = (X,\Pr_{\XX})$ be a probabilistic team over $\VarE$ with
$\XX+\Lambda\models  (\ex \lambda\in\Lambda) \LInd^h_n \land \Loc^h_n$,
and let $\YY=(\XX+\Lambda)[\lambda\mapsto \Ff]$ be a Skolem extension
with $\YY\models \LInd^h_n \land \Loc^h_n$
 As in the relational case we set for $\bar{a} \in X(\bar{m}), c \in \Lambda$
  \begin{align*}
  O(\bar a,c) &= \set{\bar{b} \in X(\bar{o}): (\bar{a},\bar{b},c) \in Y(\bar{m},\bar{o},\lambda)} \\
    &= \set{\bar{b} \in X(\bar{o}): \Pr_{\YY}(\bar{o} = \bar{b} \mid \bar{m} = \bar{a},\lambda=c) > 0},
  \end{align*}
 which is non-empty by $\lambda$-Independence.
  Also, we let for all $i \leq n$ and $a \in X(m_i)$
  \[ O_i(a,c) := \set{b \in X(o_i): (a,b,c) \in Y(m_i,o_i,\lambda)} \]
such that, by Locality,  $O(\bar{a},c) = \prod_{i=1}^n O_i(a_i,c)$. 

Let $Z_i=Y(m_i)\times Y(o_i)\times\Lambda$. For every triple $z_i=(a_i,b_i,c)\in Z_i$, let
  \[ \Pr_{\YY}(o_i = b_i \mid m_i=a_i, \lambda = c) = p_i(z_i)=\frac{r_i(z_i)}{s_i(z_i)}.\]
where $r_i(z)$ and $s_i(z)$ are co-prime natural numbers with $s_i(z_i)\neq 0$.
Let $N_i$ be the least common multiple of the numbers $s_i(z_i)$,
let $\Lambda_i$ be a set with $N_i$ elements and  
define 
  \[ \tilde{\Lambda} = \Lambda \times (\Lambda_1 \times \dots \times \Lambda_n).\]
 We then put  $N_i(z_i) = p_i(z_i)N_i$ and construct,
 for every pair $(a_i,c)\in Y(m_i)\times\Lambda$ a partition $(\Lambda_i(a_i,b_i,c))_{b_i\in Y(o_i)}$
of  $\Lambda_i$ with $\abs{\Lambda_i(z_i)}=N_i(z_i)$ for all $z_i\in Z_i$.
This is possible because 
\begin{align*}
\sum_{b_i\in Y(o_i)} N_i(a_i,b_i,c)  &= \sum_{b_i\in Y(o_i)} p_i(a_i,b_i,c) N_i\\ 
&= N_i \sum_{b_i\in Y(o_i)} \Pr_{\YY}(o_i = b_i \mid m_i=a_i, \lambda = c) = N_i. 
\end{align*}
We now define a function $\Ff'\colon X \to \Delta(\tilde{\Lambda})$ as follows.
An assignment $s\in X$ with $s(\bar m,\bar o,\lambda)=(\bar a,\bar b,c)$
defines the tuple $(z_1,\dots,z_n)$ where $z_i=(a_i,b_i,c)\in Z_i$.
The function $\Ff'$ maps $s$ to the probability distribution $\Ff'_s$ with
\[  \Ff'_s(c,(c_1,\dots,c_n)) = \frac{\Pr_{\YY}(\lambda = c \mid \bar{m} = \bar{a}, \bar{o} = \bar{b})}
{N_1(z_1)\dots N_n(z_n)},\]
if  $(c_1,\dots,c_n)\in \Lambda_1(z_1)\times\dots\times\Lambda_n(z_n)$,
and   $\Ff'_s(c,(c_1,\dots,c_n))=0$, otherwise.
It is straightforward to see that this is indeed always a probability distribution.
We then define $\ZZ = \XX[\lambda\mapsto\Ff']$ and claim that $\ZZ$ satisfies Strong
Determinism and $\lambda$-Independence:

\medskip\noindent{\bf Strong Determinism:} Regard an arbitrary $a_i \in X(m_i)$
and $(c,(c_1,\dots,c_n)) \in \tilde{\Lambda}$. 
Suppose that $\Pr_{\ZZ}(o_i = b_i \mid m_i = a_i, \lambda
= (c,(c_1,\dots,c_n))) > 0$. By definition of $\Ff'$ it then follows that 
$c_i \in \Lambda_i(z_i)$ for $z_i = (a_i,b_i,c)$. Since $a_i,c,c_i$ are fixed,
this uniquely determines $b_i$ and shows Strong Determinism.

\medskip\noindent{\bf $\lambda$-Independence:} Here, we abuse notation a bit and 
write $\lambda=(\lambda_1,\lambda_2)$ to refer to the
two components of elements of $\tilde{\Lambda}$. Let $\bar a\in X(\bar m)$ and $ (c,(c_1,\dots,c_n))\in \tilde{\Lambda}$.
There is a unique $\bar b\in X(\bar o)$ with $c_i\in \Lambda(a_i,b_i,c)$ for all $i$. Thus,
\begin{align*}
  &\Pr_{\ZZ}(\lambda = (c,(c_1,\dots,c_n)) \mid \bar{m} = \bar{a})
  = \Pr_{\ZZ}(\lambda =  (c,(c_1,\dots,c_n), \, \bar{o} = \bar{b} \mid \bar{m} = \bar{a}) \\
 &\quad =\quad \Pr_{\ZZ}(\lambda_2 = (c_1,\dots,c_n) \mid \lambda_1 = c,  \,\bar{o} = \bar{b}, \, \bar{m} = \bar{a})\cdot
    \Pr_{\ZZ}(\lambda_1 = c, \, \bar{o} = \bar{b} \mid \bar{m} = \bar{a}).
\end{align*}    

\begin{lemma} $\Pr_{\ZZ}(\bar{o} = \bar{b}, \lambda_1 = c \mid \bar{m} = \bar{a})=
 \Pr_{\YY}(\lambda = c)\prod_{i=1}^n p_i(z_i)$.
\end{lemma}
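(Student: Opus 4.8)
The plan is to unfold the definition of $\ZZ=\XX[\lambda\mapsto\Ff']$, reduce the left-hand side to a quantity over $\YY$, and then apply the $\lambda$-Independence and probabilistic Locality characterizations already established for $\YY$. Since $\XX$ is a team over $\VarE$, the variable $\lambda$ is new, so the support assignments of $\ZZ$ are exactly the $s[\lambda\mapsto(c,(c_1,\dots,c_n))]$ with $s\in X$ and $(c,(c_1,\dots,c_n))\in\tilde\Lambda$, and $\Pr_{\ZZ}(s[\lambda\mapsto(c,(c_1,\dots,c_n))])=\Pr_{\XX}(s)\cdot\Ff'_s(c,(c_1,\dots,c_n))$. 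Because $\dom(X)=\VarE$, for the fixed pair $(\bar a,\bar b)$ there is a unique $s\in X$ with $s(\bar m)=\bar a$ and $s(\bar o)=\bar b$, and $\Pr_{\XX}(s)=\Pr_{\XX}(\bar m=\bar a,\bar o=\bar b)$.

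First I would compute the joint probability $\Pr_{\ZZ}(\bar o=\bar b,\lambda_1=c,\bar m=\bar a)$ by summing $\Pr_{\XX}(s)\cdot\Ff'_s(c,(c_1,\dots,c_n))$ over the second $\tilde\Lambda$-component $(c_1,\dots,c_n)$. By the definition of $\Ff'$, for fixed first component $c$ the value $\Ff'_s(c,\cdot)$ is constant, equal to $\Pr_{\YY}(\lambda=c\mid\bar m=\bar a,\bar o=\bar b)/(N_1(z_1)\cdots N_n(z_n))$ with $z_i=(a_i,b_i,c)$, on the block $\Lambda_1(z_1)\times\dots\times\Lambda_n(z_n)$ and zero elsewhere. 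The key counting step is that this block has exactly $|\Lambda_1(z_1)|\cdots|\Lambda_n(z_n)|=N_1(z_1)\cdots N_n(z_n)$ elements, so the sum collapses to $\Pr_{\YY}(\lambda=c\mid\bar m=\bar a,\bar o=\bar b)$. This yields $\Pr_{\ZZ}(\bar o=\bar b,\lambda_1=c,\bar m=\bar a)=\Pr_{\XX}(\bar m=\bar a,\bar o=\bar b)\cdot\Pr_{\YY}(\lambda=c\mid\bar m=\bar a,\bar o=\bar b)$.

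Next I would pass to the conditional. Since $\lambda$ is new, $\ZZ$ marginalizes to $\XX$, and $\YY$ marginalizes to $\XX$ on $\VarE$ as well, so $\Pr_{\ZZ}(\bar m=\bar a)=\Pr_{\XX}(\bar m=\bar a)=\Pr_{\YY}(\bar m=\bar a)$ and $\Pr_{\XX}(\bar m=\bar a,\bar o=\bar b)=\Pr_{\YY}(\bar m=\bar a,\bar o=\bar b)$. Dividing, the joint probability above turns into $\Pr_{\ZZ}(\bar o=\bar b,\lambda_1=c\mid\bar m=\bar a)=\Pr_{\YY}(\bar o=\bar b,\lambda=c\mid\bar m=\bar a)$, so the entire left-hand side collapses to the corresponding quantity in $\YY$. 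Finally I would factor the latter by the chain rule as $\Pr_{\YY}(\bar o=\bar b\mid\bar m=\bar a,\lambda=c)\cdot\Pr_{\YY}(\lambda=c\mid\bar m=\bar a)$, apply $\YY\models\LInd^h_n$ (giving $\Pr_{\YY}(\lambda=c\mid\bar m=\bar a)=\Pr_{\YY}(\lambda=c)$) and the probabilistic Locality characterization of $\YY$ (giving $\Pr_{\YY}(\bar o=\bar b\mid\bar m=\bar a,\lambda=c)=\prod_{i=1}^n p_i(z_i)$), obtaining exactly $\Pr_{\YY}(\lambda=c)\prod_{i=1}^n p_i(z_i)$.

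I expect the counting argument in the second step to be the main obstacle: one must track carefully that the $z_i$ entering $\Ff'$ are determined by the fixed $(\bar a,\bar b)$ together with the first component $c$ of the $\tilde\Lambda$-value, and that the block sizes $|\Lambda_i(z_i)|=N_i(z_i)$ are precisely the normalizing denominators, so that summing the uniform mass over the product block recovers $\Pr_{\YY}(\lambda=c\mid\bar m=\bar a,\bar o=\bar b)$ without stray factors. A minor point to check in passing is that all conditionings are well-defined: for the relevant $\bar b$, where each $b_i\in O_i(a_i,c)$, Locality gives $\bar b\in O(\bar a,c)$, so the conditioning events have positive probability and no degenerate case arises.
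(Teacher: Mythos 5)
Your proposal is correct and takes essentially the same route as the paper: both reduce the left-hand side to $\Pr_{\YY}(\bar{o}=\bar{b},\lambda=c\mid\bar{m}=\bar{a})$ and then factor via the chain rule, applying $\lambda$-Independence and the probabilistic Locality characterization of $\YY$ to obtain $\Pr_{\YY}(\lambda=c)\prod_{i=1}^n p_i(z_i)$. The only difference is that where the paper asserts in one line that marginalizing $\ZZ$ to $(\bar{m},\bar{o},\lambda_1)$ gives rise to $\YY$, you verify this explicitly by the block-counting argument $\abs{\Lambda_1(z_1)\times\dots\times\Lambda_n(z_n)}=N_1(z_1)\cdots N_n(z_n)$, which correctly fills in the step the paper leaves implicit.
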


Indeed, $\Pr_{\ZZ}(\lambda_1 = c, \bar{o} = \bar{b},\bar{m} = \bar{a}) =
\Pr_{\YY}(\lambda = c, \bar{o} = \bar{b}, \bar{m} = \bar{a})$ since marginalizing $\ZZ$ to 
$(\bar{m},\bar{o},\lambda_1)$ gives rise to $\YY$.
Hence
\begin{align*}
& \Pr_{\ZZ}(\bar{o} = \bar{b},\, \lambda_1 = c \mid \bar{m} = \bar{a})\quad =\quad  \Pr_{\YY}(\bar{o} = \bar{b},\, \lambda = c \mid \bar{m} = \bar{a}) \\
&\qquad =\quad  \Pr_{\YY}(\bar{o} = \bar{b} \mid \lambda = c, \, \bar{m} = \bar{a})\cdot
   \Pr_{\YY}(\lambda=c  \mid \bar{m} = \bar{a}), 
\end{align*}
and by Locality and $\lambda$-Independence of $\YY$ this coincides with
\[ \prod_{i=1}^n \Pr_{\YY}(o_i = b_i \mid m_i = a_i, \lambda = c)*\Pr_{\YY}(\lambda = c) 
\quad = \quad \Pr_{\YY}(\lambda = c)\prod_{i=1}^n p_i(z_i).\]
This proves the lemma. Putting it together with the equation above we get 
\begin{align*} 
  &\Pr_{\ZZ}(\lambda = (c,(c_1,\dots,c_n)) \mid \bar{m} = \bar{a}) \\ 
&\qquad =\quad  \prod_{i=1}^n \frac{1}{N_i\cdot p_i(z_i)}\cdot \Pr_{\YY}(\lambda = c)\prod_{i=1}^n p_i(z_i) 
  \quad = \quad \frac{\Pr_{\YY}(\lambda=c)}{N_1\cdots N_n},
\end{align*}
which is independent from $\bar a$.
This proves that $\ZZ\models\LInd^h_n$. 

\end{proof}

\section{Bell's Theorem and Non-Locality}\label{sec:bell}

The famous theorem by Bell, originally formulated in the groundbreaking work
\cite{Bell64}, showed that a certain flavor of local hidden-variable theories
cannot reproduce some predictions by quantum mechanics. This property
corresponds to the conjunction of Strong-Determinism and $\lambda$-Independence
-- which is by \cref{theorem:localNormalForm} closely related to the conjunction of
Locality and $\lambda$-Independence -- in our framework. Commonly, the class of
theories refuted by Bell's Theorem is referred to by the name of ``local
  realism'' (cf.~\cite{ClauserShi78}). 

\medskip
Since Bell's work contains one of the most influential results on
  hidden-variables, we think it is worthwhile to discuss how its assumptions
  compare to our formulation. All important assumptions of Bell,
  which are led to a contradiction, are contained in equation (2) from the
  original paper \cite{Bell64}, namely
  \[ P(\vec{a},\vec{b}) = \int
    \Pr(\lambda)A(\vec{a},\lambda)B(\vec{b},\lambda)\,\mathrm{d} \lambda .\]
  There, $\vec{a},\vec{b}$ range over measurements of two components $A$ and
  $B$, $\lambda$ ranges over a space of hidden-variables, and $A$ and $B$ denote
  functions that provide the outcomes of the respective components
  deterministically depending on their arguments. Thus, $P(\vec{a},\vec{b})$
  denotes the expectation value of the product of the outcomes of $A$ and $B$
  when measurements $\vec{a},\vec{b}$ are chosen.
  
  \medskip
  The assumed properties are implicitly encoded in the dependencies of the
  equation above. Writing $A(\vec{a},\lambda)$ and $B(\vec{b},\lambda)$ entails
  the assumption that the outcome of a component is deterministically given by
  the measurement in the component together with the hidden-variable,
  i.e.~Strong Determinism. Writing $\Pr(\lambda)$ instead of $\Pr(\lambda \mid
  \vec{a},\vec{b})$ assumes that the distribution over $\lambda$ is independent
  from the measurements chosen, i.e.~$\lambda$-Independence. Thus, proving that
  Strong Determinism and $\lambda$-Independence cannot generally be realized
  together constitutes a proof of Bell's No-Go-Theorem.

\medskip
The proof of Bell's Theorem is probabilistic in nature. By
\cref{prop:compareExtension}, a relational analogue implies the probabilistic
formulation. Therefore, we provide here a relational no-go theorem based on the Hardy
Paradox from \cite{Hardy93} (instead of the original argument due to Bell),
following the presentation in \cite{Abramsky13}. We add some further remarks.
The significance of Bell's Theorem goes far beyond
the theoretical no-go result as it allows a way to experimentally test
Non-Locality (cf.~for example \cite{ClauserShi78}). 
In this context it should be noted, that the empirical models and teams appearing in this and
related results \cite{Abramsky13} are not just
arbitrary mathematical constructions, but arise from quantum mechanical
states and measurements. We do not go into details here, but refer to
\cite{AbramskyPulVaa21} where, based on \cite{Abramsky13},
the notion of \emph{quantum realisable teams}  has been spelled out explicitly.

\begin{theorem}
  There exists a relational empirical model 
  without an equivalent hidden-variable model satisfying the conjunction of
  Strong Determinism and $\lambda$-Independence.
\end{theorem}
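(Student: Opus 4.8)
The plan is to exhibit a single concrete relational empirical model — the \emph{Hardy model} of arity $n=2$ — and to show that no empirically equivalent hidden-variable model can satisfy $\StrongDet^h_2 \land \LInd^h_2$. I would take the two components to be Alice and Bob with $M_1 = \set{a_1,a_2}$, $M_2 = \set{b_1,b_2}$ and $O_1 = O_2 = \set{0,1}$, so that $\bar m$ ranges over the four measurement contexts. The team $X$ over $\VarE$ is chosen to realise exactly the Hardy pattern of possibilities and impossibilities: $(a_1,b_1,1,1) \in X$, while $(a_1,b_2,1,0) \notin X$, $(a_2,b_1,0,1) \notin X$, and $(a_2,b_2,1,1) \notin X$, with the remaining entries filled in so that all four contexts and all measurement and outcome values occur (so that $X$ is a legitimate, indeed quantum-realisable, empirical model). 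I would then argue by contradiction: suppose some $Y$ over $\VarH$ satisfies $Y \upharpoonright \VarE = X$ together with $\StrongDet^h_2 \land \LInd^h_2$.

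The argument rests on two reductions. First, Strong Determinism $\bigwedge_i \dep(m_i\lambda, o_i)$ means that, for each fixed hidden value $c$, the outcome $o_i$ is a function $f_i^c$ of the local measurement $m_i$ alone; thus each $c$ determines a pair of local response functions $f_1^c, f_2^c$. Second, $\lambda$-Independence is the simple atom $\bar m \perp \lambda$, which by definition gives $Y(\bar m, \lambda) = Y(\bar m) \times Y(\lambda)$: every context $\bar a$ co-occurs in $Y$ with every hidden value $c$. This is the decisive ingredient — it forbids the hidden variable from selectively steering which context is realised, and it is precisely what the trivial strong-deterministic model of \cref{prop:existenceSD} lacks.

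With these in hand the contradiction is the Hardy chain. Since $(a_1,b_1,1,1) \in X = Y \upharpoonright \VarE$, there is an assignment in $Y$ realising it; let $c_0$ be its $\lambda$-value, so $f_1^{c_0}(a_1) = 1$ and $f_2^{c_0}(b_1) = 1$. By $\lambda$-Independence the context $(a_1,b_2)$ co-occurs with $c_0$, and Strong Determinism forces its outcome to be $(1, f_2^{c_0}(b_2))$; since $(a_1,b_2,1,0) \notin X$ this yields $f_2^{c_0}(b_2) = 1$. Symmetrically, $(a_2,b_1)$ co-occurring with $c_0$, together with $(a_2,b_1,0,1) \notin X$, forces $f_1^{c_0}(a_2) = 1$. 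Finally $(a_2,b_2)$ co-occurs with $c_0$, so its forced outcome is $(f_1^{c_0}(a_2), f_2^{c_0}(b_2)) = (1,1)$, giving $(a_2,b_2,1,1) \in X$ and contradicting the last impossibility.

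I expect the only real obstacle to lie in the choice of the empirical model rather than in the deduction: one must pin down a possibility/impossibility pattern whose forced local extensions close into a cycle, and verify that it is a genuine (and, for the intended physical reading, quantum-realisable) empirical model. Once the Hardy table is fixed, the team-semantic deduction above is routine. I would also remark that, although the statement is purely relational, by the contrapositive of \cref{prop:compareExtension} this relational no-go immediately yields the probabilistic Bell-type impossibility for any probabilistic empirical model whose possibilistic collapse is $X$.
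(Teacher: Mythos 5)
Your proposal is correct and follows essentially the same route as the paper: both exhibit a Hardy-type empirical model over two binary-outcome components and derive the contradiction by chaining forced outcomes through all four measurement contexts, using that $\bar{m} \perp \lambda$ makes every context co-occur with the hidden value $c_0$ taken from the possible tuple $(a_1,b_1,1,1)$. The remaining differences are cosmetic -- you read $\dep(m_i\lambda,o_i)$ directly as local response functions $f_i^{c}$, whereas the paper first notes $\StrongDet^h_n \relEnt \ParInd^h_n$ and chains assignments $s_1,\dots,s_6$ via Parameter Independence, and your table of forbidden tuples is a relabelled variant of the paper's conditions (3)--(6).
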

\begin{proof}
  Let $X$ be an arbitrary team over $\Var_2^e$ which satisfies the following
  assumptions:
  \begin{enumerate}
    \item $X(m_1,m_2) = \set{(a_1,b_1),(a_1,b_2),(a_2,b_1),(a_2,b_2)}$.
    \item $X(o_1,o_2) \sub \set{R,G}^2$ where $R \neq G$
    \item $(a_1b_1,RR) \in X$
    \item $(a_1b_2,RR) \not\in X$
    \item $(a_2b_1,RR) \not\in X$
    \item $(a_2b_2,GG) \not\in X$
  \end{enumerate}
  Of course many such teams exist. Assume that $X$ has an equivalent
  hidden-variable model represented by a team $Y$ with 
  $Y \models \StrongDet^h_2 \land \LInd^h_2$. Note that $Y \models \ParInd^h_2$ 
  since $\StrongDet^h_n \relEnt \ParInd^h_n$. Due to
  (3), there exists $s_1 \in Y$ with $s(m_1m_2) = a_1b_1$ and $s(o_1o_2) = RR$.
  Define $c = s_1(\lambda)$.
  There also exists an $s_2 \in Y$ with $s_2(m_1m_2) = a_1b_2$ and $s(\lambda) =
  c$ due to $\lambda$-Independence and $(a_1,b_2) \in X(\bar{m}) =
  Y(\bar{m})$. Applying Parameter Independence on $s_1$ and $s_2$ gives $s_3 \in
  Y$ with $s_3(\bar{m}) = s_2$ and $s_3(o_1) = s_1(o_1) = R$. Assumption (4)
  gives $s_3(o_2) = G$. 
  
  \medskip
  Analogously to $s_2$, we have $s_4 \in Y$ with $s_4(\bar{m}) = a_2b_2$ and
  $s_4(\bar{m}) = c$. Applying Parameter Independence on $s_2$ and $s_4$ gives $s_5 \in Y$ with
  $s_5(\bar{m}) = a_2b_2$ and $s_5(\bar{o}) = RG$. 
  
  \medskip
  Next, we construct $s_6 \in Y$ with $s_6(\bar{m}) = a_2b_1$ and
  $s_6(\bar{o}) = RG$. However, $s_1(m_2) = b_1 = s_6(m_2)$ but $s_1(o_2) = R \neq
  G = s_6(o_2)$ which contradicts Strong Determinism.
\end{proof}

By \cref{theorem:localNormalForm}, we immediately obtain a no-go result for
local realism; formalizing the notion that quantum mechanics is non-local.

\begin{cor}
  There exist a relational empirical model and a probabilistic empirical model 
  without empirically equivalent hidden-variable models satisfying the conjunction of
  Locality and $\lambda$-Independence.
\end{cor}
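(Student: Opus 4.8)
The plan is to derive both statements by contradiction from the preceding no-go theorem, using the normal form of \cref{theorem:localNormalForm} to turn any hypothetical realisation satisfying Locality and $\lambda$-Independence into one satisfying Strong Determinism and $\lambda$-Independence, which the theorem has just ruled out.

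For the relational case I would simply reuse the witness from the theorem above: let $X$ be a team over $\Var_2^e$ satisfying its assumptions (1)--(6). Suppose, for contradiction, that $X$ admits an empirically equivalent hidden-variable model represented by a team $Y$ with $Y \models \Loc^h_2 \land \LInd^h_2$. By the relational case of \cref{theorem:localNormalForm}, $X$ then also admits an empirically equivalent hidden-variable model satisfying $\StrongDet^h_2 \land \LInd^h_2$. This contradicts the theorem, so no realisation of $X$ satisfying Locality and $\lambda$-Independence can exist.

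For the probabilistic case I would first lift $X$ to a probabilistic empirical model $\XX = (X,\Pr)$ whose possibilistic collapse is again $X$; the simplest choice is the uniform distribution on the finitely many tuples of $X$, which in particular uses only rational probabilities. The key observation is that $\Loc^h_n \land \LInd^h_n$ is defined by a formula of $\FO(\perp)$, so only the forward direction of \cref{prop:compareExtension} is available --- but it is exactly the direction we need. Indeed, if $\XX$ had an empirically equivalent probabilistic hidden-variable model satisfying Locality and $\lambda$-Independence, then \cref{prop:compareExtension} would give an empirically equivalent relational hidden-variable model satisfying the same property for the induced relational model $X$, putting us back into the relational case already handled and yielding the same contradiction.

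I expect the only genuine subtlety to be bookkeeping about the direction of entailment: since $\Loc^h_n \land \LInd^h_n \in \FO(\perp) \setminus \FO(\dep)$, the converse direction of \cref{prop:compareExtension} is not at our disposal, so one must argue downward from the probabilistic assumption to the relational model and not the other way around. Everything else is routine; in particular no rationality hypothesis is needed when invoking \cref{theorem:localNormalForm}, because throughout we apply only its relational case to the team $X$.
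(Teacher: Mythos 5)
Your proof is correct and takes essentially the same route as the paper, which derives this corollary in one line from the preceding Bell-type no-go theorem via \cref{theorem:localNormalForm}, with \cref{prop:compareExtension} supplying the probabilistic case exactly as you describe (the paper notes before the theorem that the relational no-go implies the probabilistic one by this proposition). Your ordering in the probabilistic case -- transferring down to the relational level first, so that only the relational case of \cref{theorem:localNormalForm} is invoked and the rationality hypothesis never enters -- is a sound minor refinement; the paper's route of applying the probabilistic normal form would also have worked here, since your uniform witness has rational probabilities.
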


We want to conclude this section with a historical note: It shall be remarked
that Bell himself did not regard his theorem as a refutation of
hidden-variables. In fact, Bell was a proponent of Bohmian Mechanics introduced
in \cite{Bohm52a,Bohm52b} – an explicitly non-local hidden-variable interpretation of
quantum mechanics. Bell clearly emphasizes (see \cite{Bell81}, p.~53) that the
essential argument of Bell's Theorem does not depend on determinism or any other
property of hidden-variables but just on the kind of correlations quantum
mechanics predicts. Thus, he said, ``it is a merit of the de Broglie-Bohm version to
  bring this out so explicitly that it cannot be ignored'' (\cite{Bell80}, p.~159).

\section{The Kochen-Specker Theorem and Non-Contextuality}\label{sec:ks}

The celebrated Kochen-Specker Theorem about the contextuality of quantum
mechanics is one of the most important results about hidden-variable models. It is
rather difficult to understand, which is also due to the fact that
it is often formulated in an imprecise way and that the notion of (non-)contextuality is used
inconsistently in the literature. We attempt to shed light on the
Kochen-Specker Theorem by discussing several variants of it, 
including a formulation purely in terms of linear algebra
and two different formulations in terms of logics with team semantics,
highlighting (non)-contextuality as a team-semantical property.  

\subsection{The Linear-Algebraic Version of the Kochen-Specker Theorem}  

We first recall the Kochen-Specker Theorem in essentially its original form.
A \emph{measurement context} on a Hilbert space is a set $X$ of observables
(i.e.~Hermitian operators) that are compatible in the sense that they share 
a common eigenbasis (and thus commute). In quantum mechanical terms, the Kochen-Specker Theorem
states that there is a finite set $M$ of observables
such that it is impossible to assign a unique meaningful value to every
observable in $M$ in a  \emph{non-contextual} way, i.e.~independent of its 
measurement context $X\subseteq M$.

\begin{definition}
Let $M$ be a set of linear operators on a Hilbert space $\calH$,
A valuation $v\colon M \to\reals$  \emph{respects the algebraic
structure of $\calH$} if for any collection $\{A_i: i\in I\}$ of
pairwise commuting operators in $M$ we have that
\begin{itemize}
    \item if $\sum_{i\in I} A_i = A \in M$, then  $v(A)=\sum_{i\in I} v(A_i)$, and
    \item if $A_iA_j=A_jA_i\in M$, then $v(A_iA_j)=v(A_i)v(A_j)$.
\end{itemize}
\end{definition}

Note that the algebraic structure needs to be respected by $v$ only inside
of measurement contexts, i.e. for operators that commute. The product of
non-commuting observables, on the other side, is in general not an observable.
The Kochen-Specker says that, for certain small sets of observables, the
algebraic structure cannot be respected even in this weak sense.

\begin{theorem}[Kochen-Specker]
  For every Hilbert space $\calH$ with $\dim \calH \geq 3$, there exists
  a finite set $M$ of Hermitian operators on $\calH$ such that no $v\colon M \to
  \reals$ with $v \neq 0$ respects the algebraic structure of $\calH$.
\end{theorem}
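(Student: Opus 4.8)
The plan is to prove the Kochen-Specker Theorem by exhibiting a concrete finite set $M$ of Hermitian operators for which the algebraic constraints are jointly unsatisfiable. The standard route, and the one I would take, is to reduce the statement about sum-and-product-preserving valuations to a purely combinatorial coloring problem. The key observation is this: for a family of pairwise orthogonal rank-one projections $P_1,\dots,P_k$ that form a resolution of the identity on a $d$-dimensional subspace (so $\sum_i P_i = I$ with the $P_i$ mutually commuting), the valuation constraints force $\sum_i v(P_i) = v(I)$ and $v(P_iP_j) = v(P_i)v(P_j) = 0$ for $i \neq j$. Since each $P_i$ satisfies $P_i^2 = P_i$, the product rule gives $v(P_i) = v(P_i)^2$, hence $v(P_i) \in \{0,1\}$. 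Combined with the orthogonality relations, exactly one projection in each maximal orthogonal family receives value $1$. Thus a nonzero algebra-respecting valuation induces a $\{0,1\}$-coloring of the unit vectors (equivalently, the rank-one projections) such that in every orthonormal basis exactly one vector is colored $1$, and no two orthogonal vectors are both colored $1$.

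First I would make this reduction precise: fix $\dim\calH \geq 3$, restrict attention to $M$ consisting of rank-one projections onto a finite set $S$ of unit vectors together with the operators needed to close off the orthogonality contexts, and show that any $v \neq 0$ respecting the algebraic structure yields such a coloring. The nonvanishing of $v$ guarantees that the coloring is not identically zero. Next I would invoke (or construct) a finite set $S$ of vectors in $\reals^3$ that admits no such coloring --- this is the geometric heart of the matter. The classical choice is the original Kochen-Specker configuration of $117$ vectors, but any of the smaller uncolorable sets (for instance those later found with far fewer vectors) works equally well; the point is merely that such a finite set exists. Since a three-dimensional uncolorable configuration embeds into any $\calH$ with $\dim\calH \geq 3$ by taking a three-dimensional subspace, this handles all higher dimensions uniformly.

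The hard part, and the genuine content of the theorem, is establishing the nonexistence of the $\{0,1\}$-coloring for the chosen finite vector configuration. This is a finitary but intricate combinatorial-geometric argument: one must verify that the web of orthogonality constraints among the chosen vectors is overdetermined, so that propagating the ``exactly one $1$ per basis'' rule eventually forces two orthogonal vectors to both carry value $1$, a contradiction. I would treat the existence of such a configuration as the crucial lemma --- historically it is proved by an explicit case analysis on a cleverly chosen set of directions, and I would either cite the original Kochen-Specker construction or reproduce one of the streamlined modern versions. Once the uncolorability lemma is in hand, the theorem follows immediately: the assumption $v \neq 0$ produces a coloring, and no coloring exists, so no such $v$ can exist. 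The remaining steps --- the reduction to colorings and the embedding into higher dimensions --- are routine once the central geometric obstruction is secured.
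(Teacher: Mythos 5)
Your proposal is correct in substance, but it decomposes the theorem differently from the paper. The reduction step is identical in both: idempotence plus the product rule forces $v(P)\in\{0,1\}$ for each projection, the sum rule over a resolution of the identity forces exactly one value $1$ per orthogonal context, and $v\neq 0$ pins down $v(I)=1$; so a nonzero algebra-respecting valuation is exactly a Kochen-Specker coloring. Where you diverge is in the choice of the combinatorial core. The paper works in dimension $4$, using the Cabello--Estebaranz--Garc\'ia-Alcaine configuration of $18$ vectors in $\reals^4$ arranged into $9$ orthonormal bases with each vector appearing in exactly two bases; uncolorability then follows from a one-line parity argument ($\sum_j \abs{S\cap X_j}=2\abs{S}$ is even but would have to equal $9$), and the paper simply cites the literature for dimension $3$. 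You instead take the three-dimensional uncolorable configuration (the original $117$-vector set or a smaller successor) as your black box and lift to all $\dim\calH\geq 3$ by embedding into a three-dimensional subspace. Each choice buys something: the paper gets a fully self-contained and genuinely elementary uncolorability proof at the cost of covering only $\dim\calH\geq 4$ directly, while you get uniform coverage of all dimensions at the cost of importing the intricate three-dimensional case analysis wholesale. One caveat on your lifting step: ``embeds into a three-dimensional subspace'' is not automatic, because a triad in a proper subspace $V\subsetneq\calH$ does not resolve the full identity, and the degenerate valuation assigning $1$ to the complement and $0$ to every configuration vector must be excluded. Your phrase about ``operators needed to close off the orthogonality contexts'' gestures at this but does not resolve it; the clean fix is to include the subspace projector $P_V$ in $M$, use the relations $P_1+P_2+P_3=P_V$ and $P_VP_i=P_i$ to force $v(P_V)=1$ for any nonzero $v$, and then read off the forbidden coloring inside $V$. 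With that detail supplied, your argument is complete.
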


A modern proof \cite{CabelloEstGar96} for $\dim \calH \geq 4$ relies on a simple combinatorial lemma
about orthonormal bases (ONB) of $\reals^4$.

\begin{lemma}\label{lemma:ksProj}
There exists a set $Z=\{a_1,\dots,a_{18}\} \subseteq \reals^4$
and nine sets $X_1,\dots,X_9$, each consisting of four elements of $Z$
that form an ONB of $\reals^4$, such that no subset $S \subseteq Z$
fulfills $\abs{S \cap X_j} = 1$ for all $j$.
\end{lemma}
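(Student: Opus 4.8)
The plan is to exhibit an explicit configuration realising the claimed incidence structure and then to deduce the non-existence of $S$ from a parity argument. The configuration I have in mind is the $18$-vector, $9$-basis arrangement of Cabello, Estebaranz and García-Alcaine \cite{CabelloEstGar96}, whose decisive feature is not any particular numerical value but two purely combinatorial properties: (a) within each of the nine sets $X_1,\dots,X_9$ the four chosen vectors are pairwise orthogonal, so that after normalisation each $X_j$ is an ONB of $\reals^4$; and (b) \emph{each} of the eighteen vectors lies in \emph{exactly two} of the nine bases. The global count is consistent with (b): the total number of incidences is $9\cdot 4 = 36 = 18\cdot 2$.

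First I would write down the eighteen vectors with entries in $\set{0,+1,-1}$, organised as the columns of a table so that the nine bases and the two-fold reuse of each vector can be read off directly. The hard part -- really the only laborious part -- is the verification that this table simultaneously satisfies (a) and (b): orthogonality within each basis is a finite check of inner products, all of which vanish by construction, and two-regularity is confirmed by inspecting the incidence table. This bookkeeping is entirely mechanical, and it is exactly the delicate design task of finding \emph{one} family of bases that is both orthogonal and two-regular while the number of bases stays odd.

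Once (a) and (b) are in hand, the argument is immediate. Suppose for contradiction that some $S\subseteq Z$ satisfies $\abs{S\cap X_j}=1$ for every $j$. Summing this over the nine bases gives
\[ \sum_{j=1}^{9}\abs{S\cap X_j} = 9. \]
Interchanging the order of summation and using property (b), which forces every $a\in S\subseteq Z$ to occur in precisely two bases, the same quantity equals
\[ \sum_{j=1}^{9}\abs{S\cap X_j} = \sum_{a\in S}\abs{\set{j : a\in X_j}} = 2\abs{S}. \]
Thus $9 = 2\abs{S}$, which is impossible since $9$ is odd while $2\abs{S}$ is even. Hence no such $S$ exists, which is exactly the claim.

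In short, the conceptual core is the parity mismatch between the odd number of bases and the even incidence count forced by two-regularity; the main obstacle is the explicit construction of a two-regular, orthogonal family of nine bases, for which I would simply cite and reproduce the table of \cite{CabelloEstGar96}. The orthogonality verification is routine, and the parity step is the kind of argument that is robust and easy to formalise.
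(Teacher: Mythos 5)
Your proposal is correct and takes essentially the same route as the paper: the paper also defers the explicit 18-vector, two-regular construction to Cabello, Estebaranz and Garc\'ia-Alcaine \cite{CabelloEstGar96} and then concludes by the same parity argument, comparing the odd count $\sum_{j=1}^{9}\abs{S\cap X_j}=9$ with the even count $2\abs{S}$ forced by two-regularity. The only difference is presentational -- you spell out the double-counting step that the paper summarises as ``a simple parity argument.''
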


This lemma relies on an explicit construction of vectors $a_1,\dots,a_{18}\in\reals^4$,
and a collection of subsets $X_1,\dots,X_9$ that use every
$a_i$ exactly twice. That is, for every $1 \leq i \leq 18$ there exist
precisely two indices $j$ with $a_i \in X_j$. Once this construction is
complete, the lemma follows by a simple parity argument. The
explicit construction is not important for our purposes and can be found in \cite{CabelloEstGar96}.

\medskip
The Kochen-Specker Theorem (for $\dim\calH\geq 4$) can now be established as follows:
Each vector $a\in Z$ induces an orthonormal projection operator $A$.
Let $M$ be the collection of all operators $A_i$, induced by $a_i\in Z$, together with
the identity operator $I$.
For an ONB $\set{a,b,c,d}$ the induced projection operators $A,B,C,D$, together
with $I$, form a measurement context of compatible observables, since they share $\set{a,b,c,d}$ 
as a common eigenbasis. 
Let now $v\colon M\ra \reals$ be a valuation that respects the algebraic structure.
We have to show that $v=0$. Otherwise, assume that $v(A)\neq 0$
for some $A\in M$. Since $v(I)\cdot v(A)=v(I\cdot A)=v(A)$ it follows that
$v(I)=1$. Further, every projection operator $A\in M$ is idempotent,
so $v(A)=v(A)^2$ and hence $v(A)\in\{0,1\}$. 
Finally, if $A,B,C,D$ are projection operators corresponding to an orthonormal basis
it holds that $A + B + C + D = I$ and thereby that $v(A) + v(B) + v(C) + v(D)
  = 1$. Since $v(A),\dots,v(D) \in \set{0,1}$, it follows that precisely one
of the four projections is mapped to 1.
But then the set $S:=\{a_i\in Z:  v(A_i)=1\}$ would 
have the property that $| S\cap X_j |= 1$ for all $j$, contradicting \cref{lemma:ksProj}. 

\medskip
For the proof of dimension three, we refer the interested reader to \cite{Held18,KochenSpe67}.
 
\subsection{Non-Contextuality as a Team Semantical Property}

We now formulate the Kochen-Specker theorem in the language of
teams, focusing on the notion of non-contextuality.  
We provide two alternative formulations each of which has its own
advantages. To formulate this as elegantly as possible, we first define  new
team semantic atoms.

\begin{definition} For $k$-tuples $\bar{x}_1,\bar{x}_2$ and $m$-tuples $\bar{y}_1,\bar{y}_2$ of variables, let
\begin{align*}  X \models \dep((\bar{x}_1,\bar{x}_2),(\bar{y}_1,\bar{y}_2)) \text{ if }&\text{for all }s,t \in
  X \text{ with }s(\bar{x}_1) = t(\bar{x}_2),\\ &\text{ also } s(\bar{y}_1) = t(\bar{y}_2).\end{align*}
Further, we define that $X \models \operatorname{nc}(x_1\dots x_k,y)$ if for all $s,t \in X$ with $t(y)
  \in \set{s(x_1),\dots,s(x_k)}$ it follows that $s(y) = t(y)$. We use this to define the atom 
  $\operatorname{ncc}(x_1\dots x_k)$ for \emph{non-contextual choice}, with semantics given by
\[  \operatorname{ncc}(x_1\dots x_k) \equiv \ex y \pr{\blor_{i=1}^k y = x_i
    \land nc(\bar{x},y)}.\]
\end{definition}

Obviously, this new dependence atom is an extension of the regular one, with
$\dep(\bar{x},\bar{y}) \equiv \dep((\bar{x},\bar{x}),(\bar{y},\bar{y}))$. 
The atom $\operatorname{ncc}(x_1\dots x_k)$ for non-contextual choice expresses that one can choose 
for every assignment $s \in X$ an element $s(y) \in \set{s(x_1),\dots,s(x_k)}$ with the additional constraint that
an element that is selected in an assignment $s$ is also selected in every other
assignment $t$ it which it appears. This explains the name ``non-contextual choice''.

\medskip
Since these all atoms are obviously in NP and downwards-closed, it follows 
by \cite{KontinenVaa09} that they are definable in
dependence logic. Explicit formulae for them are given in \cref{appendixFormulae}.

\begin{definition} A team $X$ over variables $\Var_n^e = \set{m_1,\dots,m_n,o_1,\dots,o_n}$  
(and the empirical model it represents) is \emph{non-contextual} if there exists in 
every component $i\leq n$ a valuation $v_i\colon X(m_i) \to X(o_i)$ that is consistent with the
empirical model in the sense that for measurements $\bar{m} = \bar{a}$ the outcome $\bar{o}
  = v_1(a_1)\dots v_n(a_n)$ is possible. Formally, for every  $n \in \nats$, non-contextuality
can be defined by the formula
 \[ \NonContext^e_n := \ex v_1\dots v_n \pr{\bland_{1 \leq i \leq j \leq n}^n \dep((m_i,m_j),(v_i,v_j)) \land
    \bar{m}\bar{v} \sub \bar{m}\bar{o}}. \]
The generalized dependence atom ensures that if the same measurement appears in components
 $i$ and $j$ then the valuations $v_i,v_j$ agree there.
Notice that $\NonContext^e_n$ uses both (extended) dependence atoms and inclusion atoms, so it
is a formula of independence logic. 
\end{definition}

Thus, $X \not\models \NonContext^e_n$ means that there is no single, non-contextual
assignment of values to all measurements that is consistent with the empirical
model. The term \emph{contextuality} refers to the fact that in this situation
measurement outcomes are inherently dependent on their measurement context,
i.e.~other measurements performed simultaneously. There is no way to assign values to measurements 
independent of each other in a consistent way.

\medskip
With this formula we can formulate an analogue of the Kochen-Specker Theorem.
Let $e_i=(\delta_{i1},\dots,\delta_{i4})\in\{0,1\}^4$ with $\delta_{ij}=1$ if, and only if, $i=j$. 

\begin{theorem}[Logical formulation of the Kochen-Specker Theorem]
There exists a team $X$ over variables $\set{m_1,\dots,m_4}$ such that every
extension $Y$ of $X$ to variables $\set{m_1, \dots, m_4,\allowbreak  o_1, \dots, o_4 }$ which
  satisfies the constraint that $Y(\bar{o}) \sub \set{e_1,\dots,e_4}$
violates non-contextuality, i.e.~$Y \not\models \NonContext^e_n$.
\end{theorem}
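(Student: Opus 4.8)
The plan is to reduce the logical statement of the Kochen-Specker Theorem to the combinatorial \cref{lemma:ksProj}, by showing that a satisfying assignment for $\NonContext^e_4$ under the stated outcome constraint would produce exactly the forbidden set $S$. The measurements $m_1,\dots,m_4$ will encode the four slots of an orthonormal basis, and the team $X$ will be built so that its assignments $X(m_1\dots m_4)$ range over the nine ONBs $X_1,\dots,X_9$ of \cref{lemma:ksProj}, each presented as a $4$-tuple of vectors from $Z=\{a_1,\dots,a_{18}\}$.

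\emph{First} I would make the construction of $X$ explicit. Fix an enumeration of each basis $X_j=\{a_{j,1},a_{j,2},a_{j,3},a_{j,4}\}$ and let $X$ contain, for every $j\le 9$, the assignment sending $(m_1,m_2,m_3,m_4)$ to $(a_{j,1},a_{j,2},a_{j,3},a_{j,4})$, where the vectors in $Z$ serve as the team values. \emph{Second}, I would unpack what it means for an extension $Y$ over $\set{m_1,\dots,m_4,o_1,\dots,o_4}$ with $Y(\bar o)\sub\set{e_1,\dots,e_4}$ to satisfy $\NonContext^e_4$. By definition this yields valuations $v_i\colon X(m_i)\to X(o_i)\sub\set{e_1,\dots,e_4}$, and the inclusion atom $\bar m\bar v\sub\bar m\bar o$ forces, for each basis $X_j$, the outcome tuple $(v_1(a_{j,1}),\dots,v_4(a_{j,4}))$ to be a possible outcome in $Y$, hence to lie in $\set{e_1,\dots,e_4}^4$.

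\emph{Third} comes the heart of the argument. The generalised dependence atoms $\dep((m_i,m_j),(v_i,v_j))$ guarantee that whenever the same vector $a\in Z$ appears as the measurement in two different slots (possibly in different bases), the valuations assign it the same value; so the $v_i$ glue together into a single well-defined map $v\colon Z\to\set{e_1,\dots,e_4}$. Now I would use the fact that in an orthonormal basis exactly one projector can be evaluated to $1$: the key constraint I must extract is that for every basis $X_j$ precisely one of its four vectors is sent by $v$ to a designated ``selected'' value (say $e_1$, reading $e_k$ as ``put this vector in the $k$-th ONB slot'' — I would instead interpret the $e_i$ as flagging selection, so that the outcome tuple being a permutation-type element of $\set{e_1,\dots,e_4}$ encodes that each basis selects exactly one vector). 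Defining $S:=\set{a\in Z: v(a)\text{ is the selected flag}}$, the per-basis constraint gives $\abs{S\cap X_j}=1$ for all $j$, directly contradicting \cref{lemma:ksProj}.

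\emph{The main obstacle} will be engineering the outcome encoding so that the single combinatorial constraint $\abs{S\cap X_j}=1$ is forced by the conjunction of the dependence and inclusion atoms together with the restriction $Y(\bar o)\sub\set{e_1,\dots,e_4}$. Concretely, I must choose the design of $X$ — in particular which outcome tuples are declared \emph{possible} in $Y$ for each basis — so that ``exactly one vector selected per basis'' is exactly the admissible pattern, neither more nor less permissive. The cleanest route is to let the possible outcomes for the assignment encoding $X_j$ be precisely the four tuples $e_1,e_2,e_3,e_4$ (the standard basis tuples), so that the inclusion atom $\bar m\bar v\sub\bar m\bar o$ forces $(v_1(a_{j,1}),\dots,v_4(a_{j,4}))=e_{k}$ for some single $k$, i.e.\ exactly one coordinate equals $1$; combined with the gluing from the generalised dependence atoms this yields the one-element intersection property. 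Verifying that this design respects the tacit active-domain convention and that the dependence atoms really do glue consistently across the shared vectors (each $a_i$ appears in exactly two bases by \cref{lemma:ksProj}) is the only delicate bookkeeping, and I would carry it out last.
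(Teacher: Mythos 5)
Your proposal is correct and takes essentially the same route as the paper: the team of nine assignments whose $\bar m$-values enumerate the nine ONBs from the eighteen-vector construction, the generalised dependence atoms $\dep((m_i,m_j),(v_i,v_j))$ gluing the $v_i$ into a single valuation on $Z$, and the constraint $Y(\bar{o}) \sub \set{e_1,\dots,e_4}$ together with the inclusion atom forcing exactly one selected vector per basis, so that $S=\set{a\in Z: v(a)=1}$ contradicts \cref{lemma:ksProj} --- which is precisely the reduction the paper carries out (explicitly in the proof of \cref{Thm:ncc} and in the discussion around the stated theorem). One cosmetic remark: since $Y$ is universally quantified you do not get to ``engineer'' which outcome tuples it makes possible, but this is harmless, because your key step --- the inclusion atom forces $(v_1(a_{j,1}),\dots,v_4(a_{j,4}))\in\set{e_1,\dots,e_4}$, i.e.\ exactly one coordinate equal to $1$ --- already holds for an arbitrary extension satisfying the stated constraint, and your early mistypings of the codomains (writing $X(o_i)\sub\set{e_1,\dots,e_4}$ and $\set{e_1,\dots,e_4}^4$ where the $o_i$ are bits and the tuple lies in $\set{e_1,\dots,e_4}$) are corrected by your final formulation.
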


The Kochen-Specker Theorem gives us a class of empirical models which are 
necessarily contextual: no non-contextual outcome of measurements is possible,
due to the violation of $\NonContext^e_n$. Note that the constraint
$Y(\bar{o}) \sub \set{e_1,\dots,e_4}$ corresponds to the requirement that
valuations must respect the algebraic structure of the measurement operators.
This implies that in each ONB the valuations actually define a choice
of one of the basis vectors in the ONB.
It is a very remarkable feature of the theorem that the measurement setup alone, 
encoded by $X$, suffices to ensure that all compatible extensions are contextual.

\medskip
Using our non-contextual choice atom we can give a different team semantical
formulation of the Kochen-Specker Theorem that makes this aspect of choice more
explicit. While the version given above is closer to the rest of our framework and structurally more 
similar to the general formulation of the Kochen-Specker Theorem,
the alternative formulation below directly corresponds to the
linear algebraic argument formalized by \cref{lemma:ksProj}
and is more compact and elegant.

\begin{theorem}\label{Thm:ncc}
There is a team $X$ over the variables $\set{m_1,\dots,m_4}$ such that $X
\not\models ncc(\bar{m})$. Furthermore, it is possible to choose $X$ with values
in $\reals^4$ so that for every $s \in X$ the set
$\set{s(m_1),\dots,s(m_4)}$ forms an ONB of $\reals^4$.
\end{theorem}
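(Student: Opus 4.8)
The plan is to read the team $X$ directly off the combinatorial data supplied by \cref{lemma:ksProj}. That lemma provides eighteen vectors $Z=\{a_1,\dots,a_{18}\}\subseteq\reals^4$ and nine four-element subsets $X_1,\dots,X_9\subseteq Z$, each forming an ONB, such that no $S\subseteq Z$ meets every $X_j$ in exactly one point. I would turn each basis $X_j=\{b_{j1},b_{j2},b_{j3},b_{j4}\}$ into a single assignment $s_j$ by setting $s_j(m_i):=b_{ji}$, and define $X:=\{s_1,\dots,s_9\}$. With this choice the ``furthermore'' clause is immediate: for every $j$ the set $\{s_j(m_1),\dots,s_j(m_4)\}$ equals $X_j$ and is thus an ONB of $\reals^4$, and the whole team takes values in $\reals^4$.

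It then remains to show $X\not\models\operatorname{ncc}(\bar{m})$, which I would do by contradiction after unfolding the definition $\operatorname{ncc}(\bar{m})\equiv\exists y\,(\bigvee_{i=1}^4 y=m_i\wedge\operatorname{nc}(\bar{m},y))$. Assuming $X\models\operatorname{ncc}(\bar{m})$, there is a Skolem extension $X'=X[y\mapsto F]$ satisfying $(\bigvee_i y=m_i)\wedge\operatorname{nc}(\bar{m},y)$. The splitting clause for $\lor$ forces every $s'\in X'$ to satisfy $s'(y)\in\{s'(m_1),\dots,s'(m_4)\}$, so if $s'$ extends $s_j$ then $s'(y)\in X_j$. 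The first genuinely delicate point is that $F$ is effectively single-valued on each $s_j$: were there two extensions $s_j[y\mapsto a]$ and $s_j[y\mapsto b]$ with $a,b\in X_j$, then applying $\operatorname{nc}(\bar{m},y)$ to $s:=s_j[y\mapsto a]$ and $t:=s_j[y\mapsto b]$ yields $t(y)=b\in X_j=\{s(m_1),\dots,s(m_4)\}$, hence $a=s(y)=t(y)=b$. So each $s_j$ carries a unique chosen value $c_j:=s_j(y)\in X_j$.

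Now I would set $S:=\{c_1,\dots,c_9\}\subseteq Z$ and verify $\abs{S\cap X_j}=1$ for every $j$, contradicting \cref{lemma:ksProj}. Since $c_j\in X_j$ we have $\abs{S\cap X_j}\geq 1$. For the reverse bound, suppose two distinct chosen vectors $c_k=a$ and $c_l=b$ both lay in $X_j$; applying $\operatorname{nc}(\bar{m},y)$ with $s:=s_j$, $t:=s_k$ gives, from $t(y)=a\in X_j=\{s_j(m_1),\dots,s_j(m_4)\}$, the conclusion $s_j(y)=a$, and the same atom with $t:=s_l$ gives $s_j(y)=b$, so $a=b$. Hence $\abs{S\cap X_j}=1$ for all $j$, which is exactly the configuration ruled out by the lemma; this contradiction establishes $X\not\models\operatorname{ncc}(\bar{m})$.

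The main obstacle is not the geometry, which is entirely outsourced to the parity argument packaged in \cref{lemma:ksProj}, but the careful translation of the team semantics of the existential quantifier and the new atoms. Concretely, one must justify that the potentially multi-valued Skolem extension collapses to a genuine choice function on the nine assignments, and that the asymmetric consistency condition of $\operatorname{nc}$ encodes precisely the ``at most one vector per $X_j$'' constraint. Once these semantic steps are spelled out, the proof is short.
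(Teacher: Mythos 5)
Your proposal is correct and takes essentially the same route as the paper: both read the team off the nine ONBs supplied by \cref{lemma:ksProj}, one assignment per basis, and refute $\operatorname{ncc}(\bar m)$ by extracting a set $S\subseteq Z$ with $\abs{S\cap X_j}=1$ for all $j$, contradicting the lemma. Your only addition is to spell out the team-semantic bookkeeping (the Skolem extension, the split for the disjunction, and the collapse of $F$ to a single-valued choice) that the paper compresses into the phrase ``there exists a choice function $f$''.
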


\begin{proof} By \cref{lemma:ksProj} we can form the team $X$ consisting
of  assignments $s_1,\dots s_9$ with values in $\reals^4$ such that for
each $i\leq 9$, the set $B_i:=\set{s_i(m_1),\dots,\allowbreak s(m_4)}$ is an ONB, consisting of four vectors from
the collection $Z=\{a_1,\dots,a_{18}\}$ constructed in the proof of \cref{lemma:ksProj}.
We claim that $X \not\models ncc(\bar{m})$. Otherwise there exists a choice function
$f$ mapping each of the sets $B_1,\dots,B_9$ to one of its elements such that,
whenever $f(B_i)=a$ and $a\in B_k$, then also $f(B_k)=a$.
But this means that the image of $f$ forms a set $S\subseteq Z$ such that
$|S\cap X_i|=1$ for all $i\leq 9$, contradicting \cref{lemma:ksProj}.
\end{proof}

Thus, \cref{Thm:ncc} is a team semantical formulation of \cref{lemma:ksProj}
which readily implies the Kochen-Specker Theorem. 

\medskip
It is an interesting question, how the treatment of (non-)contextuality via team semantics
can be taken further, to more general scenarios. There is a vast literature on various contextuality properties,
and it is not clear that the setup of teams as it is chosen here is adequate for more general settings. 
The scenarios studied here, sometimes called  Bell scenarios, assume different sets of measurements
each of which can be performed independently by some agent or site. A context is a combined choice of
measurements, one by each agent. We represent this by having for each agent a pair of variables,
one of which evaluates to the choices of measurement, the other to the observed values.

There are more general scenarios, such as the Specker triangle or the Peres-Mermin magic square, where
certain subsets of measurements can be performed simultaneously, but not all of them, and
there are constraints on the (Boolean) values that are observed when a possible set of
measurement is performed.  Together, these constraints result in an impossibility of a global Boolean assignment to
all measurements. Abramsky and Brandenburger \cite{AbramskyBra11} propose an approach where
variables represent the measurements themselves (not the agents)  and the values assigned to these
variables correspond the outcomes. A context is a domain of simultaneously performable measurements
(i.e. variables) and a behaviour in this context is a relational or probabilistic team on this domain.
An empirical model is thus no longer described by a team, but by a set of teams with different domains, one for each context.
Contextuality is then the question whether there exist a team on the domain of all variables,
whose set of restrictions to the contexts coincides with the given empirical model. We further refer to 
\cite{Abramsky14,AbramskyCar19, AbramskyBar21} for more details and for connecting these issues
to applications in other areas, including logic, constraint satisfaction problems, databases, etc.
It is an interesting challenge for future research whether logics of dependence and independence can be
sucessfully applied also this more general scenario.

%\bibliographystyle{plainurl}
%\bibliography{hiddenvariables}

\appendix

\section{Explicit Formulae for Extended Dependence and Non-Contextual Choice}\label{appendixFormulae}
\renewcommand{\thesection}{\Alph{section}}

\begin{prop}
  The following formula $\psi \in \FO(\dep)$ is equivalent to the generalized
  dependence atom $\dep((\bar{x}_1,\bar{x}_2),(\bar{y}_1,\bar{y}_2))$:
  \begin{align*}
    \psi = &\all \bar{z_1}\bar{z_2} \all \bar{w_1}\bar{w_2} \ex u_1 \ex u_2 \ex u_3 \big( \bland_{i=1}^3 \dep(\bar{z_1}\bar{z_2}\bar{w_1}\bar{w_2},u_i) \land \big[ \\
    &\phantom{{}\lor{}}\hspace{0.25cm}(u_1 = u_2 = u_3 \land \bar{z_1}\bar{w_1} \mid \bar{x}_1 \bar{y}_1) \\
    &\lor\hspace{0.25cm}(u_1 = u_2 \neq u_3 \land \bar{z_2}\bar{w_2} \mid \bar{x}_2 \bar{y}_2) \\
    &\lor\hspace{0.25cm}(u_1 \neq u_2 = u_3 \land (\bar{z}_1 \neq \bar{z}_2 \lor \bar{w}_1 = \bar{w}_2))\big] \big)
  \end{align*}
\end{prop}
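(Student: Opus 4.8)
The plan is to show that $X\models\psi$ holds exactly when $X$ admits no pair $s,t$ witnessing a failure of the generalized dependence atom, i.e.\ no $s,t\in X$ with $s(\bar x_1)=t(\bar x_2)$ and $s(\bar y_1)\neq t(\bar y_2)$.

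First I would reduce the quantifier prefix to a choice of three functions. Writing $\bar v=(\bar z_1,\bar z_2,\bar w_1,\bar w_2)$, iterating the generalisation produces the team $X[\bar v\mapsto A^{2k+2m}]$, which contains $s[\bar v\mapsto\bar c]$ for every $s\in X$ and every value tuple $\bar c\in A^{2k+2m}$; in particular it realises all combinations of an original assignment with a value tuple. The three existential quantifiers together with the conjoined dependence atoms $\dep(\bar z_1\bar z_2\bar w_1\bar w_2,u_i)$ then force each $u_i$ to be a single-valued function $g_i$ of $\bar c$ alone (two assignments sharing $\bar c$ must agree on $u_i$, and all original assignments occur with each $\bar c$), and conversely any triple $g_1,g_2,g_3\colon A^{2k+2m}\to A$ is realisable. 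Hence $X\models\psi$ is equivalent to the existence of such $g_1,g_2,g_3$ for which the resulting team $Y$ satisfies the disjunction.

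Next I would analyse the disjunction. The three disjuncts pin down mutually exclusive equality patterns of $(u_1,u_2,u_3)$ (all equal; $u_1=u_2\neq u_3$; $u_1\neq u_2=u_3$), so any split $Y=Y_1\cup Y_2\cup Y_3$ witnessing the disjunction is forced to group the assignments by the pattern of $(g_1,g_2,g_3)(\bar c)$; writing $V_i$ for the set of value tuples assigned to $Y_i$, this is a partition of $A^{2k+2m}$ and every tuple must realise one of the three admissible patterns. Because each $\bar c\in V_i$ occurs in $Y_i$ together with every original assignment, the side conditions translate into purely combinatorial constraints on the $V_i$: using the semantics of the exclusion atom, $Y_1\models \bar z_1\bar w_1\mid\bar x_1\bar y_1$ says that $(\bar c_1,\bar d_1)\notin X(\bar x_1\bar y_1)$ for all $\bar c\in V_1$; symmetrically $Y_2\models\bar z_2\bar w_2\mid\bar x_2\bar y_2$ says $(\bar c_2,\bar d_2)\notin X(\bar x_2\bar y_2)$ for all $\bar c\in V_2$; and the literal disjunction in the third disjunct says $\bar c_1\neq\bar c_2$ or $\bar d_1=\bar d_2$ for all $\bar c\in V_3$.

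Finally I would assemble these into the clean criterion: $X\models\psi$ iff the value space can be partitioned respecting the three constraints, which (since placement asks only that some condition holds) is equivalent to requiring that every $\bar c=(\bar c_1,\bar c_2,\bar d_1,\bar d_2)$ satisfies $(\bar c_1,\bar d_1)\notin X(\bar x_1\bar y_1)$, or $(\bar c_2,\bar d_2)\notin X(\bar x_2\bar y_2)$, or $\bar c_1\neq\bar c_2\lor\bar d_1=\bar d_2$. Negating, $X\not\models\psi$ iff some $\bar c$ has $(\bar c_1,\bar d_1)\in X(\bar x_1\bar y_1)$, $(\bar c_2,\bar d_2)\in X(\bar x_2\bar y_2)$, $\bar c_1=\bar c_2$ and $\bar d_1\neq\bar d_2$ --- equivalently, there are $s,t\in X$ with $s(\bar x_1)=t(\bar x_2)$ and $s(\bar y_1)\neq t(\bar y_2)$, which is exactly the failure of $\dep((\bar x_1,\bar x_2),(\bar y_1,\bar y_2))$. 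I expect the main obstacle to be the bookkeeping in the reduction step --- rigorously arguing that the dependence atoms collapse the set-valued Skolem extensions to honest functions of $\bar c$, and that the disjunctive split is forced to follow the induced pattern partition. One should also record the harmless hypothesis $\abs{A}\geq2$ (which is automatic whenever a violating pair exists, since it needs $\bar d_1\neq\bar d_2$) so that the patterns $u_1=u_2\neq u_3$ and $u_1\neq u_2=u_3$ are realisable; for a one-element universe the statement degenerates.
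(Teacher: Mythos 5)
Your proposal is correct and takes essentially the same route as the paper's own proof: universal quantification over the copies $\bar{z}_1\bar{z}_2\bar{w}_1\bar{w}_2$ realises all recombinations of pairs of assignments, the conjoined dependence atoms force the flags $u_i$ to be single-valued functions of the copied tuple so that the mutually exclusive flag patterns make the disjunctive split partition \emph{value tuples} rather than assignments, the first two (exclusion) disjuncts absorb exactly the value tuples not realised by $\bar{x}_1\bar{y}_1$ resp.\ $\bar{x}_2\bar{y}_2$, and the flat third disjunct checks $\bar{z}_1 = \bar{z}_2 \ra \bar{w}_1 = \bar{w}_2$ on the remaining ones. Your additional observation that $\abs{A} \geq 2$ is needed to realise the patterns $u_1 = u_2 \neq u_3$ and $u_1 \neq u_2 = u_3$ (so that the stated equivalence degenerates over a one-element universe, where no violating pair can exist anyway) is a legitimate refinement that the paper's proof glosses over.
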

\begin{proof}
  In team semantics, we can compare equality only within an assignment. However,
  we need to compare values of $x_1y_1$ and $x_2y_2$ across different tuples
  $s,t$. We therefore create copies of $\bar{x}_1\bar{y}_1$ and $\bar{x}_2\bar{y}_2$ and
  regard all possible recombinations: $\bar{z}_1\bar{w}_1$ is to be interpreted as
  copy of $\bar{x}_1\bar{y}_1$ and similarly $\bar{z}_2\bar{w}_2$ as copy of
  $\bar{x}_2\bar{y}_2$. With `recombining' we mean that if $\bar{a} \bar{c}$ appears as
  value of $s(\bar{x}_1\bar{y}_1)$ and $\bar{b} \bar{d}$ as value of $t(\bar{x}_2
  \bar{y}_2)$, we want $\bar{a} \bar{b} \bar{c} \bar{d}$ to appear as value of $\bar{z}_1
  \bar{z}_2 \bar{w}_1 \bar{w}_2$ within one assignment $v$. We can then check
  whether $s(\bar{x}_1) = t(\bar{x}_2) \ra s(\bar{y}_1) = t(\bar{y}_2)$ holds by
  checking whether $v(\bar{z}_1) \neq v(\bar{z}_2) \lor v(\bar{w}_1) = v(\bar{w}_2)$ --
  and this is what is done in the third disjunct.

  \medskip
  The role of the $u_i$ and the first two disjuncts is to ensure that we only
  regard values that are actually taken on by $\bar{x}_1\bar{y}_1$ and
  $\bar{x}_2\bar{y}_2$ in the third disjunct.

  \medskip
  We first extend the team via all possible values of
  $\bar{z}_1,\bar{z}_2,\bar{w}_1,\bar{w}_2$ using the universal quantifier. We than
  add flags $u_1,u_2,u_3$ to every assignment that are functionally dependent on
  $\bar{z}_1\bar{z}_2\bar{w}_1\bar{w}_2$. This ensures that all assignments who
  agree on these copies share the same flags. The disjuncts then ensure that all
  assignments with identical flags must be assigned to the same disjunct.
  Therefore, we partition not really our assignments in the disjunction but
  rather the values of our copies.

  \medskip
  The first disjunct handles all copies where the value of $\bar{z}_1\bar{w}_1$
  does not in fact appear as value of $\bar{x}_1 \bar{y}_1$. Similarly, the second
  disjunct handles all copies where the value of $\bar{z}_2\bar{w}_2$ does not
  appear as value of $\bar{x}_2\bar{y}_2$. The elements that cannot be handled in
  either the first or the second disjunct are precisely those of interest to us
  and must satisfy the $\bar{z}_1 = \bar{z}_2 \ra \bar{w}_1 = \bar{w}_2$ condition
  from the final disjunct. This gives us the semantic of the generalized dependence atom.
\end{proof}

A similar approach works for the atom $\operatorname{nc}(x_1\dots x_k,y)$.
Recall that $X \models \operatorname{nc}(x_1\dots x_k,y)$ if for all $s,t \in X$ with $t(y)
  \in \set{s(x_1),\dots,s(x_k)}$ it follows that $s(y) = t(y)$.

\begin{prop}
  The following $\psi \in \FO(\dep)$ is equivalent to $\operatorname{nc}(x_1\dots x_k,y)$:
  \begin{align*}
    \psi = &\all z_1\dots z_k \all w_1w_2 \ex u_1 \ex u_2 \ex u_3 \big( \bland_{i=1}^3 \dep(\bar{z}\bar{w},u_i) \land \big[ \\
    &\phantom{{}\lor{}}\hspace{0.25cm}(u_1 = u_2 = u_3 \land \bar{z} w_1 \mid \bar{x} y) \\
    &\lor\hspace{0.25cm}(u_1 = u_2 \neq u_3 \land w_2 \mid y) \\
    &\lor\hspace{0.25cm}(u_1 \neq u_2 = u_3 \land (w_2 = w_1 \lor \blor_{i=1}^k w_2 \neq z_i))\big]\big)
  \end{align*}
\end{prop}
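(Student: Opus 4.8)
The plan is to follow the same copy-and-recombine template used in the proof of the preceding proposition, since $\operatorname{nc}(x_1\dots x_k,y)$ is again a condition comparing data across two distinct assignments $s,t$ — here relating $t(y)$ to the set $\set{s(x_1),\dots,s(x_k)}$ and then to $s(y)$ — whereas team semantics only lets us compare values inside a single assignment. First I would fix the reading of the copies: $\bar z=z_1\dots z_k$ stands for $s(\bar x)$, the fresh variable $w_1$ stands for $s(y)$, and $w_2$ stands for $t(y)$. The leading block $\all z_1\dots z_k\all w_1w_2$ populates the team with every combination of these copy values attached to every original assignment, so that each value-tuple $(\bar c,d_1,d_2)$ of $(\bar z,w_1,w_2)$ occurs; the \emph{genuine} ones — those arising from an actual pair $(s,t)$ — are exactly the tuples with $(\bar c,d_1)\in X(\bar x y)$ and $d_2\in X(y)$.

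Next I would argue that the existential flags $u_1,u_2,u_3$ force the outer disjunction to split by value-tuple rather than by individual assignment. Because each atom $\dep(\bar z\bar w,u_i)$ makes $u_i$ a function of the copy-values $(\bar z,w_1,w_2)$ alone, any two assignments sharing a value-tuple receive identical flags and hence cannot be separated by the three mutually exclusive patterns $u_1=u_2=u_3$, $u_1=u_2\neq u_3$, $u_1\neq u_2=u_3$. Thus the disjunction partitions the value-tuples into three classes. The exclusion atom $\bar z w_1\mid\bar x y$ in the first disjunct can hold only on a subteam whose $(\bar z,w_1)$-values are disjoint from $X(\bar x y)$, so that disjunct absorbs precisely the combinations in which $(\bar z,w_1)$ is \emph{not} a genuine $(\bar x,y)$-value; symmetrically, $w_2\mid y$ in the second disjunct absorbs the combinations in which $w_2$ is not a genuine $y$-value. (Here I would also note in passing that the exclusion atoms are themselves downward-closed and in NP, hence $\FO(\dep)$-definable, so $\psi$ genuinely lies in $\FO(\dep)$.) Consequently every genuine value-tuple is forced into the third disjunct.

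It then remains to check that the first-order condition of the third disjunct is exactly the $\operatorname{nc}$ implication for genuine tuples. Translating ``$t(y)\in\set{s(x_1),\dots,s(x_k)}\Rightarrow s(y)=t(y)$'' into the copy variables gives ``$w_1=w_2$ or $w_2\notin\set{z_1,\dots,z_k}$''. I would therefore expect the splittable form of this condition to be $w_2=w_1\lor\bland_{i=1}^{k}w_2\neq z_i$: the left disjunct collects the tuples with $d_2=d_1$, while the right one collects the tuples whose $w_2$ differs from \emph{all} of $z_1,\dots,z_k$, the negation of membership being a conjunction of inequalities. With this reading both directions are routine: for the ``$\Leftarrow$'' direction one chooses the flags and the inner split according to which case a genuine tuple falls under, and for ``$\Rightarrow$'' one reads the third-disjunct condition back off an arbitrary genuine tuple, i.e.\ an arbitrary pair $(s,t)$, recovering $\operatorname{nc}$.

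The main obstacle I expect is precisely this third-disjunct translation. The delicate point is that the negation of the set-membership $w_2\in\set{z_1,\dots,z_k}$ is the \emph{conjunction} $\bland_{i=1}^{k}w_2\neq z_i$, not a disjunction of inequalities; under the lax split semantics a disjunction $\blor_{i=1}^k w_2\neq z_i$ would only forbid tuples in which \emph{all} the $z_i$ coincide with $w_2$, yielding a strictly weaker condition than $\operatorname{nc}$. Getting this connective right — and simultaneously confirming that the flag-and-exclusion mechanism leaves exactly the genuine value-tuples, and no spurious ones that could dodge the condition, to the third disjunct — is where the real verification lies; the remaining bookkeeping parallels the preceding proposition.
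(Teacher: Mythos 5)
Your proposal is correct and follows exactly the route the paper itself takes: the paper's own ``proof'' of this proposition is a one-line remark that the argument is structurally identical to the preceding one, with $\bar z w_1$ serving as a copy of $\bar x y$ (simulating $s$) and $w_2$ as a copy of $y$ (simulating $t$), and your universal-recombination step, the flag mechanism forcing the split to be uniform on copy-value tuples, and the exclusion-atom absorption of non-genuine tuples are precisely the ingredients of that earlier argument. What you add --- and you are right about it --- is the observation that the proposition as printed is not literally correct: the inner condition of the third disjunct is first-order and hence flat, so $\blor_{i=1}^k w_2 \neq z_i$ holds pointwise as soon as $w_2$ differs from \emph{some} $z_i$; for $k \geq 2$ the printed formula therefore only excludes genuine tuples with $w_1 \neq w_2$ and $z_1 = \dots = z_k = w_2$, a strictly weaker condition than $\operatorname{nc}$. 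Concretely, for $k=2$ take the team over $(x_1,x_2,y)$ consisting of $s=(a,b,c)$ and $t=(a,b,a)$ with $a,b,c$ pairwise distinct: $\operatorname{nc}(x_1x_2,y)$ fails since $t(y)=a=s(x_1)$ but $s(y)=c\neq a$, yet the printed $\psi$ is satisfied, because every genuine copy-tuple $(z_1,z_2,w_1,w_2)$ here has $w_2=w_1$ or $w_2\neq z_i$ for some $i$. The intended negation of the membership $w_2\in\set{z_1,\dots,z_k}$ is the conjunction $\bland_{i=1}^k w_2\neq z_i$, exactly as you wrote; the paper's $\blor$ is evidently a slip carried over from the previous proposition, where the tuple-inequality $\bar z_1\neq\bar z_2$ correctly unrolls to a \emph{disjunction} of coordinatewise inequalities. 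With that one-symbol repair your argument goes through verbatim and coincides with the paper's intended proof.
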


  The proof is structurally identical to the one above; $\bar{z} w_1$
  serves as a copy of $\bar{x} y$ (to simulate $s$) and $w_2$ as another
  copy of $y$ (to simulate $t$).

\end{document}